\renewcommand{\paragraph}[1]{\vspace{0.1in}\noindent{\bf \boldmath #1}}
\author{Michael A. Bender\thanks{Stony Brook University. \href{mailto:bender@cs.stonybrook.edu}{\texttt{bender@cs.stonybrook.edu}}} \and Alex Conway\thanks{VMware Research Group. \href{mailto:aconway@vmware.com}{\texttt{aconway@vmware.com}}} \and Mart\'{\i}n Farach-Colton \thanks{Rutgers University. \href{mailto:martin@farach-colton.com}{\texttt{martin@farach-colton.com}}} \and William Kuszmaul \thanks{Massachusetts Institute of Technology. \href{mailto:kuszmaul@mit.edu}{\texttt{kuszmaul@mit.edu}}} \and Guido Tagliavini \thanks{Rutgers University. \href{mailto:guido.tag@rutgers.edu}{\texttt{guido.tag@rutgers.edu}}}}
\title{Tiny Pointers}
\date{}
\algrenewcommand\algorithmicrequire{\textbf{Description:}}
\algrenewcommand\algorithmicensure{\textbf{Postcondition:}}
\algnewcommand{\IIf}[1]{\State\algorithmicif\ #1\ \algorithmicthen}
\algnewcommand{\EndIIf}{\unskip\ \algorithmicend\ \algorithmicif}
\DeclareMathOperator{\E}{\mathbb{E}}
\renewcommand{\epsilon}{\varepsilon}
\newcommand{\defn}[1]{\textbf{\emph{#1}}}
\newcommand{\poly}{\operatorname{poly}}
\newcommand{\prob}[1]{\Pr\left[#1\right]}
\newcommand{\expect}[1]{\E\left[#1\right]}
\newcommand{\single}{\mbox{\textsc{Single}}\xspace}
\newcommand{\leftd}[1]{\mbox{\textsc{Left}}[#1]\xspace}
\newcommand{\iceberg}[1]{\mbox{\textsc{Iceberg}}[#1]\xspace}
\newcommand{\reftable}{dereference table\xspace}
\newcommand{\Reftables}{Dereference tables\xspace}
\newcommand{\refarray}{store\xspace}
\newcommand{\overflowarray}{overflow array\xspace}
\newcommand{\overflowarrays}{overflow arrays\xspace}
\newcommand{\balancetable}{load-balancing table\xspace}
\newcommand\numberthis{\addtocounter{equation}{1}\tag{\theequation}}
\newtheorem{thm}{Theorem}
\newtheorem{lem}{Lemma}
\newtheorem{prp}{Proposition}
\newtheorem{clm}{Claim}
\crefname{clm}{Claim}{Claims}
\theoremstyle{definition}
\crefname{lem}{lemma}{lemmas}
\Crefname{lem}{Lemma}{Lemmas}
\crefname{thm}{theorem}{theorems}
\Crefname{thm}{Theorem}{Theorems}
\newcommand{\punt}[1]{}
\newenvironment{tbox}{\begin{tcolorbox}[
		enlarge top by=5pt,
		enlarge bottom by=5pt,
		 breakable,
		 boxsep=0pt,
                  left=4pt,
                  right=4pt,
                  top=10pt,
                  arc=0pt,
                  boxrule=1pt,toprule=1pt,
                  colback=white
                  ]
	}
{\end{tcolorbox}}
\newcommand{\calB}{\mathcal{B}}
\newcommand{\calL}{\mathcal{L}}
\newcommand{\calO}{\mathcal{O}}
\newcommand{\Allocate}{\mbox{\textsc{Allocate}}}
\newcommand{\Dereference}{\mbox{\textsc{Dereference}}}
\newcommand{\Free}{\mbox{\textsc{Free}}}
\newcommand{\Insert}{\mbox{\textsc{Insert}}}
\newcommand{\Delete}{\mbox{\textsc{Delete}}}
\newcommand{\Query}{\mbox{\textsc{Query}}}
\newcommand{\h}{\mathrm{Bin}}
\newcommand{\relaxedpointers}{retrievers\xspace}
\begin{document}
\maketitle
\thispagestyle{empty}

\begin{abstract}
This paper introduces a new data-structural object that we call the tiny pointer. In many applications, traditional $\log n $-bit pointers can be replaced with $o (\log n )$-bit tiny pointers at the cost of only a constant-factor time overhead. We develop a comprehensive theory of tiny pointers, and give optimal constructions for both fixed-size tiny pointers (i.e., settings in which all of the tiny pointers must be the same size) and variable-size tiny pointers (i.e., settings in which the average tiny-pointer size must be small, but some tiny pointers can be larger). If a tiny pointer references an element in an array filled to load factor $1 - 1 / k$, then the optimal tiny-pointer size is $\Theta(\log \log \log n + \log k) $ bits in the fixed-size case, and $ \Theta (\log k) $ expected bits in the variable-size case. Our tiny-pointer constructions also require us to revisit several classic problems having to do with balls and bins; these results may be of independent interest.

Using tiny pointers, we revisit five classic data-structure problems. We show that:
\begin{itemize}[noitemsep,nolistsep]
\item A data structure storing $n$ $v$-bit values for $n$ keys with constant-time modifications/queries can be implemented to take space $nv + O(n\log^{(r)} n)$ bits, for any constant $r > 0$, as long as the user stores a tiny pointer of expected size $O(1)$ with each key---here, $\log^{(r)} n$ is the $r$-th iterated logarithm.
\item Any binary search tree can be made succinct with constant-factor time overhead, and can even be made to be within $O(n)$ bits of optimal if we allow for $O(\log^* n)$-time modifications---this holds even for rotation-based trees such as the splay tree and the red-black tree.
\item Any fixed-capacity key-value dictionary can be made stable (i.e., items do not move once inserted) with constant-time overhead and $1 + o(1)$ space overhead. 
\item Any key-value dictionary that requires uniform-size values can be made to support arbitrary-size values with constant-time overhead and with an additional space consumption of $\log^{(r)} n + O(\log j)$ bits per $j$-bit value for an arbitrary constant $r > 0$ of our choice.
\item Given an external-memory array $A$ of size $(1+\varepsilon)n$ containing a dynamic set of up to $n$ key-value pairs, it is possible to maintain an internal-memory stash of size $O(n\log \varepsilon^{-1})$ bits so that the location of any key-value pair in $A$ can be computed in constant time (and with no IOs).  
\end{itemize}
These are all well studied and classic problems, and in each case tiny pointers allow for us to take a natural space-inefficient solution that uses pointers and make it space-efficient for free.

\end{abstract}

\newpage 
\pagenumbering{arabic}


\section{Introduction}
\label{sec:intro}

How many bits does it take to store a pointer? If we know nothing
about the pointer except that it references an element in an array of
size $n$, then there is lower bound of $\log n$ bits. 

For many (and perhaps even most) uses of pointers, however, this information-theoretic lower bound does not apply. As we shall see in this paper, even a small amount of prior information about a pointer (e.g., a node's predecessor in a linked list) can be used to defeat the $\log n$ lower bound. 

This paper introduces a general-purpose tool, which we call the \defn{tiny pointer}, for compressing pointers. 
In settings where pointers are used, tiny pointers can often be used instead to eliminate almost all of the space overhead of pointers. 

\paragraph{What is a tiny pointer?}
Suppose $n$ or more users (i.e., Alice, Bob, etc.) are sharing an array $A$ of size $n$.
A user can request a location in $A$ via a function \textsc{Allocate$()$}, which returns a pointer $p$ to a location that is now reserved exclusively for that user, if there is an available location; the user can later relinquish the memory location by calling a function \textsc{Free$(p)$}. Each user promises only to allocate at most one memory location at a time.\footnote{A user $k$ can request more than one location by creating a unique label $\ell$ for each of their allocations.  In this case, we simply treat the ``user'' for the allocation as the concatenation $k \circ \ell$, so the user $k$ can have multiple allocations without violating the uniqueness requirement.}  For example, if Alice calls \textsc{Allocate$()$} to get a pointer $p$, she must call \textsc{Free$(p)$} before calling \textsc{Allocate$()$} again.

How large do the pointers $p$ need to be? The natural answer is that each pointer uses $\log n$ bits. However, the fact that each pointer has a distinct owner makes it possible to compress the pointers to $o(\log n)$ bits. A critical insight is that the same pointer $p$ can mean different things to different users, via the following scheme. A user $k$ can call \Allocate$(k)$ in order to get a tiny pointer $p$; they can dereference the tiny pointer $p$ by computing a function \Dereference$(k, p)$ whose value depends only on $k$, $p$, and random bits; and they can free a tiny pointer $p$ by calling a function \Free$(k, p)$. 

The reason that tiny pointers are not constrained by the information-theoretic lower bound of $\log n$ bits is that $k$ and $p$ \emph{together} encode the allocated location, rather than $p$ alone. Thus this scheme provides a mechanism for how to use information already available about a pointer (namely, who ``owns'' the pointer) to compress the pointer to size $o(\log n)$ bits.

We refer to the algorithms for the functions \Allocate$(k)$/\Dereference$(k, p)$/\Free$(k, p)$, along with the array $A$ and any associated metadata $M$, as a \defn{dereference table}. We will often refer to the users (i.e., the owners of tiny pointers) as \defn{keys} and to the data stored at the allocated locations pointed at by the tiny pointers as \defn{values}. 
A dereference table that stores $b$-bit values in an array of $nb$ bits (and using $O(n)$ bits of metadata) is said to support \defn{load factor} $1 - \delta$ if the table is capable of storing $(1 - \delta)n$ values at a time. 

An ideal dereference table would simultaneously support a load factor with $\delta = o(1)$, tiny-pointer sizes of $o(\log n)$, and constant-time operations. As we shall discuss shortly, we prove a tradeoff curve between the best achievable load factor $1 - \delta$ and the best achievable the tiny-pointer size $s$. Constructing optimal dereference tables on this tradeoff curve is one of the central questions of this paper.

\paragraph{Using tiny pointers to get tiny data structures.}
In addition to constructing dereference tables with tiny pointers, we show that such dereference tables can be used to obtain improved solutions for a number of classic problems:
\begin{itemize}[noitemsep,nolistsep]
\item A data structure storing $n$ $v$-bit values for $n$ keys with constant-time modifications and queries can be implemented to take space $nv + O(n\log^{(r)} n)$ bits, for any constant $r > 0$, as long as the user stores a tiny pointer of expected size $O(1)$ with each key---here, $\log^{(r)} n$ is the $r$-th iterated logarithm.\footnote{That is, $\log^{(1)} n := \log n$  and $\log^{(i+1)}n := \log\log^{(i)} n$.}
\item Any binary search tree storing $n$ sortable keys in $n$ nodes can be made succinct with constant time overhead, and can be made within $O(n)$ bits of optimal with $O(\log^* n)$-time modifications. This holds even for rotation-based trees such as the splay tree, which is conjectured to be dynamically optimal. 

\item Any fixed-capacity key-value dictionary storing $v$-bit values can be made stable (i.e., items do not move once inserted) with constant time overhead an additive $O(\log v)$-bit space overhead per value.
\item Any key-value dictionary that requires uniform-size values can be made to support arbitrary-size values with constant time overhead and with an additional space consumption of $\log^{(r)} n + O(\log j)$ bits per $j$-bit value, where $r > 0$ is an arbitrary constant.
\item Given an external-memory array $A$ of size $(1+\varepsilon)n$ containing a dynamic set of up to $n$ key-value pairs, it is possible to maintain an internal-memory stash of size $O(n\log \varepsilon^{-1})$ bits so that the location of any key-value pair in $A$ can be computed in constant time (and with no IOs).
\end{itemize}
What unifies these problems is that each is easy to solve space-inefficiently with pointers, and the difficulty in solving them space-efficiently stems from the challenge of eliminating the pointer overhead. 

A theme throughout our uses of tiny pointers is the importance of having access to the full tradeoff curve of optimal tiny-pointer constructions. This is because of the need to balance two types of space overheads: that of storing the tiny pointers themselves, and that of storing the dereference table. The former is determined by tiny-pointer size and the latter is determined by load factor.

\paragraph{This paper.}
In this paper, we first develop a
comprehensive theory of tiny pointers. We consider both \defn{fixed-size
tiny pointers} (where every tiny pointer is of bounded size) and \defn{variable-size tiny pointers} (where every tiny pointer is of bounded expected size).
For both types of tiny pointers we determine the optimal tradeoff curve between load factor and tiny-pointer size in dereference tables.  We then go on to  present the five applications of tiny pointers outlined above.
As an ancillary result, we also reinterpret our tiny-pointer constructions as balls-and-bins results. In doing so, we improve on the known bounds for dynamic load balancing in some important parameter regimes.

\subsection{Results: Constructing Optimal Tiny Pointers}
In Sections~\ref{sec:fixed}, \ref{sec:variable}, and~\ref{sec:lower} we develop tight asymptotic bounds for the best achievable tradeoff curve between tiny-pointer size $s$ and the dereference-table load factor $1 - \delta$. 

\paragraph{Optimal tradeoffs for fixed and variable size tiny pointers.}
For fixed-size tiny pointers, we show that for any load factor $1 - \delta \not\in o(1)$, there is a lower bound of $\Omega(\log\log\log n) $ on the tiny-pointer size $s$. On the other hand, parameterizing by $\delta$, we show that it is possible to achieve a fixed tiny-pointer size $ s = O(\log\log\log n + \log \delta ^ { -1 })$, and we give a lower bound showing that this tradeoff curve is tight. 

We show that the $\log \log \log n$ barrier can be eliminated by instead using variable-size tiny pointers. We prove that for any load factor $ 1 -\delta $, it is possible to achieve average tiny-pointer size $ s = O (1 + \log\delta ^ { -1 }) $, and again we prove that this tradeoff curve is tight for all $\delta $. 

For variable-size tiny pointers, our construction offers a remarkably strong concentration bound on each tiny pointer's size: if the expected size is $k$, then the probability of any given allocation returning a tiny pointer of size greater than $k + j$ for any $j > 0$ is \emph{doubly exponentially small} in $j$.

All of our dereference-table constructions guarantee constant-time operations with high probability, that is, with probability $1 - 1/\poly n$. Thus, tiny pointers can be integrated into data structures while incurring only a constant-factor time overhead.

\paragraph{Relationship to balls and bins.}
In Section~\ref{sec:balls-n-bins}, we reinterpret our tiny-pointer results as balls-to-bins results. Notably, we are able to apply our techniques to the dynamic load-balancing problem, where there are $ n $ bins and up to $m=nh$ balls present at a time: for $h \ge 1$, we give a balls-and-bins scheme with $d + 1$ hash functions that achieves maximum load $h + O(\sqrt{h \log (hd)}) + \frac{\log \log n}{d\log \phi_d}$, which significantly improves the state of the art \cite{DBLP:conf/soda/Woelfel06, Vocking03} when $hd = o(\log n)$. 
 
To understand the relationship between dereference tables and balls-and-bins schemes, think of keys as balls that must be assigned to distinct bins. Each ball $x$ has some probe sequence $h_1(x), h_2 (x), \ldots \in [n]$ of bins where it can be placed. Supporting tiny pointers of size $O(s)$ is equivalent to maintaining a dynamic balls-to-bins assignment such that each ball $x$ is in some bin $h_i(x)$ satisfying $i \le  2 ^{O(s)}$.

What makes this balls-to-bins problem interesting is that the same ball can be inserted, removed, and subsequently reinserted over time. The first time that a ball is inserted, its probe sequence $h_1(x), h_2 (x), \ldots$ is independent of the dereference table's state. But if the ball is removed and then later reinserted, then this is no longer the case: the state of the dereference table has now been affected by (and is partially a function of) the probe sequence. The result is that, in this fully dynamic setting, even the behaviors of very simple balls-to-bins schemes (e.g., random probing \cite{Larson4} or linear probing \cite{KnuthVol3,sandersstability}) have resisted theoretical analysis.\footnote{Work in this setting typically treats linear probing and random probing as techniques for building an open-addressed hash table. In the setting where balls cannot be moved after being placed (or equivalently, where hash-table deletions are implemented with tombstones), the only known bound on either random probing or linear probing is due to Larson \cite{Larson4}, who analyzed random probing with random insertions/deletions.}

A key insight in constructing small tiny pointers is that, by designing the probe sequence of each ``ball'' to have a certain careful structure, we can achieve small probe complexity (and thus small tiny pointers) for an arbitrary sequence of ball insertions and removals. The same techniques are also what allow for us to revisit other related problems such as dynamic load-balancing in bins with unbounded capacities.

\subsection{Results: Five Applications to Data Structures}\label{sec:introapp}

We now describe our five applications of tiny pointers in more detail. The first application is to the classic data-structural problem of storing a dynamic set of values associated with keys. The next three applications are each black-box transformations in which we show how to remove space inefficiency from large classes of data structures. And the final application is a new data structure for a classic problem in external-memory storage.

\paragraph{Overcoming the $\Omega(\log \log n)$-bit lower bound for the cost of data retrieval.}
Our first application revisits the classic \defn{data-retrieval problem} \cite{demaine2005dynamic,alstrup2001optimal, dietzfelbinger2008succinct, static2}, in which a data structure must store a $v$-bit value for each of the $k$-bit keys in some set $S$, and must answer queries that retrieve the value associated with a given key.\footnote{Note that queries are required to be for a key $x \in S$---the data structure is allowed to return an arbitrary value if $x \not\in S$.}  In the static case, where the keys/values are given up front, it is possible to solve the retrieval problem with $O(1)$-time queries using $nv + O(\log n)$ bits of space~\cite{dietzfelbinger2008succinct, static2}; but in the dynamic case where keys/values are inserted/deleted over time, and there are up to $n$ keys/value pairs present at a time (with keys taken from some large polynomial-size universe), it is known that any solution to the retrieval problem must use a lower bound of $nv + \Omega(n \log \log n)$ bits of space, even if super-constant-time operations are allowed \cite{demaine2005dynamic, alstrup2001optimal}. This means that the number of metadata bits per value is $\Omega(\log \log n)$ on average, even if the values are of size $v = o(\log \log n)$. 

We show that, by just slightly modifying the specification of the retrieval problem, we can completely dissolve the $\Omega(\log \log n)$-wasted-bits-per-item lower bound. Suppose, in particular, that whenever the user inserts a key/value pair $(x, y)$, they are given back a small \emph{hint} $h$ that they are responsible for storing. (We will guarantee that the hint has constant expected size.) In the future, when the user wishes to recover the value $y$ for $x$, they present both the key $x$ and the hint $h$ to the retrieval data structure. We call this the \defn{relaxed retrieval problem} and we refer to the hints as \defn{tiny \relaxedpointers}.

The relaxed retrieval problem can also be viewed as a relaxation of the tiny-pointer problem: the tiny retriever $h$ is analogous to a tiny pointer, except that the pair $(x, h)$ does not have to fully encode the position of $y$---instead, the relaxed-retrieval data structure can make use of not just $x$ and $h$, but can also make use of a small auxiliary data structure whose purpose is to help recover $y$. 

Given that we have already stated tight bounds for tiny pointers, it is tempting to assume that the same bounds should hold for tiny \relaxedpointers. We find that this is not so. We show how to construct tiny retrievers of expected size $O(1)$, while supporting queries in constant time (with high probability), and allowing for the following tradeoff curve: using time $\Theta(r)$ for insertions/deletions, the size of the data structure becomes $nv + O(n(1 + \log^{(r)} n))$ bits. So, with constant-time operations, we can achieve size, say, $nv + O(n \log\log\log\log\log n)$, and with $O(\log^* n)$-time operations, we can achieve size $nv + O(n)$. Moreover, in the special case where the value length $v$ is sub-logarithmic, satisfying $v \le \frac{\log n}{\log^{(r)} n}$, the space consumption reduces to $nv + O(n)$ bits, even for constant $r$. 

Remarkably, our construction for tiny \relaxedpointers is \emph{itself a direct application of tiny pointers}---in fact, tiny \relaxedpointers are simply variable-length tiny pointers of $O(1)$ expected size. This is because the ability to construct $O(1)$-length tiny pointers into an array with $\Theta(n)$ entries ends up allowing for us to reduce the relaxed retrieval problem to the dictionary problem, for which highly space-efficient solutions are known \cite{supersuccinct}. 

We remark that the distinction between tiny pointers and tiny \relaxedpointers ends up being significant in several of our applications below. In some cases, tiny \relaxedpointers offer a path to remarkable (and unexpected) space efficiency, while in other cases, the smooth tradeoff curve and pointer-like behavior offered by tiny pointers makes them a better fit. The advantage of tiny \relaxedpointers is that they offer a steep tradeoff between time and space; the advantage of tiny pointers is that they offer indirection-less reference to elements, as well as a flexible tradeoff between different \emph{types} of space consumption (pointer size and load factor). 

\paragraph{Succinct rotation-based binary search trees.}
To describe our second application, we first take a digression into the world of succinct binary trees. Since there are at most $4^n$ ordered binary
trees on $n$ nodes, the pointer structure of a
binary tree can be encoded in $O(n)$ bits. This observation has led to
a great deal of work on optimal (and near-optimal) encodings of binary
trees~\cite{MuRaSt01Succinct,RaRa03Succinct,FaMu11Succinct,NaSa14Succinct,CoNa16Succinct,FrGr03OptimalTrees,Pa08Succincter,DaRaSa17Succinct}. Apart from navigation, state-of-the-art trees also support a wide
variety of query operations (e.g., subtree size~\cite{MuRaSt01Succinct, RaRa03Succinct,FaMu11Succinct,NaSa14Succinct,CoNa16Succinct}, depth~\cite{NaSa14Succinct,CoNa16Succinct}, lowest-common ancestor~\cite{NaSa14Succinct,CoNa16Succinct}, level ancestor~\cite{NaSa14Succinct,CoNa16Succinct}, etc.), while also supporting basic dynamism (e.g., inserting/removing leaves~\cite{MuRaSt01Succinct, RaRa03Succinct,FaMu11Succinct,NaSa14Succinct,CoNa16Succinct}, inserting a node in the middle of an edge~\cite{MuRaSt01Succinct, RaRa03Succinct,FaMu11Succinct,NaSa14Succinct,CoNa16Succinct}, compacting a path of length two~\cite{MuRaSt01Succinct, RaRa03Succinct,FaMu11Succinct,NaSa14Succinct,CoNa16Succinct}, etc.).

One natural form of dynamic operation has proven
elusive, however: the known succinct binary trees do not efficiently support rotations.
The lack of support for rotations is especially important for binary
\emph{search trees}, which store a set of $n$ sortable keys in $n$ nodes. Almost all dynamic balanced binary search trees
(e.g., AVL trees~\cite{AdelsonVelskiiMi62}, red-black trees~\cite{GuibasSe78}, splay trees~\cite{SleatorTa85-splay}, treaps~\cite{AragonSe89,SeidelAr96}, etc.) rely on
rotations when modifying the tree. None of these tree
structures can be encoded with the known succinct-tree
techniques.

We give a randomized black-box approach for transforming dynamic binary search trees into succinct data structures. If there are $ n $ keys in the succinct search tree, each of which is $k$ bits long, then the size of the succinct search tree will be $nk + O( n\log^{(r)} n )$ bits. The transformation induces only a constant-factor time overhead on query operations, and only an $O(r)$-factor time overhead on tree modifications. So, for example, if we set $r = O(\log^* n)$, then edge traversals take time $O(1)$, edge insertions/deletions take time $O(\log^* n)$, and the tree structure is encoded using $O(n)$. In contrast, the previous state of the art \cite{NaSa14Succinct} for implementing rotations in space-efficient binary search trees also encoded the tree structure in $O(n)$ bits (actually, $2n + o(n)$ bits) but required $\omega(\log n)$ time to implement a single rotation.

When $r$ is set to be $O(1)$, the fact that running times are
preserved means that other properties, such as dynamic optimality, are
as well. For example, if the splay tree \cite{SleatorTa85-splay} is dynamically optimal (as the
widely believed Dynamic-Optimality Conjecture~\cite{SleatorTa85-splay} posits), then so is the
succinct splay tree.

\paragraph{Space-efficient stable dictionaries.}
Our third application is a black-box approach for transforming any fixed-capacity key-value dictionary into a  stable dictionary with the same operation set and with only a constant-factor time overhead.
If the original dictionary stores $v$-bit values, then the new stable dictionary also stores $ v $-bit values, and uses $O\left(\log v \right) $ extra bits of space per value than does the original data structure. 

Formally, a key-value dictionary (e.g., a binary search tree, hash table, etc.) is \defn{stable} if whenever a key-value pair is inserted, the position in which the value is stored never changes. (This property is sometimes also referred to as referential integrity \cite{sandersstability} or value stability \cite{iceberg}.)
Stability ensures that users can maintain pointers into a data structure without those pointers becoming invalidated by changes to the data structure \cite{originalstability, sandersstability}. Stability is a strict requirement in many library data structures \cite{cplusplus1, cplusplus2, cplusplus3, cplusplus4, cplusplus5, cplusplus6, cplusplus7, cplusplus8} (and it is a core reason why high-performance languages such as C++ use chained hashing \cite{cplusplus1, cplusplus2}, which is stable, instead of more space-efficient alternatives, such as linear probing~\cite{Knuth63, peterson57} or cuckoo hashing~\cite{Pagh:CuckooHash,FotakisPaSa05,DietzfelbingerWe07}).

Empirical research on achieving stability in space-efficient hash tables dates back to the 1980s~\cite{originalstability, sandersstability} (see also discussion in Knuth's Volume~3~\cite{KnuthVol3}) and the resulting techniques have been built into widely-used hash tables released by Google~\cite{abseil} and Facebook~\cite{F14}. On the theoretical side, if a data structure is storing $k$-bit keys and $v$-bit values, where $k, v = O(\log n)$, it is known how to achieve stability at the cost of an extra $\Theta(\log \log n)$ bits of space per value \cite{demaine2005dynamic}, but it is not known whether $\Omega(\log \log n)$ bits per value are \emph{necessary}.\footnote{Interestingly, there are several specific approaches for which $\Omega(\log \log n)$ bits per value are known to be necessary, for example if stability is achieved via perfect hashing (see Theorem 2 of \cite{demaine2005dynamic}).} Our result shows that it is not---stability can be achieved with $O(\log v)$ extra bits per value. This is especially noteworthy in cases where the value-size $v$ is small\footnote{One especially remarkable consequence is the following: if we wish to store $O(1)$ control bits associated with each key in a data structure, and we wish for the positions of those bits to be stable so that a third party who does not have access to the data structure can still access/modify the control bits, then we can accomplish this with only $O(1)$ extra bits of space per item.}. Our result applies to arbitrary fixed-capacity dictionaries, including, for example, the succinct splay tree constructed above.

\paragraph{Space-efficient dictionaries with variable-size values.}
Our fourth application is a black-box approach for transforming any key-value dictionary (designed to store fixed-size values) into a dictionary that can store different-sized values for different keys. The resulting data structure induces a constant-factor time overhead and offers the following guarantee on space efficiency. Let $\log^{(r)} n $ be the $r$-th iterated logarithm and set $r$ to be a positive constant of our choice. The new data structure incurs an additive space overhead of only $O(\log^{(r)} n + \log |x|)$ bits for each value $x$.  (Interestingly, the iterated logarithm $\log^{(r)} n$ in this application comes from an entirely different source than in our previous applications.)

The ability to store variable-length values also yields a simple solution to the \defn{multi-set problem}, which is the problem of how to design a space-efficient constant-time hash table that stores multi-sets of keys (rather than just sets). The multi-set problem was first posed as an open question by Arbitman et al.~\cite{arbitman2010backyard}, who gave a succinct constant-time hash table capable of storing sets but not multi-sets. A series of subsequent works gave solutions to the multi-set problem, first in the case of random multi-sets~\cite{bercea2019fully}, and then very recently for arbitrary multi-sets~\cite{Bercea2020Dictionary}.
The known solutions come with a drawback, however: the bound on space is the same for duplicate keys as it is for non-duplicate keys. So, if there are $m_i$ copies of some key, then they are permitted to take $m_i$ times as much space as a single copy would, even though, in principle, $m_i - 1$ of the copies could be encoded using an $\log m_i$-bit counter. Our transformation gives a simple alternative solution that avoids this drawback and that can even be applied directly to the original hash table of Arbitman et al.~\cite{Arbitman09Deamortized}: by storing the multiplicity of each key as a (variable-length) value, one can support arbitrary multisets at an additional space cost of only $\log^{(r)} n + \log m_i + O(\log \log m_i)$ bits per key, where $m_i$ is the multiplicity of the key and $r$ is a positive constant of our choice; this is remarkably space efficient considering the fact that $\log m_i$ bits are needed just to store the multiplicity. A nice feature of our solution is that it also applies directly to other dictionaries such as, for example, the succinct splay tree discussed earlier in the section.

\paragraph{An optimal internal-memory stash.}
Our final application of tiny pointers revisits one of the oldest problems in external-memory data structures: the problem of maintaining a small internal-memory stash that allows for one to directly locate where elements reside in a large external-memory array. 

In more detail, the problem can be described as follows~\cite{Larson1}. We are given an (initially blank) external-memory array with $(1 + \epsilon)n$ slots, for some parameters $\epsilon, n$. We must maintain a dynamically changing set $S$ of key-value pairs (where keys are distinct) in the array, such that each time a key-value pair $(x, y)$ is inserted into $S$, the pair $(x, y)$ is assigned some permanent position where it resides in the external-memory array. We must then also maintain a small internal-memory data structure $X$, known as a \defn{stash}, that can be used to recover, for each key $x$, precisely where its key-value pair $(x, y)$ is stored in the external-memory array. A stash enables queries to be performed in a \emph{single} access to external memory.

Work on designing space-efficient and time-efficient stashes dates back to the late 1980s \cite{Larson2, Larson3, Larson1}, and is also closely related to the problem of designing space-efficient page tables in operating systems \cite{IntelManual,AMDManual,BenderBhCo21}. The best-known theoretical results are due to Gonnet and Larson \cite{Larson1}, who give a stash that uses only $O(n \log \epsilon^{-1})$ bits of space. A consequence is that, if $\epsilon = \Theta(1)$, the stash uses only $O(n)$ bits. 

Gonnet and Larson's result comes with several drawbacks, however~\cite{Larson1}. First, the stash only offers provable guarantees in the setting where insertions/deletions to $S$ are random; in the case where $S$ is modified by an arbitrary sequence of insertions/deletions/queries, the problem of designing a space-efficient stash remains open. Second, the internal-memory operations on the stash of \cite{Larson1} are not constant-time in the RAM model (or even constant expected time, when $\epsilon = o(1)$).
 
By combining tiny pointers with modern techniques for constructing space-efficient filters, we show that it is possible to construct a stash of size $O(n \log \epsilon^{-1})$ bits that supports constant-time operations in the RAM model (not just in expectation, but even with high probability) and that supports \emph{arbitrary} sequences of insertions/deletions/queries.


\section{Preliminaries}

\paragraph{Operations.}
A \reftable with $q$-bit-values is a data structure that supports the following operations:
\begin{itemize}[noitemsep]
\item $\textsc{Create}(n, q)$: The procedure creates a new dereference table, and returns a pointer to an array with $n$ slots, each of size $q$ bits. We call this array the \defn{\refarray}. 

\item $\textsc{Allocate}(k)$: Given a key $k$, the procedure allocates a slot in the \refarray to $k$, and returns a bit string $p$, which we call a \defn{tiny pointer}.

\item $\textsc{Dereference}(k, p)$: Given a key $k$ and a tiny pointer $p$, the procedure returns the index of the slot allocated to $ k $ in the \refarray. If $p$ is not a valid tiny pointer for $k$ (i.e., $p$ was not returned by a call to $\textsc{Allocate}(k)$), then the procedure may return an arbitrary index in the \refarray.

\item $\textsc{Free}(k, p)$: Given a key $k$ and a tiny pointer $p$, the procedure deallocates slot $\textsc{Dereference}(k, p)$ from $k$. The user is only permitted to call this function on pairs $(k, p)$ where $p$ is a valid tiny pointer for $k$ (i.e., $p$ was returned by the most recent call to $\textsc{Allocate}(k)$).
\end{itemize}

We say a key $k$ is \defn{present} if it has been allocated more recently than it has been freed; in this case the tiny pointer $p$ returned by the most recent call to $\textsc{Allocate}(k)$ is said to be $k$'s tiny pointer. The user is only permitted to allocate at most one tiny pointer $p$ to each key $k$. That is, each time that $\textsc{Allocate}(k)$ is called to obtain some tiny pointer $p$, the function $\textsc{Free}(k, p)$ must be called before $\textsc{Allocate}(k)$ can be called again.

We say that slot $i$ in the \refarray is \defn{occupied} if there is a present key $k$ with tiny pointer $p$ such that $\textsc{Dereference}(k, p) = i$, and otherwise we say it is \defn{free}. We typically refer to the parameter $n$ (i.e., the number of slots in the \refarray) as the table's \defn{size} or \defn{capacity}.\

\paragraph{Guarantees.}
\Reftables provide the following guarantees: 
\begin{itemize}[noitemsep]
    \item For any two present keys $k_1 \neq k_2$ with tiny pointers $p_1$ and $p_2$, respectively, $\textsc{Dereference}(k_1, p_1) \neq \textsc{Dereference}(k_2, p_2)$.
    \item $\textsc{Dereference}(k, p)$ only depends on $k$, $p$, random bits, and the parameter $n$.
\end{itemize}

The second property ensures that the act of dereferencing a tiny pointer is similar to the act of dereferencing a standard pointer; in both cases, one does not need to access the data structure being pointed into in order to perform the dereference. This ends up being important for several of our applications later. In particular, it ensures that in external-memory applications, each dereference incurs only a single I/O; and it ensures that in data-structure applications, the locations pointed at by tiny pointers are stable (i.e., once a tiny pointer $p$ is allocated to a key $k$, the location that is being pointed at does not change).

\paragraph{Metadata information.}
The \reftable may store metadata in order to perform updates (allocations and frees) efficiently. Metadata can either be stored as part of the store, or in an auxiliary data structure that is permitted to consume up to $O(n)$ bits. In other words, the dereference table is allowed to use $O(n)$ bits (i.e., $O(1)$ bits of overhead per slot) of metadata for ``free'', without that counting towards the space consumption of the store, but any additional metadata must count towards the space consumption of the store. Note that the dereference table is not allowed to store metadata in any slot of the store that is currently allocated.

\paragraph{Failure probability.}
We will permit allocations to have a small failure probability. That is, each allocation is permitted to fail with probability $1/\poly(n)$,
in which case the allocation simply returns a failure message rather than a tiny pointer. In general, if a random event occurs with probability
$1 - 1/\poly(n)$, we say that it occurs \defn{with high probability (w.h.p.)}.

We remark that, when analyzing dereference tables, we shall always assume that the sequence of allocations, frees, and dereferences are determined by an oblivious adversary (i.e., the sequence is determined ahead of time, rather than adapting to the behavior of the dereference table). One consequence of this is that, if a given allocation fails, the only effect on the operation sequence is that the corresponding call to \textsc{free} is removed. 

\paragraph{Load factor.}
Any implementation of a dereference table must also specify an additional parameter $\delta \in [0, 1]$ dictating how full the table is allowed to be. This means that the dereference table can support up to $(1- \delta)n$ allocations at a time---the quantity $1 - \delta$ is referred to as the table's  \defn{load factor}. If the $\textsc{Allocate}$ function is called when there are already $(1 - \delta)n$ allocations performed, then the dereference table is permitted to fail the allocation.\footnote{Note that, even though a dereference table only guarantees the ability to store up to $(1 - \delta)n$ allocations at a time, we still use the terms ``size'' and ``capacity'' of a dereference table to refer to $n$, rather than $ (1 -\delta) n $, since $n$ represents the total number of $q$-bit entries in the store.}

Since dereference tables can use up to $O(n)$ space for metadata, the total amount of space consumed by a dereference table may be as large as
$nq + O(n) = (1 - \delta)nq + \delta n q + O(n)$. The first term $(1 - \delta)nq$ is space that allocations can make use of, and the other terms $\delta n q + O(n)$ are wasted space. Note that there is no point in considering $\delta \ll 1/q$, since this just makes it so that they wasted space is dominated by metadata. Thus, when constructing a reference table with some load factor $1 - \delta$, we shall always implicitly assume that $q \ge \Omega(\delta^{-1})$. 

\paragraph{Hashing and independence.}
Our dereference-table constructions will all make use of hash functions. For simplicity, we shall treat hash functions in this paper as being uniform and fully independent. This assumption is without loss of generality since there are already known families of hash functions \cite{pagh2008uniform, dietzfelbinger2003almost} that simulate $n$-independence with constant-time evaluation and $O(n)$ random bits, and there are already well understood techniques \cite{liu2020succinct, arbitman2010backyard} for applying these families to data structures that require $\poly n$-independence\footnote{The basic idea is to simply replace the data structure of capacity $n$ with $n^{1 - \epsilon}$ data structures of capacity $n^{\epsilon}$. Each element $x$ in the full data structure gets hashed at random to one of the $n^{1 - \epsilon}$ data structures, each of which only requires $\poly(n^\epsilon) = o(n)$ independence.}. These known techniques can easily be applied directly to all of our data structures; the only caveat is that the families of hash functions being used  \cite{pagh2008uniform, dietzfelbinger2003almost}  introduce their own additional $1/\poly(n)$ failure probability to the data structure. So, even if a data structure offers sub-polynomial failure probability under the assumption of fully random hash functions, if we wish to use an explicit family of hash functions, then we must allow for a $1/\poly(n)$ failure probability.

\section{Upper Bound for Fixed-Size Pointers}
\label{sec:fixed}

In this section, we give optimal constructions for fixed-size tiny pointers. We prove the following theorem:
\begin{thm}
\label{thm:upper-bound-fixed}
For every $\delta \in (0, 1)$ there is a \reftable that (i) succeeds on each allocation w.h.p., (ii) has load factor at least $1 - \delta$, (iii) has constant-time updates w.h.p., and (iv) has tiny pointers of size $O(\log \log \log n + \log \delta^{-1})$.
\end{thm}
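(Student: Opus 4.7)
The plan is to realize each tiny pointer as an index into a short probe sequence of store slots, so that $\Allocate$ places the key at an available slot in the sequence and $\Dereference$ returns the slot at the encoded index. We aim for a probe sequence of length $2^s$ with $s = O(\log\log\log n + \log \delta^{-1})$, i.e., length $\Theta(\delta^{-1}\log\log n)$.

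Concretely, partition the store into $N = n/B$ bins of size $B := c\,\delta^{-1}\log\log n$, for a sufficiently large constant $c$. Each key $k$ is hashed to two bins $h_1(k), h_2(k) \in [N]$. On $\Allocate(k)$, compute the current loads of both candidate bins and place $k$ at the lowest-indexed free slot of the less-loaded bin (tiebreaking canonically), returning the tiny pointer $(b, j)$ where $b \in \{1, 2\}$ identifies the chosen bin and $j \in [B]$ is the slot index within it. The tiny pointer occupies $1 + \lceil \log B \rceil = O(\log\log\log n + \log \delta^{-1})$ bits as required; $\Dereference$ and $\Free$ are immediate given $(b,j)$. Per-bin bookkeeping (a free-slot bitmap and load counter) uses $O(B)$ bits, so metadata totals $O(n)$ bits.

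The crux is showing that w.h.p.\ over the hashes, for every oblivious sequence of $\poly(n)$ operations that keeps the number of present keys at most $(1-\delta)n$, no bin ever reaches load $B$, so every $\Allocate$ succeeds. The average bin load is at most $(1-\delta)B$, leaving slack $\delta B = c\log\log n$. I would invoke the fully-dynamic two-choice max-load bound: by a witness-tree argument, the max load exceeds the average by $O(\log \log N) = O(\log\log n)$ w.h.p., which fits within the slack once $c$ is large enough. The running-time bound is constant w.h.p.\ because each $\Allocate$/$\Free$ touches $O(1)$ bins, and within-bin free-slot lookup is a constant-time word-parallel bitmap operation when $B$ fits in a machine word; for smaller $\delta$ where $B$ exceeds the machine word size, the same tiny-pointer construction can be recursively applied inside each bin, preserving both the pointer-size bound and $O(1)$ update time.

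The main obstacle is the rigorous dynamic analysis: because keys may be freed and re-allocated many times while re-using the same hash values $h_1(k), h_2(k)$, the state of each bin depends on the entire global history and the usual balls-and-bins independence arguments do not directly apply. I would leverage the oblivious-adversary assumption to fix the operation sequence in advance, so the outcome of the $t$-th operation is a deterministic function of the hash values of the keys involved at or before time $t$; a coupling to a single-shot balls-and-bins experiment then lets us apply the static witness-tree bound and close the proof.
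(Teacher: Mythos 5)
Your construction is the natural ``two choices on heavy bins'' idea, but the key step---showing no bin of size $B = \Theta(\delta^{-1}\log\log n)$ ever overflows---rests on a fully-dynamic two-choice max-load bound that does not exist, and the coupling you sketch to recover it is exactly what the paper explains cannot work. The difficulty is not just technical: when a key is freed and later re-allocated with the \emph{same} two hash values $h_1(k), h_2(k)$, the state of those bins at re-insertion time is already a function of $h_1(k), h_2(k)$ through the earlier history, so the witness-tree analysis (which requires the leaf ball's choices to be independent of the current bin loads) breaks. The best bound that has been proved in the fully-dynamic setting is Woelfel's $O(1+hd) + \frac{\log\log n}{d\log\phi_d}$ on the max load, where $h$ is the average load per bin; note the $O(hd)$ term is \emph{multiplicative} in $h$. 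In your setting $h = (1-\delta)B$, so Woelfel's bound allows the max load to be a constant factor larger than $B$---and scaling $c$ up does not help, since the upper bound scales with it. The paper even proves a sharper dynamic bound (Theorem~\ref{thm:balls}, $h + O(\sqrt{h\log(hd)}) + \frac{\log\log n}{d\log\phi_d}$), but it uses a modified rule, and even that bound's $O(\sqrt{h\log h}) = \Theta(\sqrt{\delta^{-1}}\log\log n)$ excess exceeds your slack $\delta B = c\log\log n$ once $\delta = o(1)$.

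The paper's construction avoids this entirely. It uses a \emph{one}-choice primary table with bins of size $\Theta(\delta^{-2}\log\delta^{-1})$, accepts that a $\Theta(\delta^2)$-fraction of allocations will find their bin full, and then proves (Lemma~\ref{lem:iceberguse}, via Talagrand's inequality rather than witness trees) that the number of overflowing items alive at any moment is $O(\delta^2 n)$ even in the fully dynamic setting---one-choice is history-independent, which is what makes this tractable. The overflowing items go into a secondary two-choice table (Lemma~\ref{lem:twochoices}) whose load factor is deliberately set to $\Theta(1/\log\log n)$, so the average bin load is $O(1)$; at that sparse regime Woelfel's dynamic two-choice bound really does give $O(\log\log n)$ max load, which matches the $O(\log\log n)$ bin size. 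The two-layer split is precisely what lets each layer use a load regime in which the relevant dynamic balls-and-bins analysis is actually available. Your proposal collapses both layers into one and inherits the unsolved heavy-load dynamic two-choice problem. (As a smaller aside, your fallback ``recursively apply the construction inside each bin when $B$ exceeds a word'' would compose pointer-length overheads rather than preserve them; the paper instead falls back to a free-list representation, which it justifies by the assumption $q = \Omega(\delta^{-1})$.)
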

In particular, for $\delta = 1 / \log \log n$, we get tiny pointers of size $O(\log \log \log n)$. Thus, we can doubly-exponentially beat raw $\log n$-bit pointers, while still supporting a load factor of $1 - o(1)$.

The proof is the simplest of our tiny-pointer constructions, and makes use of two algorithmic building blocks.

\paragraph{The first building block: load-balancing tables.} 
A \balancetable is a simple type of \reftable that has a very specific internal representation, and that, unlike normal dereference tables, is permitted to fail on calls to $\textsc{Allocate}$ with a non-negligible probability. Roughly speaking, if a load-balancing table has load factor $1 - \delta$, then the load-balancing table is permitted to fail on a $\delta$-fraction of allocations.

Load-balancing tables are implemented as follows. If the store is of some size $m$, then we partition it into $m/b$ buckets of size $b = \Theta(\delta^{-2} \log \delta^{-1})$. To allocate a key $k$, we hash $k$ into one of the buckets, using a hash function $h$. If bucket $h(k)$ contains a free slot, then we allocate any free slot $i \in [b]$ within that bucket, and we return $i$ as the tiny pointer. Otherwise, all $b$ slots in the bucket are occupied, and the allocation fails. The function $\textsc{Dereference}(k, p)$ can then be implemented to simply return the $p$-th slot in bin $h(k)$. 

Load-balancing tables will serve as a building block in the dereference tables that we construct. The basic idea is that we can use a load-balancing table to handle all but a $\delta$-fraction of allocations, and the remaining allocations can be handled via some other mechanism. Thus, we will need the following lemma which bounds the total number of failed allocations at any given moment:

\begin{lem}
Consider a load-balancing table with size $m$ and load factor $1 - \delta$. Consider a sequence of allocations and frees such that no more than $(1 - \delta)m$ allocations are made at any given moment. If an allocation fails, and the allocation would have been freed at some time $t$, then we consider the allocation to be \defn{alive} up until that time $t$. At any given moment, the number of allocations that have failed and are still alive is $O(\delta m)$ with probability at least $1 - \exp(- \poly(\delta) m)$.
\label{lem:iceberguse}
\end{lem}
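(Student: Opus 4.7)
My plan is to fix the analysis moment $t$, bound $\E[|L_f(t)|]$ (where $L_f(t)$ is the set of failed-but-logically-alive allocations at $t$) via a per-allocation Chernoff calculation, and then lift this to the desired $\exp(-\poly(\delta)\,m)$ tail by concentrating a cleaner $2$-Lipschitz upper bound. Let $L_t$ be the logically-alive set at $t$, so $N_t := |L_t| \le (1-\delta)m$, and for each bucket $i$ and time $t'$ write $X_i(t') := |\{k \in L_{t'} : h(k) = i\}|$. The starting observation is that $k \in L_t$ can have failed only if bucket $h(k)$ was full at its allocation time $t_k^-$, which in particular forces $X_{h(k)}(t_k^-) \ge b$; letting $\tilde Z_k$ be the indicator of this weaker event, we get $|L_f(t)| \le \sum_{k \in L_t}\tilde Z_k$.

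For the expectation, I would condition on $h(k)$ and observe that $X_{h(k)}(t_k^-) - 1$ is a sum of at most $(1-\delta)m$ independent $\mathrm{Bernoulli}(b/m)$ variables, with mean at most $(1-\delta)b$. A standard Chernoff bound then yields $\Pr[\tilde Z_k = 1] \le \exp(-\Omega(\delta^2 b))$, and since $b = \Theta(\delta^{-2}\log\delta^{-1})$ with a sufficiently large hidden constant, this drops below $\delta^C$ for any constant $C$ we choose. Summing gives $\E[|L_f(t)|] \le N_t\delta^C$, which is comfortably $o(\delta m)$ once $C \ge 2$.

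To promote this to a high-probability statement, I would first show by induction on the event sequence that $f_i(t) \le (M_i(t) - b)_+$, where $f_i(t)$ counts the failed-alive allocations currently in bucket $i$ and $M_i(t) := \max_{t' \le t}X_i(t')$; the only nontrivial case is a failure event in bucket $i$, where $X_i$ and $f_i$ both jump by exactly $1$, and at that instant $X_i(t) = b + f_i(t)$, preserving the inequality. Summing yields $|L_f(t)| \le \sum_i (M_i(t) - b)_+$, and this upper bound is $2$-Lipschitz in every single hash value, since altering $h(k_0)$ shifts $X_i(t')$ by $\pm 1$ only during $k_0$'s lifetime and only for the two buckets that $k_0$ moves between, so exactly two of the $(M_i - b)_+$ terms change, each by at most $1$. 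The main obstacle I anticipate is handling the maximum over potentially arbitrary history cleanly inside McDiarmid's inequality; my plan is to combine the expectation bound on $\sum_i (M_i - b)_+$ (which follows from the per-bucket Chernoff plus a union bound over distinct live-set snapshots, each capped at $(1-\delta)m$ balls by the load factor assumption and whose relevant count is controlled by the interval-clique structure of the operation sequence) with the $2$-Lipschitz property to conclude the desired $\exp(-\Omega(\delta^2 m)) = \exp(-\poly(\delta)\,m)$ concentration.
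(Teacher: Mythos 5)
Your overall strategy---bound the expectation by a per-allocation Chernoff argument, then promote to a high-probability bound via concentration of a Lipschitz functional of the hash values---mirrors the paper's plan, but the concentration step has a real gap. You propose to apply McDiarmid's bounded-differences inequality to the $2$-Lipschitz surrogate $F := \sum_i (M_i(t)-b)_+$. McDiarmid gives $\Pr[F > \E[F] + s] \le \exp\bigl(-s^2 / (2T)\bigr)$ where $T$ is the number of independent coordinates (distinct keys appearing in the operation history). The operation sequence is adversarial and can be arbitrarily long, so $T$ is not bounded by any polynomial in $m$; once $T \gg \poly(\delta^{-1})\,m$, the exponent $\delta^2 m^2 / T$ is $o(\poly(\delta)\,m)$ and the bound is useless. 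A $2$-Lipschitz certificate alone simply cannot give concentration that is independent of the ambient dimension. For the same reason the expectation bound on $\sum_i (M_i(t)-b)_+$ is not as clean as you hope: $M_i$ is a running maximum over an unbounded number of snapshots, and the union bound you sketch degrades with sequence length; your reference to the ``interval-clique structure'' is precisely where the argument is incomplete.

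The paper sidesteps both issues. It does not pass through a max-over-time quantity at all: it directly defines $X$ as the number of ``exposed'' allocations (those that saw a full bucket at their last allocation time, exactly your $|L_f(t)|$), uses the same per-ball Chernoff bound for $\E[X]$, and then applies \emph{Talagrand's inequality} rather than McDiarmid. Talagrand's inequality (Theorem~\ref{thm:talagrand} in the paper) requires $F$ to be both $c$-Lipschitz and $r$-\emph{certifiable}---if $F \ge s$, there is a witnessing set of at most $rs$ coordinates---and crucially its conclusion depends only on $c$, $r$, and $\E[F]$, with no dependence on the number of variables. The paper verifies that $X$ is $(h+\tau+1)$-Lipschitz and $(h+\tau+1)$-certifiable (the certificate for each exposed ball being the $h+\tau$ alive balls in its bucket at its allocation time, together with itself), which yields the desired $\exp(-\poly(\delta)\,m)$ tail regardless of how long the adversarial sequence is. To repair your argument, you would want to establish a certifiability bound for your $F$ (it is plausibly $O(b)$-certifiable) and invoke Talagrand in place of McDiarmid, or else work with $X$ directly as the paper does, which also avoids having to control the expectation of a running maximum.
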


We remark that in all of our applications of Lemma~\ref{lem:iceberguse}, we will have w.l.o.g.\ that $\log \delta^{-1} = o(\log m)$ (since, otherwise, we would have $\log \delta^{-1} = \Omega(\log m)$ and so could just use standard $O(\log m)$-bit pointers). Thus the probability bound offered by the lemma will always be at least $1 - \exp(m^{1 - o(1)}) \ge 1 - 1/\poly(m)$.

We defer the proof of Lemma~\ref{lem:iceberguse} to Section~\ref{ssec:iceberglemma} which establishes a more general version of the lemma. Although the proof is nontrivial, due to interdependencies that form from the same key potentially being allocated/freed/reallocated many times, we do not view it as one of the main technical contributes of this paper. This is because Lemma~\ref{lem:iceberguse} follows easily from a lemma that the current authors established in another recent paper on space-efficient hash tables~\cite{iceberg}. Still, we present an alternative proof in Section~\ref{ssec:iceberglemma} both for completeness, as well as because the proof takes a somewhat different (and more elegant) approach than in our past work, and in order to cover a larger parameter regime.

To conclude our discussion of load-balancing tables, we must describe how to implement allocations and frees in constant time. Here, there are two cases, depending on how $b$ compares to the size $n$ of the dereference table that the load-balancing table is being used within.

If $ b \le \log n$, then we can store a $b$-bit bitmap for each bucket indicating which slots in the bucket are free; and we can use standard bit-manipulation on the bitmap to implement the allocation and free functions in constant time.

We take a different approach if $b \ge \log n$. In this case, we claim that without loss of generality, $q = \omega(\log b)$, where $q$ is the size in bits the elements being stored (we will prove this claim in a moment). This claim means that we can keep track of which slots are free in each bucket of a load-balancing table as follows: we simply store a \defn{free list} in each bucket, that is, a linked list consisting of all the free slots, where each free slot contains a pointer to the next free slot in the list. This is possible since each free slot is $q$ bits and each pointer in the linked list needs only $\log b = o(q)$ bits. The $\log b$-bit base pointers of the $m / b$ linked lists can be stored in an auxiliary metadata array of size $O((m / b) \cdot \log b) \le O(m)$, where $m$ is the size of the load-balancing table. The free lists allow for us to implement the allocation and free functions in constant time.

To prove that this free-list approach works, it remains to show that $q = \omega(\log b)$ without loss of generality. Let $1 - \delta$ be the load factor of the full dereference table (that the load-balancing table is part of) and let $1 - \delta'$ be the load factor of the load-balancing table. Since $b \ge \log n$, we must have $\delta'^{-1} = \tilde{\Omega}(\sqrt{\log n})$. In all of our constructions of dereference tables, if we use a load-balancing table with load factor $1 - \delta'$ satisfying $\delta'^{-1} = \tilde{\Omega}(\sqrt{\log n})$ (or even $\delta'^{-1} = \omega(\log \log n)$), we will always have $\log \delta^{-1} \ge \Omega(\log \delta'^{-1})$. Recall that, if a dereference table has load factor $1 - \delta$, then it is assumed that the dereference table is storing objects of size $q \ge \Omega(\delta^{-1})$ bits. Thus, we have that $q = \omega(\log \delta^{-1}) = \omega(\log \delta'^{-1}) = \omega(\log b)$, as desired.

\paragraph{The second building block: a power-of-two-choices dereference table.}
To compensate for the high failure probability of load-balancing tables, we develop our second building block: a simple dereference table that supports $O(\log \log \log n)$-bit tiny pointers and, unlike a \balancetable, has low failure probability. The downside of this second building block is that it only supports a very small load factor.

\begin{lem}
There exists a $\delta$ satisfying $1 - \delta = \Theta(1 / \log \log n)$, such that there is a \reftable that (i) succeeds on each allocation w.h.p., (ii) has load factor at least $1 - \delta$, (iii) has constant-time updates w.h.p., and (iv) has tiny pointers of size $O(\log \log \log n)$.
\label{lem:twochoices}
\end{lem}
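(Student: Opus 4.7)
My plan is to build a simple two-choice bucketed \reftable. Partition the \refarray into $B = n/b$ buckets of size $b = C\log\log n$ for a sufficiently large constant $C$, and fix two independent hash functions $h_1, h_2$ mapping keys to $[B]$. Choose $\delta$ so that the maximum number of simultaneously present keys is $m = n/(c\log\log n)$ for a constant $c \gg C$, giving load factor $1-\delta = \Theta(1/\log\log n)$ as required. To implement $\Allocate(k)$, I inspect the current loads of buckets $h_1(k)$ and $h_2(k)$ and place $k$ into any empty slot of whichever bucket has the smaller load (breaking ties by $h_1$); the returned tiny pointer is $p = (i, j) \in \{1, 2\} \times [b]$, which takes $1 + \log b = O(\log\log\log n)$ bits. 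Then $\Dereference(k, p)$ and $\Free(k, p)$ simply decode $p$ and access the $j$-th slot of bucket $h_i(k)$ in constant time. Per-bucket metadata consists of an $O(\log b)$-bit load counter and a $b$-bit free-slot bitmap; this totals $o(n)$ bits and supports all updates in $O(1)$ time via word-parallel bit tricks.

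The correctness requirement is that every $\Allocate(k)$ finds a free slot, equivalently that no bucket ever exceeds load $b$. Since the expected items per bucket is only $m/B = C/c = O(1)$, this reduces to proving a $\log\log n$-style max-load bound for two-choice hashing in the fully dynamic insert/delete/reinsert regime. I plan to obtain it via a witness-tree argument in the style of Azar--Broder--Karlin--Upfal and V\"ocking: if some bucket $v$ ever holds more than $b$ items, then tracing backwards through the ``which bucket was less loaded?'' decision that placed each of them produces a binary tree of depth $\Omega(\log\log n)$ whose nodes are distinct keys and whose edges are constraints on those keys' hash values that ultimately funnel down to $v$. By the independence of hash values across distinct keys, and because the low average load $m/B = O(1)$ makes each additional tree level multiply the probability by a constant strictly less than $1$, the probability that any such witness exists anywhere in the table is $n^{-\omega(1)}$; a union bound over the $\poly(n)$ operation times and $B$ buckets then gives the stated w.h.p.\ guarantee.

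The main obstacle is the reinsertion phenomenon discussed in the introduction: a key $k$ that is freed and later reinserted reuses the same $(h_1(k), h_2(k))$, so the usual ``fresh randomness on each insertion'' simplification is unavailable, and historical placement decisions propagate correlations forward in time. The observation that saves the witness-tree analysis is that the adversary is oblivious, so the entire operation schedule is fixed before $h_1, h_2$ are drawn; the snapshot of the table at every moment is a deterministic function of $(h_1, h_2)$, and so is any hypothetical witness tree. One therefore only has to bound the probability of such a structure in the space of hash functions, which factorizes cleanly across its distinct key-labelled nodes. (As an alternative, should the witness-tree route prove awkward to make fully rigorous in the reinsertion setting, one can instead stack $O(\log\log n)$ load-balancing tables of geometrically shrinking size and iteratively apply Lemma~\ref{lem:iceberguse}: this also yields $O(\log\log\log n)$-bit tiny pointers while sidestepping the fully dynamic max-load analysis, at the cost of a slightly weaker per-allocation failure probability.)
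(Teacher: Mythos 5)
Your construction is identical to the paper's: same $\Theta(\log\log n)$-size buckets, same two-choice placement, same $(1+\log b)$-bit pointer encoding $(i,j)$, and same bitmap-per-bucket metadata supporting $O(1)$-time updates. The one place you diverge is in how you establish the max-load bound, and that is where the proposal has a gap.

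The paper does \emph{not} re-prove the two-choice max-load bound in the fully dynamic insert/delete/reinsert setting; it cites Woelfel's theorem (\cite{DBLP:conf/soda/Woelfel06}) directly, which gives max load $O(\log\log n)$ when the expected per-bucket occupancy is $O(1)$. You instead try to re-derive this via a witness-tree argument, and your key claim --- that because the adversary is oblivious, ``the probability \ldots factorizes cleanly across its distinct key-labelled nodes'' --- is precisely the step that is known to be false in the naive form. The paper's own introduction flags this (see the footnote around equations (3)--(4)): the leaves of V\"ocking's witness tree require that the state of the system be independent of the leaf ball's hash values, but once a ball can be deleted and reinserted, its past placements leave side effects in the current state, so that independence fails even with an oblivious adversary. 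Obliviousness removes adaptivity of the \emph{schedule}, but it does not restore independence between the snapshot of the table at a given time and the hashes $h_1(x), h_2(x)$ of a reinserted ball $x$, because that snapshot is itself a function of those hashes via the ball's earlier stint in the table. Woelfel's fix to handle exactly this costs an additive $O(d)$ term (harmless for $d=2$), but it is not a ``clean factorization.'' So as written your primary route is not a proof; you should either cite Woelfel as the paper does, or carry out the modified witness-tree argument explicitly. Your parenthetical fallback --- cascading $O(\log\log n)$ load-balancing tables of geometrically shrinking size with Lemma~\ref{lem:iceberguse} --- is a genuinely different (and plausible) route, but you have not argued that the smallest levels retain a $1-1/\poly(n)$ success guarantee, and you concede the failure probability may degrade, so it does not currently discharge condition (i) of the lemma either.
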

\begin{proof}
We partition the store into buckets of size $b = \Theta(\log \log n)$. When $\textsc{allocate}(k)$ is called, the key $k$ is hashed to two buckets $h_1(k), h_2(k) \in [1, n/b]$. The key $k$ is allocated a slot in whichever of the two buckets contains the most free slots. The tiny pointer $p$ is $1 + \log b = O(\log \log \log n)$ bits, and indicates which slot in the two buckets was allocated.

We can think of the allocations as balls that are inserted into bins using the power-of-two-choices rule \cite{Vocking03, DBLP:conf/soda/Woelfel06}, with the same ball possibly being inserted/deleted/reinserted over time. Since the load factor is $\Theta(1 / \log \log n)$, the expected number of balls in each bin is $O(1)$. In this setting, it is known that, w.h.p., the number of balls in the fullest bin is $O(\log \log n)$ \cite{DBLP:conf/soda/Woelfel06}. Thus allocations succeed w.h.p.

Finally, to implement allocations and frees in constant time, we can just use a bitmap to keep track of which slots in each bucket are free; since each bucket is only $O(\log \log n)$ slots, the bitmaps are each only $O(\log \log n)$ bits, and thus each bitmap fits into a machine word. Using standard bit manipulation, the bitmaps can be used to keep track of which slots are free in constant time per allocation/free (and to find a free slot for a given allocation also in constant time). The bitmaps consume a total of $O(n)$ bits of space.
\end{proof}

\paragraph{Putting the pieces together.}
Of course, power-of-two-choices dereference tables are not very useful on their own, because they only support $o(1)$ load factors. We now show how to combine them with load-balancing tables in order to prove \Cref{thm:upper-bound-fixed}.

\begin{proof}[Proof of \Cref{thm:upper-bound-fixed}]
Since we are willing to have tiny pointers of size $\Theta(\log \log \log n + \log \delta^{-1})$, we can assume without loss of generality that
$\delta = o\left(\frac{1}{\log \log n}\right)$.  

We store a $1 - \delta^2$ fraction of the allocations in a load-balancing table of size $m = (1 - \delta / 2) n$ slots that supports load factor $1 - \delta^2 / c$ for some sufficiently large positive constant $c$; we call this the \defn{primary table}. Allocations that fail in the primary table are stored in a secondary table implemented with Lemma \ref{lem:twochoices} to have size $n' := \delta n / 2$ slots and support load factor $1 - \delta' := \Theta(1 / \log \log n')$. If an allocation fails in the secondary table, or if the load factor of the secondary table ever exceeds $\Theta(1 / \log \log n')$, then the allocation fails in the full dereference table as well. Note that the total size (in terms of slots) of the primary and secondary tables is $n$. See Figure \ref{fig:fixed} for a picture of the layouts of the two tables.

Since both the primary and secondary tables are constant time, so is the full dereference table. Additionally, each allocation can return a tiny pointer that is either in the primary table or in the secondary table (plus 1 bit of information indicating which table it is being pointed into). Since the primary and secondary tables both have tiny pointers of size $O(\log \log \log n + \log \delta^{-1})$, the claim about tiny-pointer size is also proven. 

Our final task is to bound the probability of a given allocation failing. Lemma \ref{lem:iceberguse} tells us that the number of allocations in the secondary table will be a most $\delta^2 n$ at any given moment w.h.p. Since the secondary table has $n'= \Theta(\delta n / 2)$ slots, and since $\delta = o\left(\frac{1}{\log \log n}\right)$, it follows that the number of allocations in the secondary table at any given moment is $o(n' / \log \log n) = o(n' / \log \log n')$ with high probability. We therefore get from Lemma \ref{lem:twochoices} that the allocations in the secondary table each succeed with high probability in $n'$. Without loss of generality, $n' \ge \sqrt{n}$ (since otherwise $\delta \le O(1 / \sqrt{n})$, and we can just use standard $\log n$-bit pointers). Thus the allocations in the secondary table each succeed with high probability in $n$.
\end{proof}

\begin{figure}
    \centering
    \includegraphics[scale = 0.6]{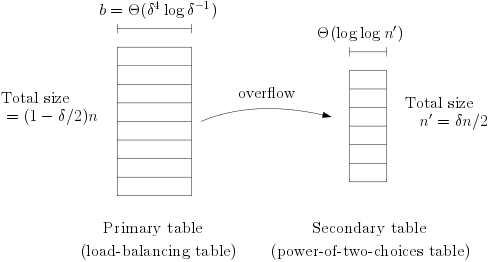}
    \caption{A pictoral representation of the layouts of the primary and secondary tables. The primary table is implemented to support load factor $1 - \Theta(\delta^2)$, so that only $\delta^2 n$ allocations overflow to the secondary table at a time. The secondary table is implemented to have size $n' = \delta n / 2$ and to support a (much sparser) load factor of $\Theta(1 / \log \log n') = \omega(\delta)$, so that it can successfully store all of the overflowed allocations from the primary table.}
    \label{fig:fixed}
\end{figure}


\section{Upper Bounds for Variable-Sized Pointers}
\label{sec:variable}

In this section, we give optimal constructions for variable-size tiny pointers. We prove the following theorem:
\begin{thm}
\label{thm:upper-bound-variable}
	For every $\delta \in (0, 1)$, there exists a \reftable that (i) succeeds on each allocation w.h.p., (ii) has load factor at least $1 - \delta$, (iii) has constant-time updates w.h.p., and (iv) has tiny pointer size $O(P + \log \delta^{-1})$, where $P$ is a random variable such that $\prob{P \geq i} \leq 2^{-2^{\Omega(i)}}$ for all $i$. In particular, the tiny pointer size is $O(1 + \log \delta^{-1})$ in expectation.
\end{thm}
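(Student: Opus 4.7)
The plan is to generalize the two-table construction of \Cref{thm:upper-bound-fixed} into a cascade of many load-balancing tables with geometrically-decreasing sizes. The primary table $T_0$ will handle the bulk of the allocations with pointers of size $O(\log \delta^{-1})$, and each successive table $T_i$ (for $i \ge 1$) will catch the overflow from $T_{i-1}$ using a much shorter pointer. Intuitively, all of the ``$\log \delta^{-1}$'' cost is paid once at level $0$, and the only thing $P$ has to measure is how deep into the cascade an allocation falls.

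Concretely, I would take $T_0$ to be a load-balancing table of size $(1-\Theta(\delta))n$ with bucket size $\Theta(\delta^{-2}\log\delta^{-1})$ and load factor $1 - \Theta(\delta)$, which by the usual Chernoff-style analysis gives per-allocation failure probability at most $\delta/C$ for a large constant $C$; and for each $i \ge 1$, $T_i$ would be a load-balancing table of size $n_i = \Theta(\delta n \cdot 2^{-i})$ with constant bucket size $b$ (chosen large enough) and constant load factor $\tfrac{1}{2}$, giving per-allocation failure probability bounded by some absolute constant $p < 1$. The sizes are arranged so that $\sum_{i \ge 0} n_i \le n$ and $\delta_i n_i \le n_{i+1}/2$, so that applying \Cref{lem:iceberguse} at each level shows that, w.h.p., the number of alive failed allocations at level $i$ fits in level $i+1$.

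The algorithm for $\Allocate(k)$ is: try $T_0$, then $T_1$, then $T_2, \ldots$, and return the pair (level $i$, within-bucket offset) as the tiny pointer, using a prefix-free code (e.g.\ Elias gamma) for $i$ in $O(\log i)$ bits plus $O(\log b_i)$ bits for the offset. This makes the pointer size $O(\log\delta^{-1})$ at level $0$ and $O(\log i)$ at level $i \ge 1$, so setting $P = \log(\text{level}+1)$ gives pointer size $O(P + \log\delta^{-1})$. Since the per-call failure probability is at most $\delta/C$ at level $0$ and at most $p < 1$ at every subsequent level, independence of the hash functions across levels gives $\Pr[\text{level} \ge j] \le \delta \cdot p^{j-1}$, and therefore $\Pr[P \ge \ell] = \Pr[\text{level} \ge 2^{\Omega(\ell)}] \le p^{2^{\Omega(\ell)}} = 2^{-2^{\Omega(\ell)}}$, which is the required doubly-exponential tail. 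In particular $\E[P] = \sum_{\ell} \Pr[P \ge \ell] = O(1)$, so the expected pointer size is $O(1 + \log\delta^{-1})$. Correctness of $\Dereference(k,p)$ is immediate because the pointer itself encodes which level (and hence which table and hash function) to use.

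The main obstacle I anticipate is the constant-time requirement (iii). Although the expected number of levels visited by $\Allocate$ is $O(1)$, the worst case is $\Theta(\log n)$, so naively the per-operation cost is not constant with the required $1 - 1/\poly(n)$ probability. The remedy will be either to amortize (the cascade sum of times telescopes, so a batch of $n$ operations takes $O(n)$ time w.h.p.) or to truncate the cascade at $O(\log\log n)$ levels and catch the rare tail by escalating into an auxiliary structure (for example a fixed-size dereference table from \Cref{thm:upper-bound-fixed} used at constant load factor, which takes $O(\log\log\log n)$-bit pointers and succeeds w.h.p.); since such escalations occur with probability doubly-exponentially small in the truncation depth, they neither spoil the tail bound on $P$ nor the w.h.p.\ success guarantee on each allocation.
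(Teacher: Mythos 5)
Your cascade-of-tables idea is in the same spirit as the paper's construction, and your plan for the tiny-pointer encoding (level plus offset, with the level gamma-coded so that the pointer length is $O(\log \text{level})$) is exactly what the paper does. But two essential ingredients of the actual proof are missing, and their absence creates real gaps.

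\textbf{First, the probability chaining is not justified without a deterministic cap.} You write that ``independence of the hash functions across levels gives $\Pr[\text{level} \ge j] \le \delta \cdot p^{j-1}$.'' This is the crux, and it does not follow from independence of the hash functions alone. The probability that $h_i(x)$ lands in a full bucket of $T_i$ depends on the \emph{state} of $T_i$ at the time of $x$'s allocation, and that state is itself a random variable that could be arbitrarily bad: on the (unlikely but nonnegligible) event that more than $n_i/2$ balls have been bumped into $T_i$, the fraction of full buckets can approach $1$, and your inductive bound collapses. Lemma~\ref{lem:iceberguse} only gives a high-probability bound with failure probability $\exp(-\Omega(n_i))$, which for the small, deep levels ($n_i = O(\delta n / 2^i)$) is not $1/\poly(n)$ and in fact becomes vacuous for $i$ close to $\log_2(\delta n)$. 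The paper's fix is to attach an \emph{overflow array} of $s_i$ slots to each level and route balls into it whenever $L_{i+1}$ would exceed $s_{i+1}$. This maintains the invariant $L_i \le s_i$ \emph{deterministically}, so that at every level the fraction of full buckets is $\le 1/b$ deterministically, and the chain $\Pr[\calB_i] \le (1/b)^i$ is licensed by conditioning, not by an independence claim. Your cascade has no such fallback, and the argument does not survive without it.

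\textbf{Second, the constant-time requirement is not met, even with your proposed patch.} In your scheme an allocation that reaches level $j$ costs $\Theta(j)$ time, and $\Pr[\text{level} \ge j]$ is only a constant $p^j$, not $1/\poly(n)$, for $j = O(1)$. Truncating at $O(\log\log n)$ levels and escalating to a fixed-size table still leaves a worst-case scan of $\Theta(\log\log n)$ levels on the (not sufficiently rare) event that the allocation goes deep, so you get $O(\log\log n)$-time updates w.h.p.\ rather than $O(1)$. The paper sidesteps this by first hashing every key into one of $n/\log n$ \emph{containers} of $\Theta(\log n)$ expected size and running the cascade \emph{inside each container}. Because the entire state of levels $d, d+1, \dots$ within a container occupies at most $(\log n)/10$ bits (for a suitable constant $d$), the level selection for those levels can be done in $O(1)$ time by the Method of Four Russians with a $\sqrt{n}$-size lookup table. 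This container decomposition also caps the number of levels at $O(\log\log n)$ and makes the overflow-array space accounting work out. Without it, your cascade has $\Theta(\log n)$ globally shared levels whose state cannot be compressed into a machine word, and there is no constant-time implementation.

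So the high-level plan (geometric cascade, encode the level) is the right intuition, but both the overflow-array/deterministic-cap mechanism and the container structure are load-bearing; they are what make the probability argument rigorous and the operations constant-time, and your proposal omits both.
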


We can assume without loss of generality that $1 - \delta < \alpha$ for some sufficiently small positive constant $\alpha$ of our choice (if $1 - \delta > \alpha$, we can reset $\delta = 1 - \alpha = \Theta(1)$ without changing the guarantee of the theorem).

Observe that, using the same primary/secondary-table construction as in the proof of Theorem \ref{thm:upper-bound-fixed}, we can immediately reduce to the case where the load factor is a positive constant of our choice. Indeed, suppose that we could implement a dereference table $T$ with load factor $\alpha$ for some positive constant $\alpha > 0$ and average tiny pointer size $O(1)$. Then we can use $T$ as the secondary table in the construction: if the entire dereference table supports load factor $1 - \delta$, then the requirement from the secondary table is that it must be able to support $\delta^2 n$ elements using $\delta n / 2$ slots. So as long as $\delta < \alpha / 2$ (which is without loss of generality) then $T$ suffices.

Thus our task of proving Theorem \ref{thm:upper-bound-variable} reduces to the task of proving the following proposition.

\begin{prp}
\label{prop:upper-bound-variable-big-delta}
There exists a \reftable that (i) succeeds w.h.p.\, (ii) has load factor $\Omega(1)$, (iii) has constant-time updates w.h.p.\ in $n$, and (iv) has tiny pointer size $P$, where $P$ is a random variable satisfying $\prob{P \geq i} \leq 2^{-2^{\Omega(i)}}$ for all $i$.
\end{prp}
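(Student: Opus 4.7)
The plan is to build a nested hierarchy of load-balancing tables $T_0, T_1, T_2, \ldots$ with geometrically growing bucket sizes $b_i = \Theta(2^i)$, each operated at a constant load factor $\alpha$ bounded away from $1$. An allocation for key $k$ is attempted first in $T_0$, then in $T_1$ if the designated bucket of $T_0$ is full, and so on until a free slot is found. The tiny pointer encodes the level reached (in unary, using $i+1$ bits at level $i$) together with the offset within that level's bucket ($\lceil \log b_i \rceil = O(i)$ bits), so the total pointer length is $P = O(i)$ when the ball ends up at level $i$.

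The per-level failure rate is governed by a Chernoff bound: because at level $i$ the bucket size is $b_i = \Theta(2^i)$ while the load factor is a constant strictly below $1$, the probability that the bucket a fresh key hashes to at level $i$ is already full is $\exp(-\Omega(b_i)) = 2^{-2^{\Omega(i)}}$. A mild reparameterization of \Cref{lem:iceberguse}---the same martingale/potential-function argument deferred to \Cref{ssec:iceberglemma}, but with the Chernoff threshold adjusted to reflect the chosen $b_i$ and constant $\alpha$---then gives that the number of alive allocations that overflow level $i$ into level $i+1$ is at most $f_i n$ w.h.p., with $f_i = 2^{-2^{\Omega(i)}}$. Choosing $n_i = \Theta(f_{i-1} n)$ gives $T_i$ just enough capacity to absorb the overflow from level $i-1$; the total capacity $\sum_i n_i$ is $O(n)$ because the $\{f_i\}$ decay doubly exponentially, and the overall load factor is therefore $\Omega(1)$. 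The chain may be truncated at $L = O(\log\log n)$ levels since $f_L = 1/\poly(n)$, so any residual failure beyond that point is absorbable into the w.h.p.\ success guarantee.

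Controlling the pointer distribution is then immediate: the event $P \geq j$ forces the allocation to have overflowed every level strictly below $j/c$ for some constant $c > 0$, so
\[
\Pr\!\left[P \geq j\right] \;\leq\; f_{\Omega(j)-1} \;=\; 2^{-2^{\Omega(j)}},
\]
and summing the tail yields $\expect{P} = O(1)$, as a bonus.

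The principal obstacle is supporting updates in $O(1)$ time w.h.p., since a sequential sweep through up to $L$ levels takes $\Theta(L)$ time in the worst case. I plan to address this through word-level parallelism: the combined bucket-bitmap information that a single allocation can ever inspect across all $L$ levels totals $\sum_{i=0}^{L} b_i = \Theta(2^L) = O(\log n)$ bits---i.e., $O(1)$ machine words---so with a careful memory layout (assembling each key's probe chain into a constant number of contiguous words addressed via a single hash of the key) all levels can be probed in parallel and the lowest-level free slot identified in $O(1)$ time via standard bit-manipulation primitives. The remaining task of establishing the generalization of \Cref{lem:iceberguse} is essentially bookkeeping: the argument in \Cref{ssec:iceberglemma} goes through with only the Chernoff threshold adjusted to reflect the variable bucket sizes $b_i$.
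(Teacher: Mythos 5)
Your proposal captures the right broad strategy—a cascade of load-balancing tables of decaying sizes, with a doubly-exponential tail engineered by the geometry of the cascade—but it departs from the paper's construction in two ways that turn out to matter, and one of them is a genuine gap you have not closed.

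The critical gap is the constant-time update. Your claim is that because $\sum_{i=0}^{L} b_i = O(\log n)$ bits of bucket-occupancy information exist along a key's probe chain, a ``careful memory layout'' will place those bits contiguously so that a single word-parallel scan finds the right level. But the buckets along a key's probe chain live at positions $h_0(x), h_1(x), \dots, h_L(x)$ inside levels whose bucket counts $n_i / b_i$ shrink from $\Theta(n)$ down to $\Theta(1)$, and different keys visit essentially unrelated tuples of buckets. No fixed layout can make every key's probe chain contiguous unless you first hash keys into groups and give each group its own private copy of the entire level structure—at which point you have rediscovered the paper's \emph{containers}. The paper hashes keys into $n/\log n$ containers of $\Theta(\log n)$ slots each, so that \emph{all} of a key's probing state (all levels plus all hashes) fits in $O(\log n)$ bits and can be processed by a Four-Russians lookup table of size $\sqrt{n}$. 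If you try to bolt containers onto your scheme, however, your parameter choice $b_i = \Theta(2^i)$ breaks: a container of $\Theta(\log n)$ slots can host level $i$ only while $f_{i-1}\log n \ge b_i$, which caps the per-container depth at $O(\log\log\log n)$ levels, whereas your analysis needs $L = \Theta(\log\log n)$ levels to push the residual failure probability down to $1/\poly(n)$. The paper's choice of constant bucket size $b = O(1)$ with geometrically shrinking bucket \emph{counts} $s_i = s / 2^i$ per container is precisely what lets $\Theta(\log\log n)$ levels coexist inside $\Theta(\log n)$ slots. So ``careful memory layout'' is not a detail you can defer; it forces a different parameterization.

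A secondary concern is that you have no deterministic cap on the number of balls admitted to level $i$. \Cref{lem:iceberguse} (and \Cref{lem:iceberg}, which underlies it) requires that at most $(1-\delta)m$ allocations be outstanding as a \emph{hard} precondition, not a high-probability one. In your scheme the load of $T_i$ is a random variable equal to the number of overflows from $T_{i-1}$; applying the concentration lemma therefore requires a chain of conditioning on the good events at all earlier levels, plus a union bound over levels and over time steps, plus a verification that the conditioning preserves the independence the lemma needs. This is probably salvageable, but it is more delicate than you acknowledge. The paper eliminates the issue outright with \overflowarrays: whenever $L_{i+1}$ would exceed $s_{i+1}$, the ball is diverted into a same-level array that is deterministically guaranteed to have room, so $L_i \le s_i$ holds \emph{unconditionally}, the lemma's precondition is met for free, and no allocation ever fails except when the whole container overflows (which is the single high-probability event that needs to be controlled). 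You should adopt a fallback of this kind rather than rely on the conditioning chain.

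On the parts that do work: your pointer-length accounting is correct (unary level encoding plus an $O(i)$-bit offset at level $i$, hitting level $i$ with probability $2^{-2^{\Omega(i)}}$, gives exactly the claimed tail $\prob{P \ge j} \le 2^{-2^{\Omega(j)}}$), and it is a legitimate alternative to the paper's $O(\log i)$-bit pointers hitting level $i$ with probability $2^{-\Omega(i)}$. Both encodings yield the same tail. But the rest of the construction needs the container-plus-overflow-array machinery to be sound, and once you add those pieces you will find yourself forced back to constant $b$ and the paper's level sizing.
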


For ease of discussion, throughout the rest of the section, we use $ n $ to denote the maximum number of items that can be stored in the dereference table (rather than the number of slots), and we aim to construct a dereference table with $O(n)$ slots.

\begin{figure}
    \centering
    \includegraphics[scale = 0.6]{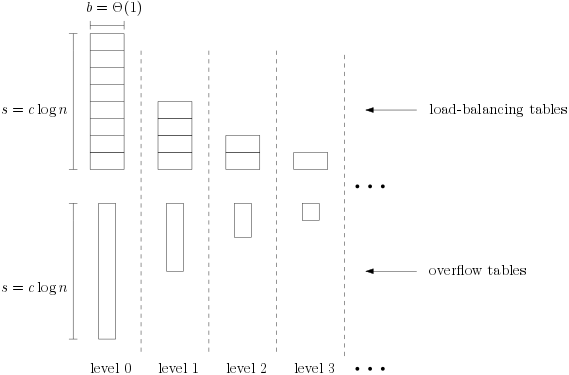}
    \caption{A pictoral representation of the layout used to implement each container of size $\Theta(\log n)$. When an allocation fails in the $i$-th load-balancing table, it either proceeds to the $(i + 1)$-th load-balancing table (if $L_{i + 1} < s_{i + 1}$) or it proceeds to the $i$-th overflow array (which is deterministically guaranteed to have a free slot).}
    \label{fig:variable}
\end{figure}

\paragraph{Constructing the dereference table.}
We now describe our construction for the dereference table that we use to prove \Cref{prop:upper-bound-variable-big-delta}.
The \reftable initially hashes every key into one of $n / \log n$ \defn{containers}, so that, at all times, any container has $\log n$ items in expectation. We deterministically limit the number of elements in each container to $s = c \log n$ items, for some large enough constant $c > 1$ to be determined later. When a key is hashed into a container that already has $c \log n$ items, the allocation fails.

Each container is managed independently, and its allocations/frees are performed using a scheme with $\log_2 s$ levels, as follows. For every $0 \leq i < \log_2 s$, the $i$th level is a \balancetable with $s_i \vcentcolon= s / 2^i$ buckets, each with $b$ slots, for some large enough constant $b \geq 2$ to be determined. 

The basic idea is that, when an allocation in level $i$ fails due to bucket fullness, we recursively attempt the allocation in the next level $i + 1$ (which uses a different hash function than does level $i$). Intuitively, as long as $b$ is a sufficiently large constant, then each level should succeed on at least $1/2$ of its allocations, which is why the next level $i + 1$ can afford to be half the size of the previous one. 

The problem with this basic construction is that if even just a few consecutive levels behave badly, resulting in $\omega(s_{i})$ elements being sent to some level $i$, then there may not be room for those elements in all of the levels $i, \ldots, \log_2 s$ combined. On the other hand, our construction must be able to handle such bad scenarios, because most of the levels are so small that we cannot offer high-probability guarantees on their behavior. Thus, we must modify the construction so that, when a level behaves badly, the effects of that are isolated.

To do this, we add a fallback structure to each level, that we call \defn{\overflowarray}, to prevent excessive occupancy. The \overflowarray in each level $i$ has $s_i$ slots (the same number of slots as the load-balancing table at that level). Let $L_i$ be the random variable denoting the number of values currently stored in levels $i$ or larger, including their \overflowarrays. 
Whenever an allocation at some level $i$ fails (due to bucket fullness), we recursively allocate in the next level only if $L_{i + 1} < s_{i + 1}$, and otherwise, we place the value in any available slot in the \overflowarray of level $i$. The result of this is that we deterministically guarantee $L_i \leq s_i$ for every level $i$ (including level $0$, for which this is trivial, since $s_0 = s$).

Importantly, no \overflowarray can ever run out of space: since $L_i \le s_i$ (deterministically), the total number of elements in the \overflowarray for level $ i $ is also a guaranteed to be a most $s_i$, which is precisely the capacity of the \overflowarray. 

We are now ready to describe the full allocation algorithm. See Figure \ref{fig:variable} for a picture of the layout used to implement each container.

\begin{tbox}
$\textsc{Allocate}(k)$:
\begin{enumerate}
	\item Hash $k$ into one of the $n / \log n$ containers.
	\item If the selected container is already at full capacity $s$, fail. 
	\item Else, allocate $k$ in the selected container:
	\begin{enumerate}
		\item For each $i = 0, 1, \dots, \log_2 s - 1$:
		\begin{enumerate}
			\item Increment $L_i$.
			\item Try to allocate $k$ in the $i$th \balancetable.
			\item If the allocation succeeds:
			\begin{itemize}
			\item Let $j$ be the chosen slot within the chosen bucket. 
			\item Return \texttt{(level $i$, \balancetable, bucket slot $j$)}.
			\end{itemize}
			\item If $L_{i + 1} \ge s_{i + 1}$:
			\begin{itemize}\item Pick any free slot in the $i$-th \overflowarray.
			\item Let $j$ be the chosen slot in the array.
			\item Return \texttt{(level from the back $\log_2 s - 1 - i$, \overflowarray, slot $j$)}.
			\end{itemize}
		\end{enumerate}
	\end{enumerate}
\end{enumerate}
\end{tbox}

Notice that, if an allocation ends up using a slot $j$ in some bucket in the $i$-th level's load-balancing table, then the tiny pointer encodes: the quantity $i$, which is $O(\log i)$ bits; the fact that the allocation used the load-balancing table rather than the overflow array, which is $O(1)$ bits; and the quantity $j$, which is $O(\log b) = O(1)$ bits. The total length of the tiny pointer is $O(\log i)$ in this case.\footnote{We follow the convention that $\log i = \Omega(1)$ for all $i$, so $\log 0$ and $\log 1$ are set to $1$.}

On the other hand, if an allocation ends up using the $j$-th slot in the $i$-th level's overflow array, then the tiny pointer encodes: the quantity $\log_2 s - 1 - i$, which is $O(\log (\log_2 s - 1 - i))$ bits; the fact that the allocation used the overflow array rather than the load-balancing table, which is $O(1)$ bits; and the quantity $j$, which is $O(\log s_i)$ bits. Importantly, in this case, we elect to encode $\log_2 s - 1 - i$, rather than the equivalent quantity $i$. This allows us to bound the total size of the tiny pointer by 
$$O(\log(\log_2 s - 1 - i)) + O(1) + O(\log s_i) = O(\log \log(s / 2^i) + \log s_i) = O(\log \log s_i + \log s_i) = O(\log s_i).$$
Thus, when an allocation uses the overflow array in level $i$, we can bound the tiny-pointer size by $O(\log s_i)$. 

\paragraph{Implementing operations in constant time.}
The information in the tiny pointers allows for dereferences to easily be performed in time $O(1)$. Performing allocations and frees in time $O(1)$ is slightly more difficult, however.

Let us start by considering the na\"ive approach to implementing allocations and see why this is too slow. We must first identify which container to use (this just requires us to evaluate a hash function, taking constant time). We must then determine which level we will be using; if we end up using level $i$, then this takes time $\Theta(i)$, which is too slow when $i = \omega(1)$. 

We solve this problem as follows. Let $d$ to be some sufficiently large positive constant. We will implement levels $0, 1, \ldots, d - 1$ using the naive approach, and then we will implement the levels $d, \ldots, \log_2 s$ using the Method of Four Russians (i.e., the ``lookup-table approach''). Notice that the total number of slots in the levels $d, \ldots, \log_2 s$ is at most $4s_d / 2^d \le (\log n) / 10$. Thus the entire state of which slots are free in those levels can be encoded in $(\log n) / 10$ bits; we store this quantity as metadata for each container, totaling to $O(n)$ bits of metadata across all $n / \log n$ containers. Moreover, the hashes $h_1(k), h_2(k), \ldots, h_{\log_2 s}(k)$ that are used to select a bucket in each level together represent only $O((\log \log n)^2)$ bits (and can be implemented to just be the first $O((\log \log n)^2)$ bits of a single hash function). Thus, the entire state of levels $d, \ldots, \log_2 s$, plus all of the information about the hashes $h_1(k), h_2(k), \ldots, h_{\log_2 s}(k)$, can be encoded in an integer $\phi$ of $(\log n) / 2$ bits that can be constructed in time $O(1)$. This means that we can pre-construct a lookup table of size $2^{(\log n) / 2} = \sqrt{n}$ that we can use to determine, for any given value of $\phi$, which level the allocation should use. The lookup table takes a negligible amount of metadata space, allows for allocations to be performed in time $O(1)$, and can be constructed in time $\tilde{O}(\sqrt{n})$ during the dereference table's creation.

Now that we have specified how to implement allocations, frees are simple to implement, since they just update the metadata to reflect that the slot has been freed (this just flips a single bit in the metadata).

We have now fully specified the construction and implementation of our dereference table. It remains to analyze its properties, namely the probability of failure, the load factor, and the tiny-pointer sizes.

\paragraph{Probability of failure.}
The only way that an allocation can fail is if there is no room in the container that it hashes to, i.e., the container has $c \log n$ elements already. Otherwise, if the container has fewer than $c \log n$ elements, then the allocation is guaranteed to succeed (but, of course, it is not guaranteed to result in a small tiny pointer).

On average, $\log n$ items hash to any particular container, so by a Chernoff bound the maximum size across all containers is at most $c \log n$ w.h.p.\ in $n$ for some positive constant $c$. By the union bound, this holds for all of the $n / \log n$ containers simultaneously, w.h.p.\ in $n$. Thus, if we pick $s = c \log n$ for some large enough constant $c$, at any point in time, all containers will be below capacity w.h.p.\ in $n$.

\paragraph{Load factor.}
Next, we verify that the total number of slots is $O(n)$. The \reftable for each container uses space $O(\sum_i s_i) = O(s_0) = O(s) = O(\log n)$ slots, and there are $n / \log n$ containers. Hence, the total space is $O(n)$, so the load factor is $\Omega(1)$, as desired.

\paragraph{Tiny pointer size.}
To conclude the proof of Proposition \ref{prop:upper-bound-variable-big-delta}, we analyze the tiny pointer size of a given allocation, conditioned on the event that the allocation doesn't fail. The size of the tiny pointer depends on where the key ends up allocated. Specifically, it is:
\begin{itemize}
	\item $O(\log i)$ if the key is allocated in the $i$th \balancetable;
	\item $O(\log s_i)$ if the key is allocated in the $i$th \overflowarray.
\end{itemize}

Fix an arbitrary container to be the one where the allocation takes place, and consider the following events:
\begin{itemize}
	\item $\calB_i$: the key is allocated in the $i$th \balancetable;
	\item $\calO_i$: the key is allocated in the $i$th \overflowarray;
	\item $\calL_i$: $L_i < s_i$.
\end{itemize}
We will condition on two events: (i) that the element picks the container we fixed, and (ii) that the container contains fewer than $c \log n$ elements (i.e., the allocation doesn't fail). We will drop the conditioning notation for clarity. Let $P$ be the size of the output tiny pointer. Then, by the law of conditional expectation,
\[
\expect{P} \leq \sum_i \prob{\calB_i} \cdot O(\log i) + \sum_i \prob{\calO_i} \cdot O(\log s_i). \numberthis \label{eq:expectation}
\]
We bound each term separately. On the one hand,
\begin{align*}
\prob{\calB_i} &\leq \prob{\overline{\calB_0}, \calL_1, \overline{\calB_1}, \dots, \calL_{i - 1}, \overline{\calB_{i - 1}}}\\
&\leq \prob{\overline{\calB_0}} \cdot \prob{\overline{\calB_1} \mid \overline{\calB_0}, \calL_1} \cdots \prob{\overline{\calB_{i - 1}} \mid \overline{\calB_0}, \calL_1, \dots, \overline{\calB_{i - 2}}, \calL_{i - 1}}. \numberthis \label{eq:prob-hash-table-i}
\end{align*}
For every $j$, the load factor of level $j$ is at most $1 / b$, because there are $L_j < s_j$ items, $s_j$ buckets, and each bucket has capacity $b$. This means that at most $1 / b$ of the bins are full, deterministically, so the probability that a full bucket is chosen at most $1 / b$. Hence, every term in \Cref{eq:prob-hash-table-i} is bounded by $1 / b$, and $$\prob{\calB_i} \leq 1 / b^i \leq 1 / 2^i.$$

On the other hand, 
$$\Pr[\calO_i] \le \Pr[\overline{\calL_{ i + 1}}].$$
We can bound the latter probability using Lemma \ref{lem:iceberguse}. By construction, the load-balancing table in level $i$
always has at most $s_i$ allocations made to it (including the failed ones, since $L_i \le s_i$ and $L_i$ counts both the elements in level $i$ and the elements in levels $i + 1, i + 2, \ldots$); moreover, the allocations and 
frees performed on the table are independent of the randomness used in the table. Assuming that the bucket-size $b$
is a sufficiently large constant, it follows that we can apply Lemma \ref{lem:iceberguse} to deduce that, with probability at least
$$1 - \exp(-\poly(b) s_i) = 1 - \exp(-\Omega(s_i)),$$
the number of failed allocations at level $i$ at any given moment is at less than $s_i / 2 = s_{i + 1}$ (and hence $\calL_{i + 1}$ holds).
Thus, we can conclude that
$$\Pr[\calO_i] \le 1 / 2^{\Omega(s_i)}.$$

Putting the pieces together,
\[
\expect{P} = \sum_i \frac{O(\log i)}{2^i} + \sum_i \frac{O(\log s_i)}{2^{\Omega(s_i)}} = O(1).
\]

Notice that these calculations show that a tiny pointer of size $O(\log \ell)$ has probability $2^{-\Omega(\ell)}$, or, equivalently, a tiny pointer of size $O(\ell)$ has probability $2^{-2^{\Omega(\ell)}}$. This suggests that the tiny pointer size decays at a doubly-exponential rate. We prove this next. For any $\ell$,
\begin{align*}
\prob{P \geq \ell} &\leq \sum\limits_{i :\,O(\log i) \geq \ell}\prob{\calB_i} + \sum\limits_{i :\, O(\log s_i) \geq  \ell}\prob{\calO_i}\\
&= \sum\limits_{i \geq 2^{\Omega(\ell)}}\prob{\calB_i} + \sum\limits_{s_i \geq 2^{\Omega(\ell)}}\prob{\calO_i}\\
&= \sum\limits_{i \geq 2^{\Omega(\ell)}} \frac{1}{2^i} + \sum\limits_{s_i \geq 2^{\Omega(\ell)}}\frac{1}{2^{\Omega(s_i)}}.
\end{align*}
Both sums are dominated by their first terms, and are thus $1 / 2^{2^{\Omega(\ell)}}$. Therefore,
\begin{align*}
\prob{P \geq \ell} &\leq \frac{1}{2^{2^{\Omega(\ell)}}},
\end{align*}
which completes the proof of Proposition \ref{prop:upper-bound-variable-big-delta}. As discussed earlier, Proposition \ref{prop:upper-bound-variable-big-delta}, in turn, implies Theorem \ref{thm:upper-bound-variable}.

\paragraph{Bounding sums of tiny-pointer sizes.}
In our applications of tiny pointers, a common way of using variable-size pointers will be to pack $\Theta\left(\frac{\log n}{\log \delta^{-1}}\right)$ of them into $\Theta(\log n)$ bits.
Therefore, we conclude this section by proving a bound of the total number of bits consumed by a set $S$ of $ O (\log n/\log\delta ^ { -1 }) $ tiny pointers.
\begin{prp}
Using the construction in Theorem \ref{thm:upper-bound-variable}, for any set $S$ of $O\left(\frac{\log n}{\log \delta^{-1}}\right)$ tiny pointers, the sum of their sizes will be $O(\log n)$ bits w.h.p.
\label{prop:aggregatesize}
\end{prp}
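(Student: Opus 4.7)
The plan is to decompose each tiny pointer's size as $O(P_i + \log \delta^{-1})$ (per Theorem~\ref{thm:upper-bound-variable}), where $P_1, \dots, P_m$ denote the pointer-size random variables of the $m = |S|$ pointers in $S$. Summed over $S$, this gives $O\bigl(\sum_i P_i\bigr) + O(m \log \delta^{-1})$. The second term is $O(\log n)$ directly from $m = O(\log n / \log \delta^{-1})$, so the whole task reduces to proving $\sum_{i=1}^m P_i = O(\log n)$ with high probability.

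The central observation is that the $m$ pointers belong to $m$ distinct keys $k_1, \dots, k_m$, and the dereference table first partitions keys into one of $n / \log n$ independent containers via a hash. We may assume $\log \delta^{-1} \geq 1$ without loss of generality, so $m = O(\log n)$. A birthday-paradox calculation bounds the probability of any two keys sharing a container by $\binom{m}{2} \cdot \frac{\log n}{n} = O\!\left(\frac{\log^3 n}{n}\right) = 1/\poly(n)$. Let $\calE$ be the high-probability event that all $m$ keys fall in distinct containers. Conditional on $\calE$, each $P_i$ depends only on the randomness of its own container $c(k_i)$, so the $P_i$'s become mutually independent.

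Given this conditional independence, the plan is a moment-generating-function bound. Because $\prob{P \geq j} \leq 2^{-2^{\Omega(j)}}$ decays doubly exponentially while $e^{\lambda j}$ only grows exponentially, $M(\lambda) := \E[e^{\lambda P}]$ is finite for every constant $\lambda > 0$; pick $\lambda_0$ with $M(\lambda_0) = O(1)$. Since the analysis of Theorem~\ref{thm:upper-bound-variable} fixes an arbitrary container at the outset, the marginal bound on $P_i$ holds conditional on any specific container choice, so each conditional MGF $\E[e^{\lambda_0 P_i} \mid c(k_i)]$ is again $O(1)$ and, conditional on an assignment in $\calE$, the joint MGF factors as at most $M(\lambda_0)^m = e^{O(m)}$. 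Markov then yields
\[
\Pr\!\left[\sum_{i=1}^m P_i \geq t,\; \calE\right] \;\leq\; e^{-\lambda_0 t}\, \E\!\left[e^{\lambda_0 \sum_i P_i}\, \mathbf{1}_\calE\right] \;\leq\; e^{-\lambda_0 t + O(m)}.
\]
Choosing $t = c \log n$ for a sufficiently large constant $c$ makes this $n^{-\Omega(1)}$, and adding $\Pr[\overline{\calE}] \leq 1/\poly(n)$ finishes the argument.

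The main obstacle is the dependence between $P_i$'s of pointers that share a container: since they are allocated in sequence into the same internal state, their sizes are genuinely correlated and no naive MGF factorization applies. The birthday-paradox decoupling above sidesteps this precisely because $m$ is polylogarithmic while there are $\Theta(n / \log n)$ containers---nearly linear in $n$. For a proposition covering much larger $m$, one would instead have to bound the total pointer size inside each container via the doubly-exponential tails and then exploit container-level independence directly; that more intricate analysis is not needed here.
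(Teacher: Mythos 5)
Your proof is correct, and it follows the same underlying strategy as the paper (decouple the pointers via the container assignment, then apply a sub-exponential concentration bound to the now-independent $P_i$'s), but it diverges at two steps. First, you handle inter-container collisions by conditioning on the event $\calE$ that no two keys of $S$ share a container and absorbing $\Pr[\overline{\calE}] = O(\log^3 n / n)$ into the failure probability; the paper instead lets collisions happen, defines the collision set $X$, shows $|X| = O(1)$ w.h.p.\ via a Chernoff bound on the pairwise-collision indicators, and observes that the $O(1)$ colliding keys contribute at most $O(\log n)$ bits total since each tiny pointer is deterministically $O(\log n)$ bits. Your route is simpler but yields a failure probability of roughly $n^{-1+o(1)}$ (dictated by the birthday bound), whereas the paper's $|X| = O(1)$ argument can push the failure probability to $n^{-c}$ for any constant $c$ by enlarging the threshold constant. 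Second, for the no-collision case the paper invokes a Chernoff bound for sums of independent geometric random variables (the doubly-exponential tail is dominated by a geometric), while you work directly with the moment generating function, using the fact that $\prob{P \geq j} \leq 2^{-2^{\Omega(j)}}$ makes $\E[e^{\lambda_0 P}] = O(1)$; the two are morally the same computation. One thing worth noting for both your argument and the paper's: the ``conditional independence given distinct containers'' claim is slightly informal, since in the full construction of Theorem~\ref{thm:upper-bound-variable} a key may resolve in the primary load-balancing table (never touching a container), and the set of \emph{other} keys landing in each container is not literally independent across containers. Your write-up is at exactly the same level of rigor as the paper on this point, so it is not a gap relative to the paper's standard.
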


\begin{proof}
With high probability, all of the allocations for $ S $ succeed. This means that we can ignore the case where allocations fail, so when an allocation fails, we shall treat it as contributing a tiny pointer of size $0 $. 

Let $ K $ be the set of keys corresponding to the tiny pointers in $ S $. The easy case is if every key $ k\in S $ hashes to a different container; in this case, we can analyze each container separately to conclude that each tiny pointer $\textsc{Allocate}(k)$ independently has length $O(\log \delta^{-1} + P_k)$ bits, where $\Pr[P_k > \ell] \le 2^{-2^{\Omega(\ell)}}$. Applying a Chernoff bound for sums of independent geometric random variables, we can conclude that $\sum_{k \in K} P_k \le O(\log n)$ w.h.p., and thus that the total number of bits consumed by $S$ is $O(\log n)$. 

What if some of the keys $ k\in K $ hash to the same container as others $ k'\in K $? Then we can no longer analyze the lengths of the resulting tiny pointers independently. Let $ X $ denote the set of such keys $ k $. Since each tiny pointer is deterministically at most $ O (\log n) $ bits, we can complete the proof by establishing that, with w.h.p., $|X| = O(1)$. 

Let $ k_1, k_2, \ldots$ denote the keys in $ K $, and let $ X_i $ be the indicator random variable for the event that $ k_i $ hashes to the same container as one of $ k_1, k_2,\ldots, k_{ i -1 } $. Then $|X| \le 2 \sum_i X_i$. On the other hand, each $X_i$ independently satisfies $\Pr[X_i] \le (i - 1)/n \le |S| / n \le O(\log n / n)$. Thus $\sum_i X_i$ is a sum of independent indicator random variables with total mean $O(\log^2 n / n)$. Applying a Chernoff bound, we conclude that $\sum_i X_i = O(1)$ w.h.p./, which completes the proof.
\end{proof}


\section{Lower Bounds}\label{sec:lower}

In this section we prove that the bounds in \Cref{thm:upper-bound-fixed,thm:upper-bound-variable} are tight. We begin by proving a lower bound for variable-size tiny pointers, since it is then used as part of the proof for the fixed size case.
 
What makes the lower bound for variably sized tiny pointer tricky is that any single tiny pointer might be very small.
For example, the dereference table could have a single special slot that corresponds to the tiny pointer $0$ (for every key),
and then if the dereference table ever wanted to make a single tiny pointer small, it could allocate the special slot.
Thus, our proof treats different types of slots differently: for each slot $j$, we define a potential function $\phi(j)$
indicating how ``useful'' that slot is to a random insertion. The idea is that
insertions that use slots $j$ with small potentials $\phi(j)$ must, on average, have relatively large tiny pointers; but 
insertions that use slots $j$ with large potentials $\phi(j)$ must be rare, since only a relatively small fraction of the slots can have
large potentials, and the number of insertions into them can be bounded by the number of deletions out of them.

\begin{thm}
\label{thm:lower-bound-variable}
Consider a universe $\mathcal{U}$ of keys, where $\mathcal{U}$ is assumed to have a sufficiently large polynomial size. If a dereference table supports variable-sized tiny pointers of expected size $s$ and load factor $1- \delta = \Omega(1)$, then $s = \Omega(\log \delta^{-1})$.
\end{thm}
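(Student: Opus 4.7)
The plan is to analyze an adversarial sequence of operations and show that the total expected tiny-pointer bits must be $\Omega(T\log\delta^{-1})$ across a run of $T$ insertions. For each key $k\in\mathcal{U}$ and slot $j\in[n]$, let $c_k(j)$ denote the length of the shortest bit string $p$ with $\textsc{Dereference}(k,p)=j$. Because there are at most $2^{t+1}$ bit strings of length $\le t$, we get the Kraft-style counting bound $|R_k(t)|:=|\{j:c_k(j)\le t\}|\le 2^{t+1}$. Define the potential of slot $j$ as
$$\phi_t(j):=\Pr_{k\sim\mathcal{U}}[c_k(j)\le t],$$
and observe by swapping sums that $\sum_j\phi_t(j)=\mathbb{E}_k[|R_k(t)|]\le 2^{t+1}$.

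The adversarial sequence first inserts $(1-\delta)n$ uniformly random distinct keys from $\mathcal{U}$ and then alternately deletes a uniformly random present key and inserts a fresh uniformly random key, for many rounds $T$. Since $|\mathcal{U}|$ is a sufficiently large polynomial in $n$, each inserted key is approximately uniform over $\mathcal{U}$ and independent of the current state. Fix $t^*:=\lfloor\log_2\delta^{-1}\rfloor-C$ for a large constant $C$, and call a slot $j$ heavy if $\phi_{t^*}(j)\ge 1/4$ and light otherwise; by Markov's inequality on $\sum_j\phi_{t^*}(j)\le 2^{t^*+1}=O(\delta^{-1})$, there are at most $O(\delta^{-1})$ heavy slots. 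Call an insertion cheap if its tiny pointer has size $\le t^*$ and expensive otherwise. If in expectation at most $T/2$ insertions are cheap, then the expensive ones alone contribute $\Omega(T t^*)=\Omega(T\log\delta^{-1})$ bits, which combined with the hypothesis $\mathbb{E}[\sum_i|p_i|]\le Ts$ yields $s=\Omega(\log\delta^{-1})$.

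I bound cheap insertions by splitting on whether the chosen slot is heavy or light. For heavy slots, I use a charging argument based on flow conservation: every insertion into a slot $j$ must be preceded (up to an initial $O(1)$ offset) by a free-to-occupied transition, and such transitions at $j$ happen only when a deletion frees $j$. Deletions are uniform over the $(1-\delta)n$ present keys, so the per-slot deletion rate is at most $((1-\delta)n)^{-1}$ per round; summing over the $O(\delta^{-1})$ heavy slots gives $O(T/((1-\delta)\delta n))$ expected heavy-slot insertions, which is $o(T)$ for $T\gg n/\delta$. For light slots, the union bound
$$\Pr[\text{light-cheap insertion}\mid F_i]\le\Pr[\exists j\in F_i\cap L:c_{k_i}(j)\le t^*]\le\sum_{j\in F_i\cap L}\phi_{t^*}(j)$$
exploits that $k_i$ is (approximately) uniform and independent of $F_i$; averaging over $i$ and using the stationary occupancy rates $\rho_j$ gives the total light-cheap count $\le T\cdot\sum_{j\in L}\phi_{t^*}(j)(1-\rho_j)$, which with our choice of $C$ and the bound $\phi_{t^*}(j)\le 1/4$ on light slots can be driven below $T/4$.

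The delicate point, and the main obstacle, is that a naive union bound is insufficient by itself because the data structure can adaptively keep the most useful (heavy) slots persistently free, inflating $\Phi(F_i):=\sum_{j\in F_i}\phi_{t^*}(j)$. The resolution couples the two halves of the argument: either the data structure frequently keeps heavy slots free (and the charging/deletion-rate analysis bounds how often this can recur), or it must route each new insertion into a light slot, where the independence of $k_i$ from the state together with $\phi_{t^*}(j)<1/4$ forces a constant-probability expensive outcome. Tuning $C$ so that the sum of the heavy and light contributions is bounded by $T/2$ completes the proof.
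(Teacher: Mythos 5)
Your overall blueprint matches the paper's: a potential function $\phi$ measuring how useful a slot is to random insertions, a heavy/light split, a charging (flow-conservation) argument for heavy slots driven by the uniformly random deletion rate, and a union bound over free light slots. The paper also uses a random alternating insert/delete workload and counts unsafe insertions directly. So the \emph{structure} is right.

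However, the heavy/light cutoff you chose is wrong, and this breaks the light-slot half of the argument. You set the threshold at $\phi_{t^*}(j) \ge 1/4$, giving only $O(\delta^{-1})$ heavy slots. That makes the heavy-slot flow bound trivially small, but it leaves the light-slot union bound unusable: with up to $\delta n$ free slots and each light slot allowed to have $\phi_{t^*}(j)$ as large as $1/4$, the quantity $\sum_{j \in F_i \cap L}\phi_{t^*}(j)$ (or its time-average $\sum_{j\in L}\phi_{t^*}(j)(1-\rho_j)$) can be as large as $\Theta(\delta n)$, which vastly exceeds $1$; and bounding it instead by $\sum_j \phi_{t^*}(j)\le 2^{t^*+1}\approx \delta^{-1}/2^{C}$ is also $\gg 1$ unless $C \gtrsim \log\delta^{-1}$, which would force $t^* < 0$. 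Neither route drives the light-cheap count below $T/4$, so the ``tune $C$'' step does not close. The fix is to move the cutoff down to $\Theta(2^{t^*}/n)$ (the analogue of the paper's $4\ell/n$): then a union bound over the $\delta n$ free light slots gives a small constant $\Theta(\delta\,2^{t^*}) = \Theta(2^{-C})$, while the number of heavy slots becomes $\Theta(n)$, so that the per-round probability that a uniform deletion frees a heavy slot is a constant strictly below $1$, exactly as in the paper's argument. With your present threshold the two halves do not add up to less than $T/2$, and the conclusion does not follow.
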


\begin{proof}
  Let $\mathcal{U}$ be a universe of size $n^c$ where $c$ is a
  sufficiently large constant. Let $\delta < 1/4$. Let $T$ be a
  dereference table with $n$ slots and load factor $1 - \delta$ (i.e.,
 it is capable of allocating up to 
  $(1 - \delta)n$ slots to keys from $\mathcal{U}$ at a time). Moreover,
  suppose that $T$ guarantees an expected tiny-pointer length of at
  most $\mu$. Then we wish to show that
  $$\mu \ge \Omega(\log \delta^{-1}).$$

 To simplify our discussion, we shall think of a key $k \in \mathcal{U}$ as
residing in the location that is allocated to it. Thus allocations correspond to insertions, and
frees correspond to deletions.

  Consider a workload in which the table is initialized to contain
  $(1 - \delta)n$ arbitrary elements, and then we alternate between
  insertions and deletions for $n^{c / 2}$ steps. Each insertion
  selects a random element of $\mathcal{U}$ (with high probability in
  $n$, we never insert an element that is already present), and each
  deletion selects a random element out of those present.

  We treat tiny pointers as taking values in $\mathbb{N}$. If the tiny
  pointer takes value $i$, then it uses $\Omega(\log i)$ bits.
  For each element $x \in \mathcal{U}$, let $h_i(x)$ denote the
  position where $x$ would reside in $T$ if $x$ had a tiny pointer
  with value $i$. Set $\ell = \delta^{- 1} / 32$. For each
  position $j \in [n]$ in the table, define the \defn{potential}
  $\phi(j)$ to be
  $$\phi(j) = \frac{|\{u \in \mathcal{U}, i \in [\ell] \mid h_i(u) = j\}|}{|\mathcal{U}|}.$$

  Call an insertion \defn{safe} if the element $x$ that is inserted is
  inserted into one of positions $h_1(x), \ldots, h_{\ell}(x)$. Call
  an insertion \defn{resource efficient} if the element $x$ that is
  inserted is inserted into a position $j$ satisfying $\phi(j) \le \frac{4 \ell}{n}$.

  The probability that a given insertion
  is both safe and resource efficient is at most
  \begin{align*}
    & \sum_{\substack{\text{empty position }j \in [n] \\  \phi(j) \le  \frac{4 \ell}{n}}} \phantom{f} \sum_{i = 1}^{\ell} \Pr_{x \in \mathcal{U}}[h_i(x) = j] \\
    & = \sum_{\substack{\text{empty position }j \in [n] \\  \phi(j) \le  \frac{4 \ell}{n}}} \phantom{f}  \sum_{i = 1}^{\ell} \frac{1}{|\mathcal{U}|} \sum_{x \in \mathcal{U}} \mathbb{I}_{h_i(x) = j} \\
    & =  \sum_{\substack{\text{empty position }j \in [n] \\  \phi(j) \le  \frac{4 \ell}{n}}} \phantom{f}  \phi(j) \\
    & \le \sum_{\text{empty position }j \in [n]}   \phantom{f}  \frac{4 \ell}{n} \\
    & = \delta n  \frac{4 \ell}{n} \\
    & = \frac{1}{8}.
  \end{align*}
  It follows that the expected number of insertions that are safe and
  resource efficient is at most $n^{c / 2} / 8$.

  Next we bound the expected number of insertions $A$ that are safe
  but not resource efficient. Rather than bound $A$ directly, we
  instead examine the number of \emph{deletions} $B$ where the deleted
  element is deleted from a position $j$ satisfying
  $\phi(j) > \frac{4 \ell}{n}$. Note, in particular, that
  $$A \le B + n.$$

  By the definition of $\phi(j)$, we have that
  $\sum_{j = 1}^n \phi(j) = \ell$. It follows that
  $|\{j \in [n] \mid \phi(j) >  \frac{4 \ell}{n}\}| \le n / 8$. Each random deletion
  therefore has probability at most
  $\frac{n / 8}{(1 - \delta)n} \le 1/4$ of removing an element in a
  position $j$ satisfying $\phi(j) >  \frac{4 \ell}{n}$. Thus $\E[B] \le n^{c / 2} / 4$ which means that
  $$\E[A] \le n^{c / 2} / 4 + n \le n^{c / 2} / 2.$$

  Since the expected number of insertions that are safe and resource
  efficient is at most $n^{c / 2} / 8$ and the expected number of
  insertions that are safe and resource inefficient is at most
  $n^{c / 2} / 2$, the expected number of insertions that are safe is
  at most $\frac{5}{8}n^{c / 2}$. The expected number of insertions
  that are \emph{not} safe is therefore at least
  $\frac{3}{8} n^{c / 2}$. Each unsafe insertion results in a tiny
  pointer of length at least
  $\Omega(\log \ell) = \Omega(\log \delta^{-1})$ bits. Since a
  constant fraction of the insertions are expected to result in a tiny
  pointer of length at least $\Omega(\log \delta^{-1})$, we must have
  $\mu \ge \Omega(\log \delta^{-1})$. 
\end{proof}

Next we prove a lower bound for fixed-sized tiny pointers, which shows that the bound in \Cref{thm:upper-bound-fixed} is tight.
\begin{thm}
\label{thm:lower-bound-fixed}
Consider a universe $\mathcal{U}$ of keys, where $\mathcal{U}$ is assumed to have a sufficiently large polynomial size. If a dereference table supports fixed-sized tiny pointers of size $s$ and load factor $1- \delta = \Omega(1)$, then $s = \Omega(\log \log \log n + \log \delta^{-1})$.
\end{thm}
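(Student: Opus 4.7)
My plan proceeds in two steps, mirroring the additive structure of the bound. The contribution $s = \Omega(\log \delta^{-1})$ follows immediately from Theorem \ref{thm:lower-bound-variable}: any fixed-size scheme with $s$-bit pointers is, trivially, a variable-size scheme with expected pointer size at most $s$, so the variable-size lower bound applies verbatim.

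For the more interesting contribution, $s = \Omega(\log \log \log n)$, which dominates when $\delta = \Omega(1)$, I would view the dereference table as an online dynamic load-balancing scheme: each key is a ball with $d = 2^s$ admissible slots (given by its tiny-pointer values), and each slot is a unit-capacity bin. The requirement that each allocation succeed with probability $1 - 1/\poly(n)$ translates into a very stringent no-overflow guarantee. Using the same random workload as in Theorem \ref{thm:lower-bound-variable} (initialize to load $(1-\delta) n$ with random keys, then alternate random deletions and insertions for $\poly(n)$ steps), the aim is to show that the per-insertion failure probability cannot be driven below $1/\poly(n)$ unless $d = \Omega(\log \log n)$. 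My approach is to analyze any algorithm at two levels, treating it as a $d'$-choice scheme over $n/b$ virtual buckets of size $b = d/d'$, for an integer parameter $d'$ to be chosen. An Azar--Broder--Karlin--Upfal-style max-load argument then gives that the maximum bucket load exceeds $(1-\delta) b + \Omega(\log \log n / \log d')$ with non-negligible probability under the workload; for every allocation to succeed this must remain at most $b$, so $\delta b \geq \Omega(\log \log n / \log d')$, equivalently $d = d' b \geq \Omega(d' \log \log n / (\delta \log d'))$. Optimizing the right-hand side over $d'$ (the optimum is at $d' = \Theta(1)$) yields $d \geq \Omega(\log \log n / \delta)$, and hence $s = \log d \geq \Omega(\log \log \log n)$ when $\delta = \Omega(1)$.

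The main obstacle is formalizing the two-level reduction for arbitrary algorithms---in particular, algorithms whose hash structure does not naturally decompose into buckets of any fixed size. I plan to handle this by grouping the $d$ hash functions into $d'$ classes and arguing that any algorithm is, for the purposes of the max-load lower bound, simulated by a $d'$-choice scheme for some choice of grouping; an averaging argument over partitions should suffice. A secondary subtlety is that classical Azar--Broder--Karlin--Upfal max-load bounds are proved in the insertion-only setting, whereas our workload is dynamic; I would handle this by noting that under the random workload of Theorem \ref{thm:lower-bound-variable}, the set of present keys converges to an essentially uniform random subset of size $(1-\delta) n$, so the static lower bound on max load applies with high probability at some step of the $\poly(n)$-length sequence, and any failing allocation at that step contradicts the w.h.p.\ success requirement.
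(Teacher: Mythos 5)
Your first step is exactly right and matches the paper: $s=\Omega(\log\delta^{-1})$ follows immediately from Theorem~\ref{thm:lower-bound-variable} because a fixed-size pointer is in particular a variable-size pointer of bounded expected length. The second step has genuine gaps, and the paper's actual route is both simpler and more careful about which balls-and-bins lower bound is being invoked.

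The central problem is the ``Azar--Broder--Karlin--Upfal-style'' lower bound of $(1-\delta)b + \Omega(\log\log n/\log d')$ that you attribute to arbitrary $d'$-choice schemes. That $\log d'$-denominator bound is specific to the \emph{Greedy} allocation rule (it is how ABKU show their upper bound for Greedy is tight); it is not a lower bound against \emph{all} sequential placement rules, and indeed V\"ocking's Left$[d]$ scheme beats it. The scheme-agnostic lower bound is V\"ocking's Theorem~2 (stated in the paper as Theorem~\ref{thm:vocking}): for an \emph{arbitrary} sequential rule choosing $d$ bins from an \emph{arbitrary} distribution on $[m]^d$, the max load is $\Omega((\log\log m)/d)$ w.h.p. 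That denominator is $d$, not $\log d$, and the ``arbitrary distribution on $[m]^d$'' clause is exactly what lets you apply it to a dereference table without knowing anything about its internal placement strategy. Once you have that theorem, your entire two-level reduction---partitioning the $d$ hash functions into $d'$ groups, averaging over partitions, optimizing over $d'$---is unnecessary, and it is also the part of your argument that is hardest to make rigorous: there is no reason an arbitrary dereference table's $d$ probe locations should be ``simulated by'' a $d'$-choice scheme over fixed-size buckets, since each probe can land in any slot with no bucket structure to exploit.

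The paper's proof is a direct reduction: set $S=2^s$, coarse-grain the $n$ slots into $m=(1-\delta)n$ bins of $O(1)$ slots each via $h_i(x)=\lfloor (m/n)\Dereference(x,i)\rfloor$, and feed $m$ uniformly random keys into the dereference table by an insertion-only sequence. Because $\Dereference$ depends only on $(k,p,n,\text{coins})$ and not on the table state, the vector $\langle h_1(x),\dots,h_S(x)\rangle$ is a bona fide random probe vector drawn from a fixed distribution on $[m]^S$, so the dereference table \emph{is} a sequential ball-placement rule in V\"ocking's sense with $d=S$. Uniqueness of allocated slots caps every bin at $O(1)$ balls, while V\"ocking gives max load $\Omega((\log\log m)/S)$, forcing $S=\Omega(\log\log n)$ and hence $s=\Omega(\log\log\log n)$. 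Note also that no dynamic (alternating insert/delete) workload is needed here, so your secondary worry about converting a static max-load lower bound into a dynamic one does not arise.
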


It suffices to prove that $s= \Omega(\log \log \log n)$, since we have already shown that $s= \Omega(\log \delta^{-1})$.

The proof re-purposes a classic balls-and-bins lower bound.
Say that a ball-placement
rule is  \defn{sequential} if balls are placed sequentially, without
knowledge of future ball arrivals,  and if balls are never
moved after being placed.

\begin{thm}[Theorem 2 in \cite{Vocking03}]
Suppose that $m$ balls are placed sequentially into $m$ bins using an arbitrary sequential ball placement rule choosing $d$ bins for each ball at random according to an arbitrary probability distribution on $[m]^d$. Then the number of balls in the fullest bin is $\Omega((\log \log m) / d)$ w.h.p.
\label{thm:vocking}
\end{thm}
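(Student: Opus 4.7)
The plan is to adapt the classical Azar--Broder--Karlin--Upfal (ABKU) layered-induction lower bound for $d$-choice load balancing to the more general setting where the $d$ samples are drawn from an arbitrary (possibly correlated) distribution $\pi$ on $[m]^d$. First, by a monotone coupling, I would reduce to the case where the placement rule is greedy, i.e., each ball is placed in the least-loaded among its $d$ sampled bins. Any other rule produces a pointwise-larger load vector, so this reduction is WLOG when proving a lower bound on the max load. It thus suffices to show: for every distribution $\pi$ on $[m]^d$, greedy placement yields max load $\Omega(\log\log m / d)$ w.h.p.

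Next, I would set up the layered induction. Let $S_k(t) = \{j \in [m] : L_j(t) \geq k\}$ be the set of bins with load at least $k$ after $t$ balls have been placed, and let $\beta_k(t) = |S_k(t)|$. The driving recursion is that $\beta_{k+1}$ grows by one each time a ball is placed whose entire $d$-tuple of samples lies in the current $S_k$ (because then greedy is forced to deposit the ball in a bin of load $\geq k$). Starting from $\beta_0 \equiv m$, the goal is to show that $\beta_k \geq 1$ persists w.h.p.\ for all $k = O(\log\log m / d)$.

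The core technical step is a probabilistic lower bound on the event that a $\pi$-sampled $d$-tuple lies entirely inside a ``large'' set $S \subseteq [m]$. For uniform product distributions (as in ABKU), this probability is exactly $(|S|/m)^d$, yielding the recursion $\beta_{k+1} \gtrsim \beta_k^d / m^{d-1}$, which iterates to give $\beta_k \geq 1$ while $k \leq \Omega(\log\log m / \log d)$. With arbitrary $\pi$, correlation can skew this probability downward, but the key observation is that correlation can only help the adversary up to a point: by an averaging/pigeonhole argument across the $m$ bins together with the marginal identity $\sum_j \sum_{i \leq d}\mu_i(\{j\}) = d$, one can always locate level sets $S_k$ on which the ``penetration probability'' $\Pr_\pi[\text{all $d$ coordinates lie in } S_k]$ is bounded below by a power of $|S_k|/m$ that loses at most a factor proportional to $d/\log d$ relative to the product case. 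This slack is precisely what converts the ABKU bound $\log\log m / \log d$ into Vocking's $\log\log m / d$.

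Finally, I would combine the expected-value recursion with standard Chernoff/martingale concentration (using the fact that successive ball placements are conditionally near-independent given the adaptive level sets) to deduce that $\beta_k \geq 1$ w.h.p.\ until $k = \Omega(\log\log m / d)$, which is the claimed bound. The hardest step is the core probabilistic lemma in the previous paragraph: showing that \emph{no} distribution $\pi$ on $[m]^d$ can exploit correlation across its $d$ coordinates to suppress the doubly-exponential growth of $\beta_k$ by more than a $\log d$-vs-$d$ factor per level. This is the heart of Vocking's improvement over ABKU, and is handled in \cite{Vocking03} by a careful potential-function argument tracking the interaction between the marginals of $\pi$ and the evolving level sets $S_k$.
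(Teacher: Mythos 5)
The paper does not prove this theorem: it is imported verbatim as Theorem~2 of V\"ocking~\cite{Vocking03} and used as a black box, so there is no ``paper's own proof'' to compare against. That said, your sketch has substantive problems as a standalone proof attempt.

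The opening reduction is wrong. You claim that ``any other rule produces a pointwise-larger load vector'' than greedy, so a lower bound for greedy transfers to all rules. No such monotone coupling exists: after $t$ placements the total load $\sum_j L_j(t)$ is exactly $t$ for \emph{every} rule, so one load vector can never pointwise dominate another (unless they are equal). Greedy is not an extremal point of this kind, and in the online setting with adversarial sample sequences greedy is not even optimal. Fortunately the reduction is also unnecessary: the observation you use one paragraph later---that when all $d$ sampled bins already lie in $S_k$, the ball must be placed in a bin of load $\geq k$, whatever the rule---is already rule-agnostic. The layered-induction skeleton should be run directly against an arbitrary sequential rule, not against greedy via a fictitious coupling.

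The narrative about how the $\log d$ in ABKU becomes $d$ in V\"ocking is also backwards. For $d$ independent uniform choices, greedy achieves max load $\Theta(\log\log m / \log d)$. Correlated distributions (the Go-Left scheme) \emph{improve} this to $\Theta(\log\log m / d)$, i.e.\ correlation makes the adversary \emph{stronger} for the purpose of a lower bound, not weaker. So the content of V\"ocking's Theorem~2 is that no distribution $\pi$ on $[m]^d$ can push the max load below $\Omega(\log\log m / d)$---the lower bound must survive an adversary with more freedom than ABKU's, not less. Your description of ``correlation can only help the adversary up to a point'' is heading in the right direction, but the story that correlation ``suppresses the growth of $\beta_k$'' by a $\log d$-vs-$d$ factor relative to the product case is not a correct account of why the exponents differ.

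Finally, the genuinely hard part---showing that for every $\pi$ and every placement rule the level-set sizes $\beta_k$ stay $\geq 1$ until $k = \Omega(\log\log m / d)$---is essentially deferred to ``a careful potential-function argument'' in~\cite{Vocking03}. V\"ocking's actual argument is a pruned witness-tree/encoding argument, not a layered-induction recursion of the ABKU type, and the distinction matters: it is not clear that the level-set recursion you outline closes for arbitrary correlated $\pi$ without that machinery. As written, the proposal identifies the right high-level obstruction but neither resolves it nor correctly characterizes how V\"ocking resolves it.
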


We now prove Theorem \ref{thm:lower-bound-fixed}.

\begin{proof}[Proof of \Cref{thm:lower-bound-fixed}]
Assume for contradiction that there exists a dereference table with load factor $1 - \delta = \Omega(1)$ and that supports fixed-size tiny pointers of size $s = o(\log \log \log n)$ bits. Let $n$ be the number of slots in the dereference table, and let $ m = (1 - \delta)n$ be the maximum number of allocations that the dereference table can support at a time; assume without loss of generality that $1/(1 - \delta) \in \mathbb{N}$, so $n$ is a multiple of $m$. Finally, let $S = 2^s$, and observe that, by assumption, $S = o(\log \log n)$---and since $m = \Theta(n)$,  $S = o(\log \log m)$.

Recall that $\mathcal{U}$ is the universe from which the keys are taken. For each key $x \in \mathcal{U}$, define the sequence $h_1(x), h_2(x), \ldots, h_{S}(x) \in [m]$ so that $h_i(x) = \lfloor \frac{m}{n} \Dereference(x, i) \rfloor$. Note that, by the definition of the $\Dereference$ function, the sequence $h_1(x), h_2(x), \ldots, h_{S}(x)$ is a function of only on $x$, $i$, $n$, and the random bits of the dereference table---therefore, the sequence is predetermined by the coin flips, and is independent of the sequence of allocations/deallocations that are performed. Let $R \in [m]^{S}$ be a random variable obtained by selecting $x \in \mathcal{U}$ at random and setting $R = \langle h_1(x), h_2(x), \ldots, h_{S}(x)  \rangle$; and let $\mathcal{R}$ be the probability distribution for $R$.

We will now construct a sequential ball-placement rule for mapping $m$ balls to $m$ bins. Our rule independently assigns each ball a random bin sequence $\langle h_1, h_2, \ldots, h_{S}\rangle \sim \mathcal{R}$ of $S$ bins. Equivalently, we can think of the $m$ balls as being $m$ keys  $x_1, x_2, \ldots, x_m$, where each $x_i$ is selected uniformly and independently at random from $\mathcal{U}$, and each $x_i$ has a bin sequence of $\langle h_1(x), h_2(x), \ldots, h_{S}(x)  \rangle \in [m]^{S}$. 

Since $|\mathcal{U}| \ge \poly(n)$, we have that with high probability in $n$, the $x_i$'s are distinct. Our ball placement rule uses our dereference table to decide where to place balls. To place ball $x_i$ into a bin, we compute $p_i = \Allocate(x_i)$, and we place $x_i$ into the $p_i$-th bin in $x_i$'s bin sequence, which is given by bin 
$$h_{p_i}(x_i) = \left\lfloor \frac{m}{n} \Dereference(x_i, p_i) \right\rfloor \in [m].$$

In summary, we have constructed a sequential ball placement rule that places $m$ balls sequentially into $m$ bins and that chooses a set of $d = S$ bins for each ball according to a probability distribution $\mathcal{R}$ over $[m]^d$. By Theorem \ref{thm:vocking}, we can deduce that the fullest bin contains at least $$\Omega\left((\log \log m) / d\right) = \Omega\left((\log \log m) / S\right)  = \omega(1)$$
balls with high probability in $m$. 

On the other hand, the dereference table guarantees that $\Dereference(x_i, p_i) \in [n]$ is unique for each $i \in [m]$. The number of balls $x_i$ satisfying 
$$\left\lfloor \frac{m}{n} \Dereference(x_i, p_i)\right\rfloor = j$$
for a given $j$ is therefore at most $\frac{n}{m} = O(1)$. This means that the number of balls in any given bin is also $O(1)$. Since the dereference table succeeds with high probability in $n$, we can deduce that there are $O(1)$ balls in the fullest bin with high probability in $n$. This contradicts the fact that the number of balls in the fullest bin is $\omega(1)$, thereby completing the proof by contradiction.\end{proof}


\section{Applying Tiny Pointers to Five Problems in Data Structures}
\label{sec:applications}

In this section we present several applications of tiny pointers to classical problems in data structures: 
\begin{itemize}
    \item Relaxed Retrieval: we show that a slight modification to the classic retrieval problem eliminates the classical lower bound of $\Omega(\log \log n)$ wasted-bits-per-item (\Cref{retrieval}).  
    \item Succinct binary search trees: we give an approach for transforming arbitrary dynamic binary search trees into succinct data structures (\Cref{trees}).
    \item Space-efficient stable dictionaries: we transform any fixed-capacity key-value dictionary into a key-value stable dictionary (\Cref{stable}).
    \item Space-efficient dictionaries: we transform any dictionary with fixed-size values into one which can space-efficiently store variably sized values (\Cref{var}).
    \item An optimal internal-memory stash: we construct a constant-time stash that space-efficiently stores the locations of elements residing in a large external-memory data structure (\Cref{stash}).
\end{itemize}

\subsection{Some General-Purpose Techniques for Using Tiny Pointers}\label{applicationprelim}

Before diving into specific applications, we briefly discuss several preliminary definitions and techniques that will be useful in multiple of the applications.

\paragraph{Key-value dictionaries.}
Several of our applications will perform black-box transformations in order to add new features (namely, stability and variable-sized values) to key-value dictionaries. Formally, a \defn{key-value dictionary} (often just called a \defn{dictionary}) is any data structure that stores key-value pairs (e.g., a hash table or a tree), where each key appears at most once. Typically, a key-value dictionary supports insertions, deletions, and queries, where queries, in particular, return the value associated to some key. Depending on the data structure, additional operations may also be supported, for example successor queries, which return the successor to some key. 

We say that a key-value dictionary uses a \defn{value array} if it designates some contiguous chunk of memory (that can be extended or shrunk over time) whose purpose is to store the values corresponding to keys. If values are $ k $ bits long, then the value array can be viewed as a array of $k$-bit objects. 

In our applications, we will restrict ourselves to dictionaries that store their values in value arrays. For simplicity, we will assume that the dictionary uses a single value array, although all of our results can also easily be applied to a dictionary that makes use of many separately-allocated value arrays (as long as each individual value array is at least $\Omega(\log n)$ bits). The reason that we assume a single value array is because, to the best of our knowledge, all of the known space-efficient key-value dictionaries can easily be implemented in this format, so we choose to avoid introducing unnecessary complication to the results.

\paragraph{How to store value arrays of tiny pointers.}
A theme in several of our applications will be to modify a value array so that, rather than storing values directly, we instead store tiny pointers of some size $k$. Recall, however, that tiny pointers of size $k = o(\log \log \log n)$ bits are not fixed-size, meaning that some tiny pointers may require more than $ k $ bits. Nonetheless, if we are willing to use a value-array that is a constant-factor larger, then there is a simple trick, which we call \defn{chunked pointer storage}, that lets us interact with these variable-length tiny pointers in the same way that we would interact with fixed-length tiny pointers. 

Break the value array into contiguous chunks of $O(\log n / k)$ tiny pointers. By Proposition \ref{prop:aggregatesize}, the total number of bits used by the tiny pointers in each chunk is $ O (\log n) $ with high probability in $ n $. Thus each chunk can be stored in $ O (\log n) $ bits, meaning that the entire value array can be stored in $O (n k)$ bits. 

There is, however, the remaining issue of how to efficiently access and modify the $ j $-th tiny pointer in a given chunk. For each chunk, we can store an additional $ O (\log n) $-bit bitmap where the bits that are set to $ 1 $ indicate the positions in the chunk where tiny pointers begin and end. To efficiently find the $ j $-th tiny pointer, it suffices to find the $j$-th and $j + 1$-th $1$s in the bitmap. (The tiny pointer can then be extracted, modified, and reinserted, in constant time using standard bit manipulation on the bitmap and the chunk.) The problem of finding the $j$-th $1$ in a $O(\log n)$-bit bitmap is easily solved with the method of four Russians \cite{ArlazarovDiYeKr1970}: simply store an auxiliary lookup table of size $\sqrt{n}$ that allows for such queries to be answered in a $(\log n) /2$-bit bitmap in a single lookup, and then perform $O(1)$ lookups to perform such a query in an $O(\log n)$-bit bitmap.

\paragraph{How to dynamically resize a data structure using tiny pointers.}
Several of our applications will also encounter the problem of using tiny pointers in a data structure whose size dynamically changes over time. Of course, this means that we must also dynamically resize dereference tables. Our applications will take the following approach, which we call  \defn{zone-aggregated resizing}.

Consider a value array storing tiny pointers to $k$-bit items in a dereference table (and assume $k$ bits fit in $O(1)$ machine words). Suppose that we wish to maintain the dereference table at a load factor of $ 1 -\Theta (1/k) $, that way the number of bits wasted per item stored is $ O (1) $; note that this means that the tiny pointers in the value array are $\Theta(\log k)$ bits on average. Further suppose, however, that the value array dynamically changes size over time (meaning that elements must be added and removed from the dereference table). For our discussion here, we will assume that the value array itself is dynamically resized to always be at a load factor of at least $\Omega(1)$.

How can we update the dereference table to maintain a load factor of $ 1 -\Theta (1/k) $ while the number of items changes over time? Rather than just using a single dereference table, we use $k $ dereference tables, and add $\Theta (\log k) $ bits to each tiny pointer in order to indicate which dereference table is being pointed into (this doesn't change the asymptotic size of the tiny pointers). We can grow and shrink the capacity (i.e., number of slots) of the dereference tables by either (a) rebuilding the smallest dereference table to double its size, or (b) rebuilding the largest dereference table to halve its size. If we assume for the moment that rebuilding a dereference table takes time proportional to the table's size, then the rebuilds can be de-amortized to take time $O(1 )$ per operation (i.e., per modification to the dereference tables), while maintaining the desired load factor of $ 1 -\Theta (1/k) $.

The problem with rebuilding a dereference table is that all of the tiny pointers into that dereference table become invalidated. The actual construction of the new dereference table can easily be performed in linear time, but how do we update the tiny pointers in the value array? If the value array has size $ n $, then the dereference table being rebuilt consists of only $\Theta (n/k) $ items. We want to identify where the tiny pointers to those items are in the value array in time $\Theta (n/k) $ rather than time $\Theta (n) $. 

The solution to this issue is very simple: break the value array into contiguous \defn{zones} each of which consists of $ k $ values. Within each zone, maintain $k$ linked lists, where the $i$-th linked list contains the tiny pointers that point into the $ i $-th dereference table. Importantly, because these linked lists are within a zone of size $k$, the pointers \emph{within} each linked list only require $\Theta (\log k) $ bits each; thus the linked lists do not asymptotically increase the size of the value array. On the other hand, in order to find all of the tiny pointers for a given dereference table, one can simply look at one linked list in each of the $\Theta (n/k) $ zones, allowing for all $\Theta (n/k) $ of the tiny pointers to be identified in time $\Theta (n/k) $.

For reasons that we shall see later, one of our applications will also require us to use larger zones of size $\poly (k) $ rather than just of size $ k $. For now, we simply remark that using larger zones of size $\poly (k) $ still allows for the linked-list overhead of each tiny pointer to be bounded by $\Theta (\log k) $ bits, and that the time needed to identify the tiny pointers to a dereference table of size $j$ is only
\begin{equation}
    O(j + n / \poly(k)),
    \label{eq:rebuild}
\end{equation}
since the number of linked lists that must be examined is only $O(n / \poly (k)) $.

\subsection{Overcoming the $\Omega(\log \log n)$-Bit Lower Bound for Data Retrieval}\label{retrieval}

Our first application revisits the classic retrieval problem \cite{demaine2005dynamic,alstrup2001optimal, dietzfelbinger2008succinct,static2}, in which a data structure must store a $v$-bit value for each of the $k$-bit keys in some set $S$, and must answer queries that retrieve the value associated with a given key. Here, we address the dynamic version of the problem, where the data structure must support the functions $\Insert(x, y)$ (which inserts a new $x \in [2^k]$ into $S$ and associates it with value $y \in [2^v]$), $\Delete(x)$ (which removes some $x \in S$ from $S$), and $\Query(x)$ (which returns the value $y$ corresponding to $x$ for some $x \in S$), allowing for the set $S$ to grow up to some maximum size $n$. Note that, in the retrieval problem, it is the \emph{user's responsibility} to ensure that every invocation of $\Insert$ is on a key $x \not\in S$, every invocation of $\Query$ is on a key $x \in S$, and every invocation of $\Delete$ is on a key $x \in S$. 

It is known that, if $k = (1 + \Omega(1))\log n$ bits, then any solution to the dynamic retrieval problem must use at least $nv + \Omega(n \log \log n)$ bits of space \cite{alstrup2001optimal}, regardless of the time complexity, and even if $v = 1$. It is further known that, if $k = \Theta(\log n)$ and $v = O(\log n)$, then the $nv + \Theta(n \log \log n)$ space bound can be accomplished by a randomized constant-time data structure \cite{demaine2005dynamic}. 

We will now show that, by slightly relaxing the retrieval problem, we can use tiny pointers to obtain significantly better space bounds. In the \defn{relaxed retrieval problem}, the insertion/deletion/query operations are modified to work as follows. The operation $\Insert(x, y)$ now returns a \defn{tiny retriever} $r$ which the user must remember. In the future, if the user wishes to query $x$ (and they have not yet deleted $x$), they call $\Query(x, r)$ to obtain the value $y$. Finally, if the user ever wishes to remove $x$ from the set $S$, then the user calls $\Delete(x, r)$. 

The role of the tiny retriever is similar to that of a tiny pointer---it acts as a hint to assist the data structure. Unlike for tiny pointers, however, the pair $(x, r)$ does not have to fully encode the position of $y$; instead, query operations $\Query(x, r)$ can use auxiliary metadata, beyond just $x$ and $r$, to determine the value $y$. We shall now see that this distinction is very important, allowing for us to do better than \emph{both} the lower bound for the retrieval problem \cite{alstrup2001optimal} and our lower bound for the tiny-pointer problem (Theorem \ref{thm:lower-bound-variable}). At the same time (almost paradoxically), it is our construction for variable-size tiny pointers that allows for us to get around both of these lower bounds.

\begin{thm}
Consider the relaxed retrieval problem with $k$-bit keys, $v$-bit values, and a maximum capacity of $n$ key/value pairs. Let $r \in [\log^* n]$ be a parameter. There is a solution to the relaxed retrieval problem that uses tiny retrievers of expected size $O(1)$, and that with high probability in $n$: takes constant time per query, takes $O(r)$ time per insertion/deletion, and uses total space $nv + O(n \log^{(r)} n)$ bits. 

Furthermore, if $\log^{(r)} n = \omega(1)$ and $v \le \frac{\log n}{\log^{(r)} n}$, then the space consumption becomes $nv + O(n)$ bits.
\label{thm:retrieve}
\end{thm}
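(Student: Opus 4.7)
The plan is to reduce the relaxed retrieval problem to an ordinary dynamic key-value dictionary problem whose key universe has been shrunk to size $\Theta(n)$. The main ingredient is the $O(1)$-expected-size variable-size tiny-pointer construction of Theorem~\ref{thm:upper-bound-variable}, which is precisely what makes the key-universe reduction possible with $O(1)$-bit \relaxedpointers.

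First I would instantiate a dereference table $T$ via Theorem~\ref{thm:upper-bound-variable} with load factor $1 - \delta$ for some constant $\delta > 0$ and store-slot width $q = O(1)$. The table uses $O(n)$ bits of space (store plus metadata), supports constant-time allocation, dereference, and free w.h.p., and returns tiny pointers of expected size $O(1)$ bits. These tiny pointers will serve directly as the \relaxedpointers handed back to the user: on $\Insert(x, y)$ we let $p = \Allocate(x)$ and return $p$; on $\Query(x, p)$ we recompute a slot index $i = \Dereference(x, p) \in [N]$ for some $N = \Theta(n)$; and on $\Delete(x, p)$ we recover $i$, remove the corresponding entry from the dictionary defined below, and then call $\Free(x, p)$. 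The uniqueness guarantee of dereference tables ensures that the slot indices $i$ assigned to distinct present keys are always distinct, so they behave as ``reduced keys'' of an ordinary dictionary problem.

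Next I would maintain alongside $T$ a dynamic key-value dictionary $D$ whose key universe is $[N]$, whose values are $v$ bits wide, and whose capacity is $n$. Every relaxed-retrieval operation is forwarded to $D$ using $i = \Dereference(x, p)$ as the dictionary key; since distinct present $x$'s map to distinct $i$'s, $D$ never sees duplicates. Relaxed retrieval is thereby reduced to the ordinary dictionary problem on a universe of size $\Theta(n)$, for which highly space-efficient solutions are known~\cite{supersuccinct}. Plugging in a dictionary $D$ that supports constant-time queries, $O(r)$-time modifications, and space $\log\binom{N}{n} + nv + O(n\log^{(r)} n) = nv + O(n\log^{(r)} n)$ bits -- the entropy term collapses to $O(n)$ because $N = \Theta(n)$ -- yields the main claim: total space $nv + O(n\log^{(r)} n)$, constant-time queries, and $O(r)$-time updates, all w.h.p.

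For the refined bound in the regime $v \le \log n/\log^{(r)} n$, I would replace $D$ with a small-value dictionary that packs $\Theta(\log n/v)$ values into $\Theta(\log n)$-bit blocks and handles block-level operations in $O(1)$ time via the method of four Russians, achieving space $nv + O(n)$ bits even at constant $r$; this is the same small-value packing phenomenon underlying the variable-size-value techniques of Section~\ref{var}. The main obstacle in the whole proof is selecting and parameterizing the black-box dictionary so that (a) queries are truly constant time, independent of $r$; (b) the advertised $O(r)$-versus-$O(n\log^{(r)} n)$ time/space tradeoff holds; and (c) the entropy term $\log\binom{N}{n}$ is absorbed into the redundancy. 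Everything else -- the operation translation, the composition of constant-time tiny-pointer calls with the dictionary's calls, and the high-probability bounds -- follows directly from Theorem~\ref{thm:upper-bound-variable}.
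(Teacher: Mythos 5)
Your proposal is correct and follows essentially the same route as the paper's own proof: use the variable-size tiny pointers of Theorem~\ref{thm:upper-bound-variable} to compress the key universe to $[\Theta(n)]$ via $\Dereference(x, p)$, and then plug these reduced keys into the space-optimal dictionary of~\cite{supersuccinct}, whose entropy term $\log\binom{\Theta(n)}{n}$ collapses to $O(n)$ by Stirling. The only superficial deviation is that you allocate a store of constant slot width while the paper dispenses with the store entirely (storing nothing in the \refarray and only using the dereferenced index), and that you sketch the small-value packing idea for the $v \le \log n / \log^{(r)} n$ regime where the paper simply re-cites~\cite{supersuccinct} for that refined bound; neither difference changes the argument.
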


The above theorem comes with an interesting tradeoff curve: constant-time insertions/deletions can achieve a space consumption of, for example, $nv + O(n \log \log \log \log \log n)$ bits, and $O(\log^* n)$-time insertion/deletions can achieve space consumption $nv + O(n)$ bits. Moreover, if $v$ is slightly sub-logarithmic, then even constant-time insertions/deletions can achieve $nv + O(n)$ bits. 

We remark that the tiny retrievers in Theorem \ref{thm:retrieve} are, in fact, variable-size tiny pointers as constructed in Theorem \ref{thm:upper-bound-variable}. They therefore satisfy the doubly-exponential tail inequality given by Theorem \ref{thm:upper-bound-variable}, as well as the concentration inequality given by Proposition \ref{prop:aggregatesize}.

\begin{proof}
We shall make use of Theorem \ref{thm:upper-bound-variable} to construct a \reftable $T$ with $2n$ slots. What makes our application of Theorem \ref{thm:upper-bound-variable} unusual, however, is that we will not store anything in the \refarray (if fact, we need not even allocate space for it). Instead, we will take advantage of the fact that $\Dereference(x, p)$ is a $(1 + \log n)$-bit number that has been uniquely allocated to $x$. 

To implement the operation $\Insert(x, y)$, we call $\Allocate(x)$ to obtain a tiny pointer $p$ of expected size $O(1)$ (note that $p$ will also be our tiny retriever). Define $s_x = \Dereference(x, p)$ to be the slot number in $[2n]$ allocated to $x$. The main property that we will exploit is that $s_x \neq s_{x'}$ for all other $x' \in S$. To complete the $\Insert$ operation, we insert the key/value pair $(s_x, y)$ into a succinct hash table $H$ (whose specifications we will describe later). Queries and deletes are then implemented as follows: $\Query(x, p)$ returns $H[\Dereference(x, p)]$; and $\Delete(x, p)$ deletes key $\Dereference(x, p)$ from $H$ and calls $\Free(x, p)$ on the \reftable $T$.

The correctness of the data structure follows from the fact that, for each $x \in S$ with tiny retriever $p$, $\Dereference(x, p)$ is unique. The \reftable uses space only $O(n)$ bits and supports constant-time operations (with high probability). Thus, to prove the theorem, it remains to analyze the hash table $H$. 

We construct $H$ using the most space-efficient known construction for a hash table \cite{supersuccinct}. If $H$ is storing up to $n$ keys from a universe $U$ and values are $v$ bits, then it supports the following guarantees with high probability: queries are constant-time, insertions/deletions take time $O(r)$, and the total space consumption is 
$$\log \binom{|U|}{n} + nv + O(n\log^{(r)} n)$$
bits. If, in addition, $\log^{(r)} n = \omega(1)$ and $v \le \frac{\log n}{\log^{(r)} n}$, then the space becomes $\log \binom{|U|}{n} + nv + O(n)$ bits.

Our use of tiny pointers ensures that the keys in $H$ are from the very small universe $U = [2n]$. So 
$$\log \binom{|U|}{n} = \log \binom{2n}{n} = O(n)$$
by Stirling's approximation. This completes the proof of the theorem.
\end{proof}

\paragraph{A remark on resizing.} 
In Subsection \ref{trees}, we shall see an application of tiny retrievers to the problem of constructing succinct binary search trees. In this application, we will want to have two relaxed-retrieval data structures whose sizes sum to at most $n$. Here, we can take advantage of the fact that the hash table $H$ used above actually offers a dynamically-resizing guarantee: if, at any given moment, the hash table has size $m$, then it uses space at most
$$\sqrt{n} + \log \binom{2n}{m} + mv + O(m\log^{(r)} n),$$
with high probability in $n$. The full retrieval data structure (consisting of the hash table $H$ and the dereference table $T$) therefore uses space at most 
\begin{align*}
\log \binom{2n}{m} + mv + O(n + m\log^{(r)} n) .
\end{align*}
By Stirling's inequality, this is at most
\begin{align*}
    & m \log n - m \log m + mv + O(n + m\log^{(r)} n).
\end{align*}    
Thus, if we have two relaxed-retrieval data structures, one of size $m_1 \le n$ and one of size $m_2 \le n$, and $m = m_1 + m_2 = \Theta(n)$, then their total space consumption will be at most
\begin{align*}
& (m_1 + m_2) \log n - m_1 \log m_1 - m_2 \log m_2 + (m_1 + m_2) v + O((m_1 + m_2)\log^{(r)} n) \\
=\, &  m \log n - m_1 \log m_1 - m_2 \log m_2 + m v + O(m\log^{(r)} n).
\end{align*}
By Jensen's inequality,  $m_1 \log m_1 + m_2 \log m_2 \ge (m_1 + m_2) \log \frac{m_1 + m_2}{2} = m \log \frac{m}{2} = m \log n - O(n)$. Thus the total space is at most
\begin{align*}
& m \log n - (m \log n - O(n)) + m v + O(m\log^{(r)} n) \\
& = m v + O(m\log^{(r)} n) \\
& = m v + O(m\log^{(r)} m) \\
\end{align*}
This, of course, is the same bound that we get for a single relaxed-retriever data structure of size $m$.

The reason that this matters is that it allows for a simple way to perform dynamic resizing: every time that the size $m$ of a data structure changes by a factor of two, we move all of the elements in the current relaxed-retrieval data structure $D_1$ into a new relaxed-retrieval data structure $D_2$ (parameterized as having capacity $n = \Theta(m)$ based on the new value of $m$). As we move elements from $D_1$ to $D_2$, the total space consumption of $D_1$ and $D_2$ will continue to be $ m v + O(m\log^{(r)} m)$ bits. Note that, to move an element from $D_1$ to $D_2$, we will need to generate a new tiny retriever for that element (since we are deleting the element from $D_1$ and inserting it into $D_2$). In our binary-search-tree application, this will be easy to do by simply running through all of the elements and relocating them one by one. Furthermore, since the work of constructing $D_2$ can be spread across $\Theta(n)$ operations, it can be achieved at a cost of $O(r)$ per insertion/deletion.

\subsection{Succinct Binary Search Trees}\label{trees}

Our second application is a black-box approach for transforming dynamic binary search trees into succinct data structures. If there are $ n $ elements in the succinct search tree, each of which is $k$ bits long, then the size of the succinct search tree will be at most $nk + O(n + n\log^{(r)} n )$ bits, where $r > 0$ is an arbitrary parameter. Path traversals in the tree incur only a constant-factor overhead, and modifications to the tree incur only an $O(r)$-factor overhead. 

An advantage of our approach is that it can be applied to rotation-based search trees. This includes, for example, red-black trees \cite{GuibasSe78}, splay trees \cite{SleatorTa85-splay}, etc. If the dynamic-optimality conjecture \cite{SleatorTa85-splay} is true, meaning that the splay tree is dynamically optimal, then our succinct splay tree is also dynamically optimal when $r = O(1)$.

\begin{thm}
Consider any binary search tree storing $a$-bit keys and $b$-bit values, where every node is associated with a distinct key, and where each node has pointers to its children. For any $r > 0$, the tree can be implemented to offer the following guarantees with high probability in the tree size $n$: the tree takes space $na + nb + O(n + n\log^{(r)} n)$ bits, traversals from parents to children take time $O(1)$, and modifications to the tree (i.e., adding or removing an edge) take time $O(r)$. 
\label{thm:trees}
\end{thm}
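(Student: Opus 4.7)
The plan is to replace the $\Theta(\log n)$-bit child pointers in a standard binary search tree with tiny retrievers from the relaxed retrieval structure of Theorem~\ref{thm:retrieve}. The key observation is that the tree is already going to store each child's $a$-bit key somewhere, so packaging that key together with an $O(1)$-expected-bit retriever adds only $O(1)$ extra expected bits per edge on top of what the tree was already going to pay. First I would set up a relaxed retrieval structure $R$ with capacity $\Theta(n)$ that stores the $b$-bit value of each node keyed by its $a$-bit key. By Theorem~\ref{thm:retrieve}, $R$ uses $nb + O(n\log^{(r)} n)$ bits, supports $O(1)$-time queries and $O(r)$-time insertions and deletions, and returns an expected-$O(1)$-bit retriever $r_k$ for each inserted key $k$. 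The dereference table inside the proof of Theorem~\ref{thm:retrieve} also assigns each inserted key a unique slot index $s_k \in [2n]$ that can be computed from $(k, r_k)$ in constant time.

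Next I would maintain a ``child-record store'' $C$ indexed by the slot space $[2n]$. For each occupied slot $s_k$, $C$ holds a variable-length record consisting of $2$ bits that indicate which of $k$'s children exist and, for each existing child, that child's $a$-bit key together with its retriever into $R$. I would implement $C$ using the chunked pointer storage technique from Section~\ref{applicationprelim}: partition $[2n]$ into contiguous chunks, pack records within each chunk, and keep a small per-chunk bitmap along with four-Russians lookup tables so that reading, writing, inserting, or deleting the record at any slot takes constant time. Parent-to-child traversal then consists of one $C$-lookup to obtain (child's key, child's retriever), followed by one $R.\mathrm{Query}$ call for the child's value, both in $O(1)$ time. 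An edge insertion (resp.\ deletion) calls $R.\mathrm{Insert}$ (resp.\ $R.\mathrm{Delete}$) in $O(r)$ time and performs a constant-time edit to the appropriate chunk of $C$; a rotation is a constant number of edge modifications and therefore also costs $O(r)$. Dynamic resizing of $R$ and $C$ as $n$ changes by constant factors is handled by a de-amortized rebuild in the style sketched after Theorem~\ref{thm:retrieve}.

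For the space bound, $R$ contributes $nb + O(n\log^{(r)} n)$ bits. There are exactly $n-1$ child entries in $C$ (one per non-root node), each of expected size $a + O(1)$ bits: the $a$-bit keys deterministically account for $(n-1)a$ bits, while Proposition~\ref{prop:aggregatesize} applied chunk-by-chunk with a union bound bounds the total number of retriever bits in $C$ by $O(n)$ w.h.p. Per-chunk bookkeeping and the presence flags add a further $O(n)$ bits. Summing yields $na + nb + O(n + n\log^{(r)} n)$ bits, matching the target. The hardest step will be implementing $C$ as a dynamic packed store that supports worst-case constant-time access and update while staying within $na + O(n)$ bits even in the presence of rare chunks whose retrievers happen to be unusually long. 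The doubly-exponential tail on retriever sizes from Theorem~\ref{thm:upper-bound-variable} makes such chunks extremely rare, so the standard way forward is a two-level scheme in which each chunk reserves a fixed number of bits that suffices w.h.p.\ and the occasional oversized chunk is spilled into an auxiliary side store of $o(n)$-bit total capacity. Once this bookkeeping infrastructure is in place, the rest of the argument reduces to chaining together the already-proved guarantees of Theorem~\ref{thm:retrieve} and Proposition~\ref{prop:aggregatesize}.
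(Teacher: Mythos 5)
Your overall strategy matches the paper's: Theorem~\ref{thm:retrieve} supplies expected-$O(1)$-bit retrievers, the dereference-table slot index is used to attach per-node child information, chunked storage packs the variable-length retrievers, and resizing is handled as in the remark after Theorem~\ref{thm:retrieve}. However, one design choice opens a genuine gap. The paper keys each node's relaxed-retrieval entry by the \emph{edge} from its parent (the parent's key concatenated with a direction bit), so the child's $a$-bit key and $b$-bit value are recovered as the \emph{fixed-length} $(a+b)$-bit value of that entry; the only variable-length objects are the two expected-$O(1)$-bit retrievers per node, which is exactly what chunked pointer storage and Proposition~\ref{prop:aggregatesize} are equipped to pack into $O(n)$ bits. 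You instead key $R$ by the node's own key and push the children's $a$-bit keys into the slot-indexed store $C$. Now $C$ must hold, for each of the $2n$ slots, a record containing zero, one, or two $a$-bit keys plus retrievers, and neither tool you invoke covers this: reserving fixed width per slot costs roughly $4na$ bits (two key fields for each of $2n$ slots), while packing variable-width records requires (i) a concentration argument that the nodes whose slots land in a common chunk do not collectively have too many children---Proposition~\ref{prop:aggregatesize} concerns retriever lengths and says nothing about this---and (ii) worst-case constant-time in-chunk updates even though a chunk holding several records of size up to $2a+O(1)$ bits spans $\omega(1)$ machine words once $a = \Omega(\log n)$. Your ``hardest step'' paragraph flags only the rare-long-retriever issue and proposes a spill store for that, which does not address the key-count variability that actually dominates the record sizes.

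The gap is repairable, but not by the cited lemmas alone: either adopt the paper's edge-keying so that child keys and values become fixed-length values inside the relaxed-retrieval structure and only retrievers live in the slot array, or treat each node's child record as a genuinely variable-length value and bring in machinery of the kind developed in Theorem~\ref{thm:var}. As written, the implementation of $C$ within $na + O(n)$ bits with constant-time access and update is the load-bearing step of your space and time bounds, and it is not established.
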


We remark that, information theoretically, the tree use consume $n(a + b)$ bits of space. And since the keys are distinct, $na = \Omega(n \log n)$. Thus, for any $r > 1$, the search tree above is succinct.

\begin{proof}
We will make use of our solution to the relaxed retrieval problem (Theorem \ref{thm:retrieve}). However, the key/value pairs $(x, y)$ that we will store in the relaxed-retrieval data structure will be a bit unusual in that $y$ will take the following form: $y$ contains $x$'s $b$-bit value, along with two tiny retrievers $r_1$ and $r_2$. Since $r_1$ and $r_2$ are themselves variable-length tiny pointers of expected size $O(1)$, this means that $y$ is also variable-length. On the other hand, the relaxed-retrieval data structure is designed for \emph{fixed-length} values. Fortunately, we can store the tiny retrievers $r_1$ and $r_2$ with the following method. Recall that, in our construction for the relaxed retrieval problem, we create a \reftable with $2n$ slots, but we do not actually store anything in the \reftable's \refarray. We now change this so that the \refarray is a value array with $2n$ slots that stores the tiny retrievers $r_1$ and $r_2$ for each item in the \reftable (so, if $p$ is the tiny pointer for $x$, then $r_1, r_2$ are in the $\Dereference(x, p)$-th position of the value array). Using the chunked pointer storage technique, we can ensure that the total size of the value array is $O(n)$ bits, even though the pointers that it stores are variable length.

We now describe our encoding of the binary search tree: Each node in the search tree stores the key-value pair $(x, y)$ corresponding to that node along with two tiny retrievers $r_1$ and $r_2$. The tiny retriever $r_1$ is for the left child and uses $x \circ 0$ as its key (so $\Query(x \circ 0, r_1)$ returns the left child of $x$), and the tiny retriever $r_2$ is for the right child and uses $x \circ 1$ as its key (so $\Query(x \circ 1, r_1)$ returns the right child of $x$). Note that, if the left child (resp. right child) does not exist, then we simply set $r_1$ (resp. $r_2$) to null. 

Let us begin by assuming that our binary search tree has a fixed capacity of $n$ keys/values, so we can use a relaxed-retrieval data structure with capacity $n$. Then our relaxed-retrieval data structure uses $na + nb + O(n + n\log^{(r)} n)$ bits. Navigating from a node to its child takes time $O(1)$ (since it requires a single query to the relaxed-retrieval data structure) and adding/removing an edge $(x, z)$ from a node $x$ to a child $z$ takes time $O(r)$, with high probability, since it requires only a single insert/delete to the relaxed-retrieval data structure; importantly, if $z$ is the root of some subtree, the act of setting $z$ to be $x$'s child \emph{does not} require any nodes besides $z$ to inserted/deleted in the relaxed-retrieval data structure. 

Finally, let us modify our data structure so that it dynamically resizes as a function of the current number $n$ of key/value pairs. For this, we can simply use the resizing approach outlined in Section \ref{retrieval}. Every time that $n$ changes by a constant factor, we rebuild the relaxed-retrieval data structure to have capacity $\Theta(n)$ for the new value of $n$. (Note that this does not require us to rebuild the tree; it just requires us to update the tiny retrievers used in each node.) For each relaxed retriever in the binary search tree, we can store an extra bit indicating which of the two relaxed-retrieval data structures it uses---this preserves correctness. As observed in Section \ref{retrieval} the act of moving items from the old relaxed-retrieval data structure to the new one does not violate our desired space guarantee: the total number of bits used by our search tree remains $na + nb + O(n + n\log^{(r)} n)$ at all times.  And, by spreading the work of rebuilding the relaxed-retrieval data structure across $\Theta(n)$ operations, we maintain the property that each edge insertion/deletion takes time $O(r)$. Thus the theorem is proven.
\end{proof}

\subsection{Space-Efficient Stable Dictionaries}\label{stable}
Using tiny pointers, we give a black-box approach for transforming any fixed-capacity key-value dictionary into a  \defn{stable} dictionary, meaning that the position in which a value is stored never changes after the value is inserted. If the original dictionary stored $ k $-bit values, then the new dictionary also stores $ k $-bit values, and uses at most $O\left(\log k\right) $ extra bits of space per value than the original data structure.

\begin{thm}
Consider a fixed-capacity key-value dictionary data structure $T$ that stores its values in a value array of some size $m$. Let $ v $ denote the size of each value in bits.

It is possible to construct a new data structure $T'$ with the same operations and asymptotics (with high probability) as $T$, but with the additional property that $ T' $ is stable. Moreover, the total space consumed by $ T' $ is guaranteed (with high probability in $ m $) to be at most $ O (m \log v)$ more bits than $T$.
\label{thm:stable}
\end{thm}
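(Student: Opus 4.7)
The plan is to add a level of indirection: replace each value in $T$'s value array by a tiny pointer into a fresh \reftable $D$ that stores the actual values. Stability is immediate from the \reftable semantics, since $\Dereference(k, p)$ depends only on $k$, $p$, and the table's random bits---none of which change after the insertion of $k$---so the physical slot of each value in $D$ is frozen for the value's lifetime, even though $T$ may freely permute its internal tiny pointers during rotations, rehashes, or rebalancings.

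Concretely, I would instantiate Theorem~\ref{thm:upper-bound-variable} with $\delta = \Theta(1/v)$ to get a \reftable $D$ of $m/(1-\delta) = m + O(m/v)$ slots of $v$ bits, with expected tiny-pointer size $O(\log \delta^{-1}) = O(\log v)$ and constant-time operations w.h.p. The \refarray of $D$ consumes $mv + O(m)$ bits. I would then realize $T$'s value array by chunked pointer storage from Section~\ref{applicationprelim}: partition it into chunks of $\Theta(\log m / \log v)$ tiny pointers each, with a $O(\log m)$-bit bitmap per chunk marking tiny-pointer boundaries, and a single $\sqrt{m}$-sized Four-Russians lookup table shared across chunks so that the $j$-th pointer in a chunk can be read, overwritten, or resized in $O(1)$ time. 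Every write $T$ would make to position $i$ of its value array for key $k$ is intercepted: we call $p \gets D.\Allocate(k)$, place the value into $D$ at slot $\Dereference(k, p)$, and store $p$ at chunk position $i$. Reads at position $i$ for key $k$ fetch $p$ from the chunk and return $D[\Dereference(k, p)]$; deletions additionally invoke $D.\Free(k, p)$. A relocation by $T$ from position $i$ to $i'$ simply moves $p$ in the chunked array, leaving the underlying value untouched in $D$---which is precisely stability.

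The main obstacle, and essentially the only nontrivial point, is proving that the variable-length tiny pointers stored in $T$'s value array fit within $O(m \log v)$ bits with high probability under an arbitrary operation sequence. This is exactly what Proposition~\ref{prop:aggregatesize} buys us: applied to each chunk of $\Theta(\log m / \log v)$ tiny pointers (each with $\log \delta^{-1} = \Theta(\log v)$), the total size of the pointers in that chunk is $O(\log m)$ bits w.h.p.\ in $m$. A union bound over the $O(m \log v / \log m)$ chunks yields $O(m \log v)$ total bits for the chunked array with high probability, and summing the $O(m)$ load-factor slack of $D$ plus $O(m)$ bits of metadata (including the per-chunk bitmaps, which are already absorbed into the chunk budget) gives an overall overhead of at most $O(m \log v)$ bits beyond $T$, as required. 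Time complexity is preserved because each operation of $T$ is augmented with $O(1)$ expected work for the tiny-pointer allocate/dereference/free plus $O(1)$ work for the chunked read/write, and all of these bounds hold w.h.p.\ by Theorem~\ref{thm:upper-bound-variable}.
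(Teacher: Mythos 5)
Your proposal is correct and follows essentially the same route as the paper: replace $T$'s value array with an array of tiny pointers (keyed by the dictionary keys) into a dereference table of $(1+\Theta(1/v))m$ slots at load factor $1-\Theta(1/v)$, yielding $O(\log v)$-bit pointers, stability because the dereferenced slot never moves even when $T$ relocates the pointer, and an $O(m\log v)+O(m)$-bit overhead. Your explicit invocation of Proposition~\ref{prop:aggregatesize} with per-chunk union bounds is just a spelled-out version of the chunked-storage technique the paper cites for the case of small pointers, so there is no substantive difference.
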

\begin{proof}
To construct $ T' $, we simply replace the value array for $ T $ with an array of $ m $ tiny pointers, each of size $\Theta (\log v) $ bits. (If  $\log v <\log\log\log n $, then the chunked-storage technique can be used to handle the fact that different tiny pointers have different sizes.) The tiny pointers point into a dereference table of size $(1 + 1/v)m$ that stores the $m$ $ v $-bit values. (So the load factor is $1 - \Theta(1/v)$.) If a tiny pointer points at the value $y$ corresponding to a key $x$, then the tiny pointer uses $ x $ as its key. This ensures stability, since even if the location in which the tiny pointer is stored changes, the tiny pointer does not have to change  (and the value $ y $ does not have to move). 

The array of tiny pointers consumes $ O (m\log v) $ space. Whereas the value array in $ T $ consumes $ mv $ bits, the dereference table in $ T' $ consumes $ (1+1/v) mv $ bits, which is only $ O (m) $ more bits then used in $ T $. Thus the claim on space efficiency is proven. Since tiny pointers only add constant time per access/modification of the value, the asymptotics are (with high probability in $ m $) the same for both $ T $ and $ T' $.
\end{proof}

\subsection{Space-Efficient Dictionaries with Variable-Size Values}
\label{var}
Our fourth application is a black-box approach for transforming any key-value dictionary (designed to store fixed-size values) into a dictionary that can store different-sized values for different keys. The resulting data structure offers the following remarkable guarantee on space efficiency. Let $\log^{(r)} n = \log \log \cdots \log n$ denote the $ r $-th iterated logarithm of $ n $. Let $r$ be a positive constant of our choice, and let $m$ be the number of entries in the value array used by the original dictionary (at some given moment). The new dictionary, which allows for values to be arbitrary lengths, replaces the value array for $ T $ with a data structure that consumes at most
$$O(m \log^{(r)} m) + \sum_{i = 1 }^{m} (v_i + O(\log v_i))$$
bits, where $ v_1, v_2, \ldots, v_m $ denote the lengths in bits of the values being stored. 

\begin{thm}
Consider a key-value dictionary data structure $T$ that stores its values in a value array, and that is designed to store fixed-length keys. Let $ r $ be a positive constant of our choice. 

It is possible to construct a new data structure $T'$ with the same operations and asymptotics (with high probability) as $T$, but with the additional property that $ T' $ can store values of arbitrary lengths (up to $O(1)$ machine words).

At any given moment, if $ T $ would have been using a value array of size $ m $, and the machine word size $w$ satisfies $w \le m^{o(1)}$, then the total space consumed by $ T' $ to implement the value array is guaranteed (with high probability in $ m $) to be at most 
\begin{equation}
O(m \log^{(r)} m) + \sum_{i = 1 }^{m} (v_i + O(\log v_i))
\label{variablelength}
\end{equation}
bits, where $ v_1, v_2, \ldots$ are the sizes of the values.
\label{thm:var}
\end{thm}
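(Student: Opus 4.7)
The plan is to group values by size class and use a separate relaxed-retrieval data structure (Theorem~\ref{thm:retrieve}) for each class. Specifically, for each $c \in \{0, 1, \ldots, \lceil \log w \rceil\}$, I instantiate a relaxed-retrieval data structure $R_c$ that stores padded $2^c$-bit values, keyed by position indices $p$ in the virtual value array of the original dictionary $T$. In place of the original value array, I maintain a meta array whose entry at position $p$ holds the pair $(c_p, r_p)$, where $c_p = \lceil \log v_p \rceil$ is the size class of the value currently at $p$ and $r_p$ is the tiny retriever returned by the corresponding $\textsc{Insert}$ call on $R_{c_p}$.

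Each read, write, or removal at position $p$ in the value array of $T$ translates into a constant-time lookup of $(c_p, r_p)$ in the meta array, followed by the corresponding operation on $R_{c_p}$. This preserves the asymptotic time complexity of $T$ (up to the $O(r)$ overhead of insertions and deletions in Theorem~\ref{thm:retrieve}), while letting $T'$ handle values of arbitrary length.

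For the space analysis, each $R_c$ contributes $m_c \cdot 2^c + O(m_c \log^{(r)} m_c)$ bits, where $m_c$ is its current occupancy. Since padding at most doubles a value's length, $\sum_c m_c \cdot 2^c \leq 2 \sum_i v_i$; and since $\log^{(r)} m_c \leq \log^{(r)} m$ and $\sum_c m_c = m$, $\sum_c m_c \log^{(r)} m_c \leq m \log^{(r)} m$. In the meta array, $c_p$ can be encoded with a variable-length prefix code in $O(\log c_p) = O(\log \log v_p)$ bits, and $r_p$ has expected size $O(1)$ bits. Using the chunked-pointer-storage technique with chunks of appropriate size, Proposition~\ref{prop:aggregatesize} gives concentration on the per-chunk tiny-retriever sizes, so a chunk-level bitmap queried via the four-Russians method enables constant-time access to any entry. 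Summing over positions yields a meta-array cost of $O(m) + \sum_i O(\log v_i)$ bits.

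The main obstacle is dynamic maintenance: the per-class occupancies $m_c$ shift unpredictably as $T$ processes operations, and rebuilding an $R_c$ invalidates all of its tiny retrievers. For this, I apply the zone-aggregated resizing scheme (Section~\ref{applicationprelim}) independently to each $R_c$: whenever $m_c$ changes by a constant factor, $R_c$ is rebuilt at its new capacity, de-amortized over $\Theta(m_c)$ subsequent operations, and the affected tiny retrievers are located in the meta array via zone-level linked lists and refreshed. Because $w \leq m^{o(1)}$ there are only $o(\log m)$ active size classes, so the fixed per-class bookkeeping remains subdominant to the $O(m \log^{(r)} m)$ term, and combining all contributions yields the desired bound $O(m \log^{(r)} m) + \sum_{i=1}^m (v_i + O(\log v_i))$.
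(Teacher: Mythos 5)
Your high-level idea of grouping by size class and maintaining a separate structure per class is natural, but the proposal glosses over exactly the bookkeeping cost that the paper's construction is engineered to avoid, and that cost is fatal to the claimed bound.

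The central gap is in the space accounting for the zone-aggregated resizing that you (correctly) observe is needed. When some $R_c$ is rebuilt, you must locate, among the $m$ meta-array entries, the $m_c$ entries whose retrievers point into $R_c$; this is what the zone-level linked lists are for. But the linked-list next-pointers cost $\Theta(\log K)$ bits \emph{per meta-array entry}, where $K$ is the number of target structures, because each entry must be threaded into one of $K$ per-zone lists inside a zone of size at least $K$. With power-of-two classes $K = \Theta(\log w)$, and since $w$ may be as large as $m^{\Theta(1)/\log\log m}$ (any $m^{o(1)}$), this is $\Theta(\log\log m)$ bits per entry, i.e., $\Theta(m\log\log m)$ bits in total. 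That already exceeds the theorem's $O(m\log^{(r)} m)$ budget for every $r \ge 3$, and it is not charged to the $O(\log v_i)$ per-value term either, since a constant-size value still incurs the full $\Theta(\log\log m)$-bit list pointer. This is precisely why the paper uses $r$ levels of geometrically shrinking tiny pointers rather than a flat scheme: a $j$-bit tiny pointer at one level points into one of only $2^{\Theta(j)}$ tables at the next level, so its linked-list overhead is $O(j)$ bits, i.e., proportional to the pointer it decorates, and the per-value total telescopes to $O(\log v_i) + O(\log^{(r)} m)$. Your flat meta-array, with $O(1)$-expected-bit retrievers fanning out to $\Theta(\log w)$ classes, breaks exactly this proportionality.

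A second, smaller issue: padding each value up to $2^{\lceil\log v_i\rceil}$ bits inflates the leading term to as much as $2\sum_i v_i$, whereas the theorem requires $\sum_i v_i + O(\sum_i \log v_i)$ with coefficient exactly $1$ on $\sum_i v_i$. Switching to one class per exact bit length would fix the constant but would push $K$ up to $w$, making the linked-list overhead $\Theta(\log w)$ bits per entry, which is even worse. Finally, the remark that ``fixed per-class bookkeeping remains subdominant'' conflates per-class costs (which are indeed only $O(\log w)$ in number) with per-entry costs (which are what actually dominate). To make an approach along these lines work you would need to recover the paper's key structural property -- that the index structure used to locate pointers into a sub-table has size proportional to the pointers themselves -- and that essentially forces the chained, multi-level indirection the paper uses.
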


We remark that the limitation on value-size to be $ O (1) $ machine words is simply so that each value can be written/read in constant time, that way it is easy to discuss how the asymptotics of $T$ and $T'$ compare. The same techniques work for even larger values without modification, as long as one is willing to spend the necessary time to read/write values that are of super-constant size.

\begin{proof}[Proof of Theorem \ref{variablelength}]
Values in $ T' $ are stored with up to $ r $ levels of indirection. If a value is $ k $ bits, then it is pointed at by a tiny pointer $p_1$ of size $ O (\log k) $ bits. The tiny pointer $ p_1 $ is, in turn, pointed at by a tiny pointer $ p_2 $ of size $ O (\log\log k) $ bits, and so on, with pointers of size $ O (\log\log\log k), O (\log\log\log\log k), \ldots, O (\log ^ {(r) } n) $. That is, every value is stored at the end of a linked list of length $ O (1) $, where the base pointer of the linked list is $O (\log ^ {(r) } n)$ bits, and each subsequent pointer is exponentially larger than the previous one.

For each tiny pointer of some size $ j $ in the data structure, we must also store $O(j)$ extra bits of information indicating (a) whether the tiny pointer is pointing at another tiny pointer or at a final value, and (b) what the size is of the tiny-pointer/value being pointed at. Throughout the rest of the proof, we will count these $ O (j) $ extra bits as being part of the size of the tiny pointer.

Since there are both values and tiny pointers of many different sizes, we must use a different dereference table for each size-class of tiny-pointer and the different dereference table for each size-class of values being stored. (Note that the dereference tables storing tiny pointers may need to use the chunked-storage technique to handle variable-sized tiny pointers, so the same dereference table should not be used to store both tiny pointers and values.)

The problem of dynamically resizing all of the dereference tables simultaneously is slightly tricky. Consider a dereference table $A$ (to $A$ could also be the value array) that stores $ j $-bit tiny pointers for some $ j $. There are $ K = 2 ^ {\Theta(j) } $ different dereference tables $B_1, B_2, \ldots, B_k$ that these tiny pointers can point into (depending on the size of the object being pointed at, and whether the object is a tiny pointer or a value). Each $B_i$ must individually be dynamically resized. We will maintain what we call the  \defn{dynamic-sizing invariant}, which guarantees that each $B_i$ is either (a) at a load factor $1 - O(1/j')$, where $ j' $ is the size of the objects stored in $B_i$, or (b) is at most a $o(1 / (Kj))$-fraction the size (in bits) of $A$. 

To implement the dynamic-sizing invariant, we dynamically resize each $ B_i $ using zone-aggregated resizing (recall from Section \ref{applicationprelim} that this means $ B_i $ is broken into multiple components, and each component is occasionally rebuilt so that its size either doubles or halves). To allow for components of each $ B_i $ to be rebuilt efficiently, we break $ A $ into zones of size $\poly(K)$, meaning by \eqref{eq:rebuild} from Section \ref{applicationprelim} that a given component (of some $B_i$) consisting of $s$ entries can be rebuilt in time
$$|A| / \poly(K) + s,$$
where $|A|$ is the number of entries in $A$. We perform dynamic resizing on $ B_i $ differently depending on whether it is very small (its components contain fewer than $|A| / \poly(K) $ elements each) or not: 
\begin{itemize}
\item If the components contain $s = \Omega(|A| / \poly(K))$ elements each, then we perform zone-aggregated resizing (exactly as in Section \ref{applicationprelim}) to keep $ B_i $ at a load factor $1 - O(1/j')$, where $ j' $ is the size of the objects stored in $B_i$. In this case, the time needed to rebuild a component of size $ s $ is $\Theta (s ) $, so the dynamic resizing of $ B_i $ can be deamortized to take $ O (1) $ time per operation (on $B_i$). Note that, here, $ B_i $ is in case (a) of the dynamic-resizing invariant.
\item If the components contain fewer than $|A| / \poly(K)$ elements each, then we perform zone-aggregated resizing to keep each component of $ B_i$ at a capacity of $\Theta(|A| / \poly(K))$ (even as $|A|$ changes over time, and \emph{regardless} of whether the number of elements per component may be significantly smaller than $|A| / \poly(K)$).  Note that, here, $ B_i $ is in case (b) of the dynamic-resizing invariant.

When $ B_i $ is in this regime, we cannot amortize the work spent rebuilding $ B_i $ to the operations that are performed on $B_i$. Instead, we spread out the work spent rebuilding components of $ B_i $ in the following way: for every $\Theta (K) $ work that is spent on $ A $ we also spend $O(1)$ time on resizing $B_i$. Since $ B_i $ is more than a factor of $ K $ smaller than $ A $, this is sufficient time to keep $ B_i $ in a state where each component has capacity $\Theta(|A| / \poly(K))$. 

From the perspective of $ A $, every time that we spend constant time on insertions/deletions/rebuilding $ A $, we also may spend constant time performing rebuild-work on one of the $B_i$s (which, in turn, may recursively lead us to spend constant time on rebuilding one of the dereference tables pointed at by $ B_i $, etc.). Importantly, since chains of tiny pointers are at most $r \le O(1)$ long, the time spent on rebuilds only introduces a constant-factor overhead on running time per operation.
\end{itemize}

The resizing approach described above guarantees the dynamic-sizing invariant while incurring only a constant-factor time overhead per operation. Next we use the invariant to bound the space consumption of $ T' $.
The dereference tables $B_i$ in case (a) are implemented space-efficiently enough that the empty slots in them take negligible space compared to the actual objects stored in them (i.e., the empty slots add $ O (1) $ bits per object), and the dereference tables $B_i$ in case (b) are small enough that they take negligible space compared to the size of the parent dereference table $A$ (i.e., they cumulatively add $ o(1) $ bits per slot in $A$). It follows that the total space consumed by dereference tables will be at most the sum of the sizes of the objects being stored in the dereference tables, plus $O(1)$ bits per object; this, in turn, means that the space used by $ T' $ to store values/tiny pointers is given by \eqref{variablelength}.

Next, we bound the time-overhead of $ T' $ when compared to $ T $. We have already shown that the time-overhead of performing dynamic-resizing on dereference tables is $ O (1) $ per operation. Since values are stored with at most $ r = O (1) $ levels of indirection, the time needed to access/modify a value is also $ O (1) $. Thus $ T' $ has the same time asymptotics as $ T $.

Finally, we argue that the dereference tables used by $ T' $ succeed at their allocations with high probability.\footnote{There are many different ways that one could handle allocation failures, including, for example, performing batch-rebuilds of the data structure.} There are several approaches that we could take to doing this; the simplest is to just add one more modification to how we perform dereference-table resizing: whenever a dereference table gets down to size $\Theta(\sqrt{m})$, we do not ever resize it to be any smaller.\footnote{However, since $m$ may dynamically change over time, we do need to spend constant time per operation resizing dereference tables of size $\Theta(\sqrt{m})$ so that they stay size $\Theta(\sqrt{m})$ as $m$ changes.} This means that some dereference tables could be very sparse, containing $\sqrt{m}$ slots, but containing far fewer elements. Since there are only $O(w) = m^{o(1)}$ different dereference tables (recall that $w$ is the machine-word size), the net space consumption of the dereference tables of size $\Theta(\sqrt{m})$ is $o(m)$ bits. The fact that every dereference table has size at least $\Omega(\sqrt{m})$ means that all of the dereference tables offer high probability guarantees, as desired.
\end{proof}

\subsection{An Optimal Internal-Memory Stash}\label{stash}
Our final application of tiny pointers revisits one of the oldest problems in external-memory data structures: the problem of maintaining a small internal-memory \defn{stash} that allows for one to locate where elements reside in a large external-memory data structure. 

The problem can be formalized as follows. We must store a dynamically changing set $S$ of up to $n$ key-value pairs, where each key-value pair can be stored in one machine word, and where each key is unique. We are given an \defn{external memory} consisting of $ (1+\epsilon) n $ machine words, where the key-value pairs $S$ are to be stored. In addition to storing key-value pairs in external memory, we must maintain a small internal-memory data structure $X$, which we will refer to as the \defn{stash}, that supports the following operations: 
\begin{itemize}
\item \textbf{Query$ (k) $:} Using only information in the stash data structure, returns the position in external memory where the key $ k $ and its corresponding value $ v $ are stored.
\item \textbf{Insert$ (k, v) $:} Inserts the key-value pair $ (k, v) $, placing the pair somewhere in external memory, and updating the stash. 
\item \textbf{Delete$ (k, v) $:} Removes the key/value pair $(k, v)$ from the external-memory array, and updates the stash.
\end{itemize}

The important feature of a stash is that queries can be completed with a single access to external memory. On the other hand, in order for a stash to be useful, several other objectives must be achieved:
\begin{itemize}
\item \textbf{Compactness:} The stash $X$ needs to be as small as possible, that way it can fit into an internal memory with limited size.
\item \textbf{Efficient inserts and deletes:} Although a stash prioritizes queries, insertions and deletions should ideally also require only $O(1)$ accesses/modifications to external memory.
\item \textbf{RAM efficiency:} Finally, so that computational overhead does not become a bottleneck, the operations on a stash should be as efficient as possible in the RAM model, ideally taking time $O(1)$.
\end{itemize}

A concrete example of a stash that is used in real-world systems is the \defn{page table} \cite{IntelManual,AMDManual,BenderBhCo21}, which is an operating-system-level dictionary that maps virtual page addresses to where their corresponding physical pages reside in memory. The page table is accessed for every address translation, so it is performance critical and thus highly optimized. Additionally, it is important that the page table be space-efficient, so that it may be effectively cached in the processor cache hierarchy. 
Note that, although page tables get to select where physical pages reside in memory, they do not get to move physical pages that have already been placed; thus any stash that is used as a page table must also be stable. For this reason, past work \cite{Larson1, Larson2, Larson3} has typically included stability as an additional criterion for a stash.

Work on designing space-efficient and time-efficient stashes dates back to the late 1980s \cite{Larson2, Larson3, Larson1}. The best-known theoretical results are due to Gonnet and Larson \cite{Larson1}, who give a stable stash that uses only $O(n \log \epsilon^{-1})$ bits. A remarkable consequence of this is that, when $\epsilon = \Theta(1)$, it is possible to construct a stash using only $O(n)$ bits. 

Gonnet and Larson's result comes with several significant drawbacks, however \cite{Larson1}, which have proven difficult to fix. First, due to its reliance on stable uniform probing \cite{Larson4} as a mechanism for determining where keys/values should reside, the stash only offers provable guarantees in the setting where insertions/deletions are performed \emph{randomly}. Second, the data structure is not constant-time in the RAM model, instead taking expected time $\Theta(\epsilon^{-1})$.

Using tiny pointers, we show that modern techniques for constructing filters can easily be adapted in order to construct a stable stash of size $O(n \log \epsilon^{-1})$ bits that supports constant-time operations in the RAM model (with high probability) and that supports arbitrary sequences of insertions/deletions/queries.

\begin{thm}
It is possible to construct a stable stash that supports constant-time operations in the RAM model, that stores up to $m$ keys/values in an external-memory array of size $ (1+\epsilon) m$, and that uses only $ O (m\log\epsilon ^ { -1 }) $ bits of internal-memory space. All of the guarantees for the stash hold with high probability in $ m $.
\label{thm:stash}
\end{thm}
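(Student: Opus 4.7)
The plan is to layer two components: a tiny-pointer dereference table that governs how the external-memory array is filled, and a compact internal-memory dictionary that remembers each present key's tiny pointer. Together these give one external-memory access per query, with everything else happening in the RAM model in $O(1)$ time w.h.p.

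First, apply Theorem~\ref{thm:upper-bound-variable} with load factor $1 - \Theta(\epsilon)$, treating the external-memory array of $(1+\epsilon)m$ slots as the store of a dereference table. This produces variable-length tiny pointers of expected size $O(\log \epsilon^{-1})$, constant-time $\Allocate$/$\Dereference$/$\Free$ w.h.p., and $O(m)$ bits of auxiliary metadata that we keep in the internal-memory stash. An insertion of $(k,v)$ calls $p_k \gets \Allocate(k)$ and writes $(k,v)$ to slot $\Dereference(k, p_k)$; stability follows automatically because $\Dereference(k,p_k)$ is a pure function of $k$, $p_k$, and fixed random bits.

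Next, build an internal-memory dictionary $D$ that stores, for every present key $k$, the pair (fingerprint of $k$, $p_k$). Implement $D$ in the style of modern space-efficient quotient-filter/backyard dictionaries (\cite{arbitman2010backyard,bercea2019fully,Bercea2020Dictionary,iceberg}): hash each $k$ to a bucket of constant capacity and store in the bucket a quotient/remainder fingerprint of $\Theta(\log \epsilon^{-1})$ bits alongside $p_k$. Using chunked pointer storage from Section~\ref{applicationprelim} to pack the variable-length $p_k$'s, and invoking Proposition~\ref{prop:aggregatesize} at the bucket scale, the total size of $D$ is $O(m \log \epsilon^{-1})$ bits w.h.p., and each insertion/deletion/query touches $O(1)$ machine words in $D$. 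Zone-aggregated resizing (Section~\ref{applicationprelim}) maintains these bounds as $m$ changes.

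The main obstacle is that fingerprints of only $\Theta(\log \epsilon^{-1})$ bits can collide between two present keys sharing a bucket, in which case $D$ cannot, on its own, distinguish which tiny pointer belongs to the queried key---and we cannot afford a second external-memory access to verify. The resolution is the classical backyard trick: at insertion time, detect a fingerprint collision in $D$'s bucket and divert all such colliding keys into a secondary dictionary built with the same machinery at polynomially smaller capacity, contributing only $o(m)$ bits overall and $O(1)$ expected work per operation. A query first consults $D$, and falls through to the backyard only when $D$ flags its bucket as collision-bearing, so each operation uses $O(1)$ RAM time w.h.p.\ and exactly one external-memory access, yielding the claimed $O(m \log \epsilon^{-1})$-bit stable stash for arbitrary oblivious insert/delete/query sequences.
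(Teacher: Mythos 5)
Your first half matches the paper exactly: implementing the external-memory array as a variable-size-tiny-pointer dereference table at load factor $1-\Theta(\epsilon)$ (Theorem~\ref{thm:upper-bound-variable}) gives expected $O(\log\epsilon^{-1})$-bit pointers, and stability follows because $\textsc{Dereference}$ depends only on the key, the pointer, and fixed random bits. The gap is in the internal dictionary. With buckets of constant capacity and explicit remainders of only $\Theta(\log\epsilon^{-1})$ bits, the expected number of present keys that share a (bucket, fingerprint) pair is $\Theta\bigl(m\cdot\poly(\epsilon)\bigr)$: there are $\Theta(m)$ intra-bucket pairs and each collides with probability $2^{-\Theta(\log\epsilon^{-1})}$. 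This is \emph{not} a set of ``polynomially smaller capacity''---for the headline case $\epsilon=\Theta(1)$ it is a constant fraction of all keys. A secondary dictionary that must return the \emph{correct} tiny pointer for each of these keys (a stash may not guess, and you cannot verify with a second external access) has to distinguish them, which means storing full keys or $\Omega(\log m)$-bit fingerprints; that costs $\Theta\bigl(m\,\poly(\epsilon)\log m\bigr)$ bits, which exceeds the $O(m\log\epsilon^{-1})$ budget whenever $\epsilon$ is constant or only mildly small. Recursing with a cascade of geometrically shrinking levels fixes the space but then a given operation finishes in $O(1)$ time only with probability $1-\poly(\epsilon)^{O(1)}$ (constant expected time), not with high probability in $m$ as the theorem requires. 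A further unstated step: to divert an already-resident colliding key you must learn more of its hash, which requires fetching it from external memory via its stored tiny pointer; this is benign but needs to be said.

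The paper avoids the cascade entirely by building on the adaptive filter of Bender et al.: each key stores a variable-length, prefix-free fingerprint consisting of an implicit quotient, $\log\epsilon^{-1}$ baseline bits, and adaptivity bits whose total is $O(m)$, and the tiny pointer is stored alongside the baseline bits. Prefix-freeness guarantees that a query on a present key matches exactly one stored fingerprint, so the correct tiny pointer is returned in one level, in $O(1)$ time w.h.p., within $O(m\log\epsilon^{-1})$ bits; when a fingerprint must be lengthened to restore prefix-freeness, the colliding keys are fetched from external memory using the tiny pointers already stored. To repair your write-up you would either need to adopt this local fingerprint-lengthening mechanism (bounding the total extra bits by $O(m)$) or give a genuinely different collision-resolution scheme with a correct space and w.h.p.\ time analysis; the two-level ``backyard'' as stated does not suffice.
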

\begin{proof}
The starting point for our design is the adaptive filter of Bender et al. \cite{BenderFaGo18}. 
Like a stash, their filter is a space-efficient internal-memory data structure that summarizes the state of an external-memory key-value dictionary. Unlike a stash, their filter does not indicate where in external memory each key/value is stored. Instead, the filter answers containment queries with the following guarantee: each positive query is guaranteed to return true, and each negative query is guaranteed to return false with probability at least $ 1 -\epsilon $ (for some parameter $\epsilon $). The size of their internal-memory data structure is only $ (1+ o (1)) m\log\epsilon ^ { -1 } = O(m \log \epsilon^{-1})$ bits, where $ m $ is the capacity of the filter.\footnote{In fact, their data structures also dynamically resizable, but for our application that will not be necessary.} 

The basic idea behind the adaptive filter of \cite{BenderFaGo18} is to store a \defn{fingerprint} for each key $x$, where each fingerprint is taken to be some prefix of the hash $h(x)$. Different keys have different-length fingerprints, and the invariant maintained by the filter is that no fingerprint is a prefix of any other fingerprint. To maintain this invariant while also keeping the fingerprints as small as possible, the filter will sometimes change the lengths of $O(1)$ different fingerprints during a given insertion/deletion; to change the length of a fingerprint, the key corresponding to that fingerprint must first be fetched from external memory, that way the hash $h(x)$ of that key can be recomputed.\footnote{The original data structure also sometimes updates the lengths of fingerprints during negative queries, but such updates are not needed for the purposes of our data structure.} 

The fingerprints in the filter are stored as follows. The first $\lg n$ bits of each fingerprint are called the \defn{quotient}, and these bits are used to assign the key to one of $n$ bins; importantly, the fact that the bin-choice encodes the quotient of each of the keys in the bin means that the data structure does not have to explicitly store the quotients of the fingerprints. The next $\log\epsilon ^ { -1 } $ bits of each fingerprint are called the  \defn{baseline bits}, and these bits are included for every fingerprint in the data structure. Finally, any subsequent bits in a fingerprint are called the \defn{adaptivity bits}, and these bits are added/removed in order to maintain the prefix-freeness invariant. A central piece of \cite{BenderFaGo18}'s analysis is to show that there are only $O(m)$ adaptivity bits in total, and that these bits can be stored efficiently.

We now describe how to modify the filter to be a stash. In addition to storing a fingerprint for each key, we now also store a tiny pointer with expected size $\Theta(\log \epsilon ^ { -1 }) $. These tiny pointers are easy to store, since the filter has already made room for $\log\epsilon ^ { -1 } $ baseline bits for each key. Of course, different tiny pointers may have different lengths, but this issue can easily be resolved by either using the chunked-storage technique described in Section \ref{applicationprelim} (or by adapting the techniques already used in \cite{BenderFaGo18} to handle variable-length fingerprints).

One minor difficulty is that the filter assumes access to an external-memory dictionary (rather than just a dereference table) that way it can lookup keys in order to modify their fingerprints. In the case of our stash, however, these lookups can easily be performed using the tiny pointers that are already stored, so one does not need a full dictionary in external memory.

The fact that the tiny pointers have size $\Theta(\log \epsilon ^ { -1 }) $ means that external memory can be implemented as a dereference table with load factor $ 1-\epsilon $. The fact that the original adaptive filter supported constant-time operations (with high probability in $m$) translates to the stash also supporting constant-time operations. And the fact that the original adaptive filter used space $O(m \log \epsilon^{-1})$ bits in internal memory also translates the same guarantee for the stash. Thus the theorem is proven.

\end{proof}
\section{Dynamic Balls and Bins}
\label{sec:balls-n-bins}

In this section, we reinterpret our tiny-pointer constructions as balls-and-bins schemes in order to improve the state of the art
for the classic dynamic load balancing problem.

In the dynamic load-balancing problem, there is a system of $n$ bins and a large universe $U$ of balls. Balls are inserted and deleted (and sometimes reinserted)
over time by an oblivious adversary, so that the total number of balls in the system never exceeds $m = hn$ for some parameter $h$. Whenever a ball $x$ is inserted, it must be placed in one of $d$ bins from among $\h_1(x), \ldots, \h_d(x)$, where $\h_i()$ is some hash function from balls to bins. Once a ball is placed in a bin, it cannot be moved until it is deleted. The goal of the \defn{dynamic load-balancing problem} is to assign balls to bins in order to achieve the smallest maximum-load possible (i.e., to minimize the number of balls in the fullest bins). We refer to the special case where balls can be inserted and deleted but not reinserted as the \defn{semi-dynamic load-balancing problem.}

There are two classic solutions to the problem. The first is \single balls-to-bins assignment: we set $d = 1$ and just place each $x$ in $h_1(x)$. 
The second is $\leftd{d}$ balls-to-bins assignment: divide the bins into $d$ groups so that each $h_i$ is uniform into the $i$-th group; when inserting $x$, pick the bin $h_i(x)$ 
with the smallest load, and break ties by minimizing $i$.

\single's behavior history independent, in that the maximum load at any time only depends on which balls are present, and not the history of their arrival.  The maximum load is then completely characterized by standard Chernoff bounds.

$\leftd{d}$, on the other hand, is highly history dependent. The first time that a ball $x$ is inserted, the hashes $\h_1(x), \ldots, \h_d(x)$ are independent of the system state, but if a ball $x$ is ever deleted and then later \emph{reinserted}, then
the past insertion of $x$ can have long-term side effects on the system state meaning that the state is not necessarily independent of $\h_1(x), \ldots, \h_d(x)$.

In the insertion-only setting (i.e., balls are not deleted), $\leftd{d}$ offers a celebrated  bound~\cite{Vocking03} of 
\begin{equation}
h+ \frac{\log\log n}{d\log\phi_d} + O(1)
\label{eq:staticV}
\end{equation}
on maximum load, where $\phi_d$ is the generalized golden ratio. In the dynamic setting, $\leftd{d}$ has proven to be 
significantly more difficult to analyze. The original analysis of $\leftd{d}$ by V\"ocking \cite{Vocking03} can be used to achieve a bound of
\begin{equation}
O(h) + \frac{\log\log n}{d\log\phi_d}
\label{eq:semiV}
\end{equation}
for the semi-dynamic setting, but as Woefel observed\cite{DBLP:conf/soda/Woelfel06}, the same argument does not apply directly to the fully dynamic setting.\footnote{The difficulty has to do with the analysis of the leaves in the witness tree, and is easy to describe in the case where $h = 1$. To analyze a leaf ball $x$, the original analysis uses Markov's inequality to deduce that each of $x$'s $d$ bins has at most a $1/3$ probability of having $3$ or more balls, and the analysis concludes that the probability of all $d$ bins containing $3$ or more balls is at most $1/3^d$. This same analysis does not apply in the fully dynamic setting since it would need the state of the system of to be independent of $x$'s hash functions $\h_1(x), \ldots, \h_d(x)$, which is not the case due to subtle history dependencies in the system's state.}  He shows how modify V\"ocking's proof to achieve a bound of 
\begin{equation}
O(d) + \frac{\log\log n}{d\log\phi_d}
\label{eq:weofel}
\end{equation}
in the setting where $h = 1$. In general, when $h > 0$, Woefel's argument yields a bound of 
\begin{equation}
O(1 + hd) + \frac{\log\log n}{d\log\phi_d},
\label{eq:semiV2}
\end{equation}
which has remained the state of the art.

The bound \eqref{eq:semiV2} is most interesting in the case where $h$ is relatively small, that is, $h = o(\log n)$. Here, \eqref{eq:semiV2} can be significantly better than the $\Theta(\log n / \log \log n)$ bound that would be achieved by \single. Of course, the question remains as to whether there exists a balls-to-bins scheme that achieves a better bound. We answer this question in the affirmative, by giving a bin-selection rule with $d + 1$ hash functions that achieves maximum load
\begin{equation}
h + \frac{\log\log n}{d\log\phi_d} + O(\sqrt{h \log (hd)}).
\label{eq:semiV3}
\end{equation}
We remark that, even when $h$ is a constant, this bound improves the dependence on $d$ from $O(d)$ to $O(\sqrt{\log d})$.

Our rule, which we call \defn{$\iceberg{d}$} is a hybrid of \single and $\leftd{d}$. This rule is closely related to the rule that we used in Section \ref{sec:fixed} for constructing fixed-size tiny pointers.

The rest of the section proceeds as follows. We begin in Subsection~\ref{ssec:iceberglemma} by proving a useful technical lemma. In Subsection \ref{ssec:icebergd}, we present and analyze $\iceberg{d}$. Finally, in Subsection \ref{ssec:probe}, we reinterpret our variable-size tiny-pointer construction as a result about probe-complexity of balls-and-bins schemes with bins of capacity $1$; in particular, we give the first dynamic ball-allocation scheme to offer $\poly(\delta^{-1})$ average probe complexity in the setting where there are up to $(1 - \delta)n$ balls present in the system at a time.


\subsection{A Useful Lemma}\label{ssec:iceberglemma}

This section proves a generalization of a technical lemma introduced by a subset of the current authors in recent work on space-efficient hash tables~\cite{iceberg}. The new lemma extends the original to a wider parameter regime. We also take a different combinatorial approach than in the previous paper, resulting in a simpler proof that reveals an interesting relationship between the lemma and Talagrand's inequality.

Consider a dynamic balls-and-bins game with $n$ bins and at most $m = hn$ balls at all times, that are placed with the $\single$ rule.  Whenever a ball is thrown into a bin, if the bin contains $h + \tau$ or more balls, then the ball is labeled as \defn{$\tau$-exposed} (and the label persists until the ball is next deleted). 

\begin{lem}\label{lem:iceberg}
Suppose $1 \le \tau \le h $. At any fixed point in time, the number of $\tau$-exposed balls is $\poly(h) \cdot ne^{- \tau^2/ (3h)}$ with probability $1 - \exp(-\Omega(me^{-\tau^2/(3h)}))$.
\end{lem}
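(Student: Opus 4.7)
The plan is to prove the lemma in two steps: first bound the expectation $\E[E]$ where $E$ denotes the number of $\tau$-exposed balls at time $T$; then upgrade this to high-probability concentration via Talagrand's inequality for configuration functions. For the expectation, fix $x \in B_T$ (this set is deterministic since the adversary is oblivious). Let $t_x$ be $x$'s most recent insertion time and $B_{t_x^-}$ the set of balls present just before this insertion, with $|B_{t_x^-}| \le m-1$. Conditioning on $h(x) = i$, the post-insertion load of bin $i$ equals $1 + |\{y \in B_{t_x^-} : h(y) = i\}|$, which is distributed as $1 + \mathrm{Bin}(|B_{t_x^-}|, 1/n)$. A standard multiplicative Chernoff bound gives $\Pr[x \text{ is } \tau\text{-exposed}] \le e^{-\tau^2/(3h)}$ (with the off-by-one slack between ``post-insertion $\ge h+\tau$'' and ``pre-insertion count $\ge h+\tau-1$'' absorbed into the final $\poly(h)$ prefactor). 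Linearity then yields $\E[E] = O(m e^{-\tau^2/(3h)})$, so the median of $E$ satisfies $M = O(m e^{-\tau^2/(3h)})$.

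For concentration, view $E$ as a function of the independent hashes $(h(x))_{x \in \mathcal{B}}$, where $\mathcal{B}$ is the set of balls ever used by the adversary. A naive McDiarmid argument fails: flipping a single $h(x)$ alters the load histories of two bins and can toggle the exposure status of $\Theta(h+\tau)$ other balls, giving too large a Lipschitz constant. Instead, the plan is to exploit the fact that $E$ is a \emph{configuration function} in Talagrand's sense, with parameter $c = O(h)$. Given any realization with $E = e$ and currently-exposed balls $x_1, \dots, x_e$ inserted into bins $i_1, \dots, i_e$ at times $t_1, \dots, t_e$, let $W_j$ denote the set of $\ge h+\tau$ balls present in bin $i_j$ at time $t_j$, and set $I = \bigcup_j W_j$, so $|I| \le e(h+\tau) = O(he)$. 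Under any hashing $\vec{h}'$ agreeing with $\vec{h}$ on $I$, the obliviousness of the adversary keeps every $B_t$ and every $t_j$ unchanged, each $y \in W_j$ still hashes to $i_j$ under $\vec{h}'$, and hence bin $i_j$'s load at time $t_j^+$ under $\vec{h}'$ is still $\ge h+\tau$; so every $x_j$ remains $\tau$-exposed, certifying $E(\vec{h}') \ge e$.

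Talagrand's inequality for configuration functions then gives $\Pr[E \ge M + t\sqrt{cM}] \le 2\exp(-t^2/4)$ for any $t > 0$. Choosing $t = \sqrt{M}$ and using $c = O(h)$ yields $\Pr[E \ge (1 + O(\sqrt{h}))M] \le 2\exp(-M/4)$, and substituting $M = O(m e^{-\tau^2/(3h)})$ produces
\[
\Pr\!\left[E \ge \poly(h) \cdot n e^{-\tau^2/(3h)}\right] \le \exp\!\left(-\Omega(m e^{-\tau^2/(3h)})\right),
\]
which is precisely the lemma. I expect the most delicate step to be the certificate argument: specifically, verifying that the insertion times $t_j$, the continued presence of each $x_j$ at time $T$, and the membership of each $W_j$ in $B_{t_j}$ are all invariant under hash changes outside $I$. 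All three invariances follow from the adversary's obliviousness, since the entire insertion/deletion schedule is fixed before the hash function is drawn; this is exactly what lets us replace the cascading Lipschitz analysis with a clean configuration-function certificate, and is the ``elegant approach via Talagrand'' alluded to in the preamble.
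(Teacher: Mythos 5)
Your proposal is correct and takes essentially the same route as the paper's proof: a per-ball Chernoff bound at the ball's last insertion time gives $\E[X] = O(me^{-\tau^2/(3h)})$, and concentration follows from Talagrand's inequality with certificates of size $O(h)$ per exposed ball, which is exactly the paper's claim that $X$ is $(h+\tau+1)$-Lipschitz and $(h+\tau+1)$-certifiable in the bin choices. Two small repairs that do not change the outcome: each certificate must also include the exposed ball's own hash (the paper adds $\alpha_j$ to the witness set $R_j$, since changing $h(x_j)$ moves $x_j$ to a different bin), and the correct Talagrand deviation scale for a $c$-Lipschitz, $r$-certifiable function is $c\sqrt{r\,\E[X]}$ rather than your $\sqrt{cM}$, a discrepancy absorbed by the lemma's $\poly(h)$ slack.
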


Our proof of the lemma will make use of  a variant of Talagrand's inequality~\cite[Chapter 12]{Talagrand}:
\begin{thm}[Talagrand's inequality]\label{thm:talagrand}
Let $X_1, \dots, X_n$ be $n$ independent random variables from an arbitrary domain. Let $F$ be a function of $X_1, \dots, X_n$, not identically $0$. Suppose that for some $c, r > 0$, $F$ is $c$-Lipschitz and $r$-certifiable, defined as follows:
\begin{itemize}
    \item $F$ is $c$-Lipschitz if changing the outcome of any single $X_i$ changes $F$ by at most $c$.
    
    \item $F$ is $r$-certifiable if, for any $s$, if $F(X_1,\ldots,X_n) \geq s$, then there is a certifying set of at most $rs$ $X_i$'s whose outcomes serve as a witness that $F \geq s$, that is, $F\geq s$ no matter the outcome of the other $X_j$ not in the certifying set.
\end{itemize}
Then, for any $0 \leq t \leq \expect{F}$,
\[
\prob{|F - \expect{F}| > t + 60c\sqrt{r\expect{F}}} \leq 4 \exp\left(-\frac{t^2}{8c^2r\expect{F}}\right).
\]
\end{thm}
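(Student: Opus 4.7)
The plan is to derive this form of Talagrand's inequality from Talagrand's convex distance inequality, which is the standard route taken in textbook presentations (e.g.\ Alon--Spencer, Molloy--Reed). First I would recall the convex distance inequality: for any product probability space and any measurable $A$ with $\Pr[A]>0$, the convex distance $d_T(x,A) := \sup_{\alpha \in \mathbb{R}_{\ge 0}^n,\, \|\alpha\|_2 = 1} \inf_{y \in A} \sum_{i : x_i \ne y_i} \alpha_i$ satisfies $\Pr[A] \cdot \Pr[d_T(X,A) \ge t] \le e^{-t^2/4}$ for all $t \ge 0$. This is a purely geometric statement about product measures and can be quoted as a black box.

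Second, I would translate the two structural hypotheses on $F$ into a lower bound on $d_T(X, A_s)$, where $A_s := \{F \le s\}$. Suppose $F(x) \ge s+u$. By $r$-certifiability there is a set $I \subseteq [n]$ with $|I| \le r(s+u)$ whose coordinates alone already witness $F \ge s+u$. Choose the weight vector $\alpha$ to be uniform on $I$, i.e.\ $\alpha_i = |I|^{-1/2}$ for $i \in I$ and $0$ otherwise. For any $y \in A_s$, the $c$-Lipschitz property forces $y$ to differ from $x$ on more than $u/c$ coordinates of $I$ (otherwise one could change only those coordinates of the certifier and keep $F$ above $s$). Thus $d_T(x, A_s) \ge (u/c) \cdot |I|^{-1/2} \ge u/(c\sqrt{r(s+u)})$.

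Third, I would apply the convex distance inequality with $s$ equal to the median $m$ of $F$, so that $\Pr[A_m] \ge 1/2$, to obtain an upper tail of the form $\Pr[F \ge m+u] \le 2\exp(-u^2/(4c^2 r(m+u)))$, and symmetrically a matching lower tail by swapping the roles of the level sets. The last step is median-to-mean conversion. Integrating the tail bound shows $|\mathbb{E}[F] - m| \le O(c\sqrt{r\,\mathbb{E}[F]})$, and for $u \le \mathbb{E}[F]$ the denominator $r(m+u)$ is at most a constant factor larger than $r\,\mathbb{E}[F]$. Choosing the constant $60$ in the slack term large enough to absorb both effects produces the stated inequality, with the factor $8$ in the exponent coming from combining the $1/4$ in the convex distance bound with this change of denominator.

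The main obstacle is constant bookkeeping in step two and in the median-to-mean passage: extracting the specific factors $60$, $8$, and $4$ requires keeping careful track of how the $\sqrt{r(m+u)}$ factor is converted into $\sqrt{r\,\mathbb{E}[F]}$ and how the shift from median to mean is absorbed. There is no serious conceptual difficulty --- each step is a classical template --- but the clean statement above is essentially an optimized packaging of the textbook argument.
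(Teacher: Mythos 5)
The paper does not prove this statement; it is quoted verbatim as a known result, with a citation to Chapter~12 of the Talagrand/Molloy--Reed reference, and is used only as a black box in the proof of Lemma~\ref{lem:iceberg}. Your sketch reproduces the standard textbook derivation from the convex distance inequality (product-measure isoperimetry $\Pr[A]\,\Pr[d_T(X,A)\ge t]\le e^{-t^2/4}$, followed by the Lipschitz-plus-certifiable reduction to a lower bound on $d_T$, median-to-mean conversion, and constant absorption), which is exactly the argument in the source the paper cites, so conceptually you are on the right track even though the paper gives you no proof to compare against. One small imprecision worth noting: the lower tail does not come from ``swapping the roles of the level sets'' in a literally symmetric way, because $r$-certifiability is a one-sided hypothesis (it witnesses $F\ge s$, not $F\le s$). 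The correct move is to keep the same two-point inequality
\[
\Pr[F\le b]\cdot\Pr[F\ge b+u]\;\le\;\exp\!\left(-\frac{u^2}{4c^2r(b+u)}\right),
\]
derived exactly as in your step two, and then instantiate $b$ differently for the two tails: $b=m$ (the median) gives the upper tail, while $b=m-u$ gives the lower tail with the denominator $r(b+u)=rm$. With that correction your constant bookkeeping (converting $m$ to $\E[F]$, absorbing the shift in the $60c\sqrt{r\E[F]}$ slack, inflating $1/4$ to $1/8$ in the exponent) goes through as you describe.
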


The proof of the lemma proceeds by bounding the expected number of exposed balls, then using Talagrand's inequality to achieve a concentration bound.

In what follows, we refer to the balls which are present at the end as $a_1,\ldots,a_k$ and we refer to the remaining balls in the universe as $a_{k + 1}, \ldots, a_\ell$. We denote by $\alpha_i$ the bin choice for $a_i$. For $i \in [k]$, we define $t_i$ to be the last time at which $a_i$ is inserted, we define $X_i$ to be the random variable indicating if $a_i$ is an exposed ball at the end of the game, and we define $X=\sum_{i = 1}^k X_i$ to be the total number of exposed balls.

\begin{clm}\label{thm:iceberg-expectation}
The expected number of exposed balls satisfies $\expect{X} = O(m e^{-\tau^2/(3h)})$.
\end{clm}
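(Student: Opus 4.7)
The proof strategy is a direct Chernoff-bound calculation, crucially enabled by the observation that the \single rule yields history-independent bin loads. Because the adversary is oblivious and a \single placement never fails, the entire sequence of insertions and deletions is fixed in advance. Hence for every ball $a_i$ present in the final configuration both its most recent insertion time $t_i$ and the set $N_{t_i}$ of balls occupying the system at that moment are \emph{deterministic} quantities, depending only on the operation sequence and not on any of the hash-function randomness.

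I would then bound $\prob{X_i = 1}$ for each such $a_i$. Conditioning on $\alpha_i = h_1(a_i) = j$, the load of bin $j$ just after $a_i$'s last insertion equals
\[
 1 + \sum_{x \in N_{t_i} \setminus \{a_i\}} \mathbf{1}\{h_1(x) = j\}.
\]
Since the hashes are fully independent, this sum is a $\mathrm{Binomial}(|N_{t_i}| - 1,\, 1/n)$ random variable with expectation at most $h - 1/n < h$, and it is also independent of $h_1(a_i)$ itself. The event $X_i = 1$ is the event that this binomial is at least $h + \tau - 1$, i.e.\ exceeds its mean by roughly $\tau$. Applying the standard multiplicative Chernoff bound at the worst-case mean $\mu \to h$ yields
\[
\prob{X_i = 1} \;\le\; e^{-\tau^2/(3h)},
\]
up to an additive constant from the ``$-1$'' that is absorbed into the big-$O$.

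Summing over the at most $m$ balls present at the final time and invoking linearity of expectation immediately gives $\expect{X} = O(m \cdot e^{-\tau^2/(3h)})$. I do not anticipate a real obstacle; the only care required is to confirm that the potentially complicated history of each $a_i$ (it may have been inserted, deleted, and reinserted many times) does not induce dependencies between $N_{t_i}$ and $h_1(a_i)$. This is exactly where the two structural features of our setup pay off: obliviousness of the adversary makes $N_{t_i}$ a deterministic set, and history-freeness of \single ensures that every re-insertion of $a_i$ goes back to the same bin $\alpha_i$, so the configuration at time $t_i$ is cleanly independent of the single hash value on which the Chernoff bound is being taken.
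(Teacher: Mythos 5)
Your proposal is correct and follows essentially the same approach as the paper: fix the last insertion time $t_i$ of each surviving ball, observe that the load of its bin at that moment is a binomial random variable with mean at most $h$, and apply a multiplicative Chernoff bound followed by linearity of expectation. The only difference is one of exposition — you spell out explicitly why $N_{t_i}$ is deterministic (oblivious adversary plus the never-failing, history-independent nature of $\single$) and why the load of bin $\alpha_i$ contributed by other balls is independent of $h_1(a_i)$ after conditioning, whereas the paper invokes the Chernoff bound directly and leaves these justifications implicit.
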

\begin{proof}
Recall that $X = \sum_i X_i$ where $X_i$ indicates whether $a_i$ is exposed. By linearity of expectation, it suffices to show that $\expect{X_i} = O(e^{-\tau^2/(3h)})$ for each $i \in [k]$. 

Fix $i \in [k]$. Consider the final time $t_i$ at which ball $a_i$ is inserted. The ball $a_i$ is exposed if and only if the number of balls in bin $\alpha_i$ is at least $h + \tau$. 
If we set $Y$ to be the number of balls in bin $\alpha_i$, and we set $\varepsilon = \tau / h$, then we can bound the probability of $Y \ge h + \tau$ using a Chernoff bound:
\[
    \prob{Y \geq h + \tau} = \prob{Y \geq (1 + \varepsilon)h} \leq e^{-\varepsilon^2h/3} = e^{-\tau^2/(3h)}.
\]
Thus $\prob{X_i} =  e^{-\tau^2/(3h)}$. 
\end{proof}
\begin{clm}\label{thm:iceberg-talagrand}
The random variable $X$ is $(h + \tau + 1)$-Lipschitz and $(h + \tau + 1)$-certifiable as a function of $\{\alpha_i\}_{i = 1}^\ell$.
\end{clm}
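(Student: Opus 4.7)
The strategy is to establish both bounds via counting arguments that exploit the ``present at the end'' hypothesis $i \in [k]$, which forces several balls to occupy a single bin at one specific moment of time.

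For the $(h+\tau+1)$-Lipschitz claim, fix $j \in [\ell]$ and consider changing $\alpha_j$ from $b$ to $b'$. Under the $\single$ rule, the count of bin $c$ at any time depends on $\alpha_j$ only when $c \in \{b,b'\}$ and $a_j$ is present. Hence $X_i$ can change only in three disjoint cases: (a) $i = j$, contributing at most $1$ to $|\Delta X|$; (b) $\alpha_i = b$ with $a_j$ present at $t_i$ and the post-insertion count $C^{\mathrm{old}}_b(t_i)$ equal to exactly $h+\tau$, so $X_i$ drops $1 \to 0$; (c) $\alpha_i = b'$ with $a_j$ present at $t_i$ and $C^{\mathrm{new}}_{b'}(t_i) = h+\tau$, so $X_i$ rises $0 \to 1$. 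Let $A, B$ be the numbers of indices in (b), (c); then $|\Delta X| \le \max(A,B) + 1$.

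The heart of the argument is the bound $A \le h+\tau-1$ (and symmetrically $B$). Order the indices in (b) by insertion time as $t_{i_1} < \cdots < t_{i_A}$. Each $a_{i_\ell}$ is present at the end with $\alpha_{i_\ell} = b$, so under $\single$ it sits in bin $b$ continuously from $t_{i_\ell}$ onward; in particular every $a_{i_1}, \ldots, a_{i_A}$ lies in bin $b$ at time $t_{i_A}^{+}$. Together with $a_j$, which is present at $t_{i_A}$ by case (b) and hashed to $b$ under the old assignment, bin $b$ contains at least $A+1$ distinct balls at $t_{i_A}^{+}$. But case (b) says that count equals exactly $h+\tau$, so $A+1 \le h+\tau$. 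The same argument applied to $b'$ under the new assignment gives $B \le h+\tau-1$. Hence $|\Delta X| \le (h+\tau-1)+1 = h+\tau \le h+\tau+1$.

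For the $(h+\tau+1)$-certifiable claim, suppose $X \ge s$ and pick $s$ exposed balls $a_{i_1}, \ldots, a_{i_s}$. By definition each $a_{i_\ell}$ sees at least $h+\tau-1$ other balls in bin $\alpha_{i_\ell}$ at time $t_{i_\ell}$; designate any $h+\tau-1$ of them as witnesses and include their $\alpha$-coordinates together with $\alpha_{i_\ell}$ in the certifying set. This contributes at most $h+\tau$ coordinates per exposed ball, for a total of at most $(h+\tau)s \le (h+\tau+1)s$ coordinates overall. Because the adversary's insertion/deletion schedule is non-random, fixing these $\alpha$-values alone is enough to guarantee that each designated ball and its $h+\tau-1$ witnesses coexist in bin $\alpha_{i_\ell}$ at time $t_{i_\ell}$, which re-exposes $a_{i_\ell}$ regardless of the remaining $\alpha$'s, so $X \ge s$ is witnessed.

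The one subtlety is that $a_j$ may be deleted and reinserted many times within the window. However, $\single$ always returns $a_j$ to the same bin, and membership in $A$ or $B$ combines ``$a_j$ present at $t_i$'' (pinning $a_j$'s location at $t_i$) with ``$i \in [k]$'' (pinning $a_{i_\ell}$'s location from $t_{i_\ell}$ to the end). This is exactly what allows the $A+1$ balls above to be counted simultaneously in bin $b$ at the single moment $t_{i_A}^{+}$, which is the only step where the dynamic nature of the game could have caused trouble.
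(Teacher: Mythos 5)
Your proof is correct and takes essentially the same approach as the paper's. The certifiability argument is nearly verbatim the paper's, and your Lipschitz argument is an unrolled version of the paper's: where the paper states the clean invariant ``the number of unexposed balls in any bin is deterministically at most $h+\tau$'' and reads off the bound, you instead order the flipped balls by insertion time and show they all coexist with $a_j$ in bin $b$ at time $t_{i_A}^+$, which is exactly the proof of that invariant inlined. One harmless wrinkle: you consistently use a post-insertion threshold for ``exposed'' (the bin has $\ge h+\tau$ balls including the newly inserted one), whereas the paper's certificate of $h+\tau$ \emph{other} balls plus $\alpha_j$ reflects a pre-insertion threshold; this shifts your intermediate constants down by one (e.g.\ $A\le h+\tau-1$ instead of $A\le h+\tau$), but under either convention both quantities stay within $h+\tau+1$, so the claim is established.
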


\begin{proof}
Changing the value of a single $\alpha_i$ to $\alpha_i'$ can only affect the number of exposed balls in bin $\alpha_i$ (which may decrease) and in bin $\alpha_i'$ (which may increase). The number of \textit{unexposed} balls in a bin is deterministically at most $h + \tau$. This means that moving ball $a_i$ out of bin $\alpha_i$ can increase the number of unexposed balls in the bin by at most $h + \tau$, and thus can decrease the number of exposed balls by at most $h + \tau + 1$ (where the $+ 1$ accounts for the removal of $a_i$ itself). Similarly, moving ball $a_i$ into bin $\alpha_i'$ can decrease the number of unexposed balls in the bin by at most $h + \tau$, and thus can increase the number of exposed balls by at most $h + \tau + 1$. This establishes that $X$ is $(h + \tau + 1)$-Lipschitz.

To certify that $X \geq s$, let $J$ with $|J| = s$ be a set of values $j \in [k]$ such that $a_j$ is exposed at the end of the game. For each $j \in J$, let $R_j$ be a selection of $h + \tau$ balls $i$ such that ball $a_i$ was present at the last time $t_j$ that $a_j$ was inserted and such that $\alpha_i = \alpha_j$. The set of random variables $\{\alpha_i \mid i \in R_j\} \cup \{\alpha_j\}$ acts as a certificate that $a_j$ is exposed. Thus the set
$$\bigcup_{j \in J} \{\alpha_i \mid i \in R_j\} \cup \{\alpha_j\}$$
acts as a certificate that $X \geq s$. This certificate consists of $s (h + \tau + 1)$ random variables, hence $X$ is $(h + \tau + 1)$-certifiable.
\end{proof}

\begin{proof}[Proof of \Cref{lem:iceberg}]
Set $Q = m\exp{(-\tau^2/(3h))}$. By Claim \ref{thm:iceberg-expectation}, we know that $\expect{X} \le Q$.
By Claim \ref{thm:iceberg-talagrand}, we can apply Talagrand's inequality (Theorem \ref{thm:talagrand}) to $X$ with $c = r = h + \tau + 1 = O(h)$.
Applying Talagrand's inequality with $t= \Theta(c \sqrt{r} Q)$, and using $Q$ as an upper bound on $\E[X]$, we can deduce that
$$X = O(c \sqrt{r} Q)$$
with probability at least
$$1 - \exp(-\Omega(Q)).$$
It follows that $X \le \poly(h) \cdot O(ne^{- \tau^2/ (3h)})$ with probability $1 - \exp(-\Omega(me^{-\tau^2/(3h)}))$.
\end{proof}

\subsection{$\iceberg{d}$}\label{ssec:icebergd}

We now present the $\iceberg{d}$ balls-in-bins rule. Let $ n $ be the number of bins, let $ h n$ be the maximum number of balls allowed to be present at any given moment, and let $ d > 1$ be a parameter. Partitioning the bins into $ d $ equal-size sets $ S_1,\ldots, S_d $. Let $ g $ be a hash function mapping balls uniformly at random to bins, and let $ h_1,\ldots, h_d $ be hash functions such that each $ h_i $ maps balls uniformly at random to a random bin in $ S_i $.

We shall have three types of balls: level-one balls, level-two balls, and level-three balls. Each level-one ball $ x $ will reside in bin $ g (x) $, each level-two ball $ x $ will reside in one of bins $ h_1 (x),\ldots, h_d (x) $, and each level-three ball $ x $ will reside in bin 1 (but, at any given moment, the number of level-three balls will be zero w.h.p.). 

Set $\tau = c\sqrt{h \log (hd)}$ for some sufficiently large positive constant $ c $. We shall also keep track of a variable $ q $ counting the number of level-two balls present at any given moment. 

The procedure for inserting a ball $ x $ is as follows. If bin $ g (x) $ contains $ h +\tau $ level-one balls or fewer, then we place $ x $ in bin $ g (x) $, and we classify $ x $ as a level-one ball. Otherwise, we check whether $q < n / d$. If $q < n / d$, then we examine bins $ h_1 (x),\ldots, h_d(x)$, and we place $ x $ as a level-two ball into whichever bin $ h_i (x) $ contains the fewest level-two balls  (breaking ties towards the smallest $ i $). Finally, if $q \ge n / d$, then we place $ x $ as a level-three ball into bin 1.

\begin{thm}
Suppose $1 \le  h \le n ^ { o (1) } $ and $1 <  d \le n ^ { o (1) } $. Suppose balls are inserted/deleted/reinserted into $n$ bins over time (by an oblivious adversary) according to $\iceberg{d}$ rule, with no more than $ hn $ balls present at a time. Then, w.h.p.\ in $n$, at any given moment, the number of balls in the fullest bin is $h + \frac{\log\log n}{d\log\phi_d} + O(\sqrt{h \log (hd)})$.
\label{thm:balls}
\end{thm}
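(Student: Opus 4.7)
The plan is to decompose the maximum load of any bin into contributions from level-one, level-two, and level-three balls, and then bound each piece separately. Level-one balls contribute at most $h+\tau+1$ per bin by construction (since the insertion rule never adds a level-one ball once a bin already has more than $h+\tau$ level-one balls). So the real work is to show that (a) the number of level-two balls is small enough that the level-three rule is never triggered, and (b) the level-two balls, placed by $\leftd{d}$ into the $n$ bins (partitioned into $d$ groups), contribute at most $\log\log n/(d\log\phi_d)+O(1)$ per bin.

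For part (a), I would set up a coupling with a ``hypothetical'' pure $\single$ process that uses only the hash $g$ and receives the identical sequence of insertions/deletions. The key observation is that the set of currently-present balls with $g(x)=j$ is the same in both processes, so the load of bin $j$ in the hypothetical is the sum of level-one, level-two, and level-three balls in $\iceberg{d}$ whose $g$-value equals $j$. Consequently, at the moment a ball $x$ is placed as level-two or level-three in $\iceberg{d}$, bin $g(x)$ had at least $h+\tau+1$ level-one balls there, hence at least that many balls in the hypothetical, so $x$ becomes $\tau$-exposed in the hypothetical upon insertion. Since the $\tau$-exposed label persists until deletion, every currently level-two or level-three ball in $\iceberg{d}$ is currently $\tau$-exposed in the hypothetical. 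Invoking Lemma~\ref{lem:iceberg} with $\tau=c\sqrt{h\log(hd)}$ gives a $\tau$-exposed count of at most $\operatorname{poly}(h)\cdot n(hd)^{-c^2/3}$, and choosing $c$ a sufficiently large constant makes this strictly less than $n/d$ w.h.p.\ in $n$ (using $h,d\le n^{o(1)}$). This simultaneously shows $q<n/d$ always, so no ball is ever placed at level three, and that at most $n/d$ level-two balls are present at any moment.

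For part (b), I would condition on the randomness of $g$. Conditionally, the sequence of insertions, deletions and reinsertions of balls that end up at level two is determined by the adversary together with $g$, and is therefore independent of the hashes $h_1,\dots,h_d$. So within the level-two sub-process, we have a fully dynamic $\leftd{d}$ placement of at most $n/d$ balls into $n$ bins by an oblivious adversary (in the conditional distribution). Applying Woelfel's fully dynamic bound~\eqref{eq:semiV2} with bin count $n$ and load parameter $h'\le 1/d$ yields a maximum level-two load of $O(1+h'd)+\log\log n/(d\log\phi_d)=O(1)+\log\log n/(d\log\phi_d)$ w.h.p.\ in $n$. Summing with the deterministic level-one bound gives $(h+\tau+1)+O(1)+\log\log n/(d\log\phi_d)=h+\log\log n/(d\log\phi_d)+O(\sqrt{h\log(hd)})$, and a union bound over the polynomially many adversary steps preserves the w.h.p.\ guarantee.

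The main obstacle is the clean handling of the level-two sub-process: because $\iceberg{d}$ decides whether a ball is level-two based on the current state, the sub-process is not a priori driven by an oblivious adversary. The resolution is the conditioning argument sketched above, which exploits the independence of $g$ from $h_1,\dots,h_d$ to reduce to an oblivious dynamic $\leftd{d}$ instance. A secondary subtlety is checking that Lemma~\ref{lem:iceberg}'s failure probability $\exp(-\Omega(me^{-\tau^2/(3h)}))$ is indeed $1/\poly(n)$ in our regime, which follows because $me^{-\tau^2/(3h)}=nh(hd)^{-c^2/3}\ge n^{1-o(1)}$ under the hypotheses $h,d\le n^{o(1)}$.
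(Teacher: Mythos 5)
Your proposal is correct and follows essentially the same route as the paper: bound the level-one contribution deterministically by $h+\tau$, invoke Lemma~\ref{lem:iceberg} to show level-three balls never appear, and apply Woelfel's dynamic $\leftd{d}$ bound \eqref{eq:semiV2} with $h'=1/d$ to the level-two sub-process. Two steps that you spell out explicitly are left implicit in the paper and are worth flagging as genuine improvements in rigor: the coupling to a pure $\single$ process (establishing that every currently level-two/three ball is currently $\tau$-exposed in the hypothetical, which is what licenses the application of Lemma~\ref{lem:iceberg} even though the level-one placements are a truncated rather than a true $\single$ process), and the conditioning on $g$ (which makes the level-two insert/delete sequence a deterministic function of $g$ and the adversary's tape, hence oblivious with respect to $h_1,\ldots,h_d$, so that \eqref{eq:semiV2} applies).
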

\begin{proof}
Each bin deterministically contains at most $h + \tau = h + O(\sqrt{h \log (hd)})$ level-one balls. Thus, it suffices to bound the number of level-two and level-three balls in each bin by $ \frac{\log\log n}{d\log\phi_d} + O(1)$. 

The number $ q $ of level-two balls in the entire system is deterministically at most $n / d$ at any given moment. In other words, the level-two balls are placed according to the $\leftd{d}$ rule with $h'n$ balls, where $h' = 1/d$. Thus we can apply \eqref{eq:semiV2} to deduce that, w.h.p., the maximum number of such balls per bin is
$$O(1 + h'd) + \frac{\log\log n}{d\log\phi_d} = \frac{\log\log n}{d\log\phi_d} + O(1),$$
Note that, in this application of \eqref{eq:semiV2}, we are using Woefel's analysis \cite{DBLP:conf/soda/Woelfel06} of $\leftd{d}$ in a somewhat unusual parameter regime; that is, the analysis is intended primarily to be used in the regime $h' \ge 1$ (and Woefel's result was only explicitly stated for $h' = \Theta(1)$), but we are taking advantage of the fact that the analysis also holds for $h' = o(1)$ without modification. 

We complete the proof by showing that, w.h.p., The number of level-three bins is zero. By Lemma \ref{lem:iceberg}, the number $q$ of level-two balls satisfies $q < n / h$ (at any given moment) with probability at least $1 - \exp(-n / \poly(hd))$, which by the assumption $h, d, \le n^{o(1)}$ is at least $1 - 1 /\poly(n)$. It follows that each individual ball insertion has probability at most $1/ \poly(n)$ of being level-three. Taking a union bound over all of the balls in the system, the probability that any of them are level-three is $1 / \poly(n)$, as desired.
\end{proof}

\subsection{Assigning Balls to Capacity-1 Bins with Low Average Probe Complexity} \label{ssec:probe}

Our final result of the section considers a dynamic balls-and-bins game in which there are $ n $ bins each with capacity $ 1 $, and at most $ (1 -\delta) n $ balls are present at a time. Each ball $ x $ has a predetermined (infinite) sequence $ h_1 (x), h_2 (x), \ldots$ of bins where it can reside, and we wish to minimize the \defn{probe complexity} of each ball $x$, which is defined to be the smallest $i$ such that ball $x$ is in bin $h_i(x)$. Since we are in the dynamic setting, the same ball may be inserted, deleted, and reinserted many times.

First note that, in the insertion-only setting, it is easy to achieve probe average complexity $O(\delta^{-1})$ by simply using uniform probing, which sets each $ h_i (x) $ to be random, and places each ball $ x $ into the first available slot in the seqeunce $ h_1 (x), h_2 (x), \ldots$. In the dynamic setting, however, there is not yet any known bin-assignment scheme that achieves average probe complexity $\poly(\delta^{-1})$ (for example, uniform probing has only successfully been analyzed in the random-deletions setting \cite{Larson4}, and the analysis of linear probing without moving elements around remains an open problem \cite{sandersstability}). 

We now construct a bin-assignment scheme that achieves average probe complexity $\poly(\delta^{-1})$.
\begin{thm}
Suppose $\delta = 1 / n^{o(1)}$. There exists a bin-assignment scheme that supports arbitrary ball insertions/deletions/reinsertions, and guarantees an expected probe complexity of $O(\poly(\delta^{-1})$ for each ball in the system.
\label{thm:probecomplexity}
\end{thm}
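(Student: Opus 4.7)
The plan is to directly convert the variable-size tiny pointer construction of Theorem~\ref{thm:upper-bound-variable} into a bin-assignment scheme. Instantiate a \reftable with $n$ slots and load factor $1-\delta$; by that theorem, each allocation returns a tiny pointer of size $s = O(P + \log\delta^{-1})$, where the random variable $P$ satisfies $\Pr[P \ge i] \le 2^{-2^{\Omega(i)}}$, and the allocation succeeds w.h.p.\ in $n$. Crucially, $\Dereference(x, p)$ depends only on $x$, $p$, the fixed random bits of the table, and $n$, not on the current table state, so I can fix each ball's probe sequence in advance: enumerate the bit strings $p_1, p_2, p_3, \ldots$ in order of non-decreasing length (breaking ties lexicographically), and set $h_i(x) = \Dereference(x, p_i)$.

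To run the game, I use the table's $\Allocate$/$\Free$ operations directly. When $\Allocate(x)$ returns a tiny pointer $p$ of bit-length $|p|$, the ball is placed in bin $\Dereference(x, p)$. Since every bit string of length strictly less than $|p|$ appears earlier in the enumeration, $p$ itself sits at position at most $2^{|p|+1}$ in the probe sequence, giving $x$ probe complexity at most $2^{|p|+1} = 2^{O(s)}$.

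It remains to bound $\E[2^{O(s)}]$. Writing $s \le c(P + \log\delta^{-1})$ for a suitable constant $c$, we have $2^{O(s)} \le \delta^{-c} \cdot 2^{cP}$, so
\[
\E[2^{O(s)}] \;\le\; \delta^{-c} \sum_{i \ge 0} 2^{ci} \Pr[P = i] \;\le\; \delta^{-c} \sum_{i \ge 0} 2^{ci} \cdot 2^{-2^{\Omega(i)}}.
\]
The doubly-exponential factor dominates $2^{ci}$ once $i$ exceeds a constant, so the tail sum converges to $O(1)$, yielding expected probe complexity $O(\delta^{-c}) = \poly(\delta^{-1})$.

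The one wrinkle is the $1/\poly(n)$ failure probability of $\Allocate$. Because the hypothesis $\delta = 1/n^{o(1)}$ gives $\poly(\delta^{-1}) = n^{o(1)}$, while the deterministic worst-case probe complexity is at most $n$, the rare failure event contributes at most $n \cdot 1/\poly(n) = o(1)$ to the expectation, which is absorbed into the $\poly(\delta^{-1})$ bound. I do not expect any serious obstacle; the main observation driving the argument is that the doubly-exponential tail on $P$ in Theorem~\ref{thm:upper-bound-variable} is exactly strong enough to keep $\E[2^{cP}]$ bounded for every constant $c$, which is the precise sense in which small tiny pointers translate into small expected probe complexity.
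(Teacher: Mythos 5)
Your proof is correct and follows essentially the same approach as the paper: instantiate the variable-size dereference table of Theorem~\ref{thm:upper-bound-variable} with $n$ slots and load factor $1-\delta$, take $h_i(x) = \Dereference(x,i)$ as the probe sequence, and convert the doubly-exponential tail on the tiny-pointer size into a bound on $\E[2^{O(s)}]$. The paper treats tiny pointers directly as natural numbers where you use a length-then-lexicographic bit-string enumeration, and you add an explicit (and sensible) handling of the $1/\poly(n)$ allocation-failure event that the paper leaves implicit, but the underlying argument is the same.
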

\begin{proof}
Consider a variable-size-tiny-pointer dereference table with $ n $ slots and load factor $1 -\delta$. For each ball $ x $ and each $i \in \mathbb{N}$, define $h_i(x) = \textsc{Dereference}(x, i)$. To assign a ball $ x $ to a bin, we call the function $i = \textsc{Allocate}(x)$, and place $x$ into bin $h_i(x) =  \textsc{Dereference}(x, i)$. To delete a ball $ x $, we call $\textsc{Free}(x, i)$ in order to deallocate the appropriate slot in the dereference table.

Let $c > 0$ be a sufficiently small positive constant. 
By Theorem \ref{thm:upper-bound-variable}, each ball $x$ gets assigned to a bin $h_i(x)$ where $i$ (which is the tiny pointer returned by $\textsc{Allocate}(x)$) is 
$$O(\log \delta^{-1} + P)$$ 
bits for some random variable $P$ satisfying $\Pr[P \ge j] \le O\left(2^{-2^{cj}}\right)$.
It follows that $\Pr[i \ge \poly(\delta^{-1})k] \le O(2^{-2^{c \log k}}) = O(2^{k^c})$, and hence that the expected probe complexity of each ball $x$ is $\poly(\delta^{-1})$. 
\end{proof}


\section*{Acknowledgments}

This research was supported in part by NSF grants CSR-1938180, CCF-2106999, CCF-2118620, CCF-2118832, CCF-2106827, CCF-1725543, CSR-1763680, CCF-1716252 and CNS-1938709, as well as an NSF GRFP fellowship and a Fannie and John Hertz Fellowship.

This research was also partially sponsored by the United States Air Force Research Laboratory and the United States Air Force Artificial Intelligence Accelerator and was accomplished under Cooperative Agreement Number FA8750-19-2-1000. The views and conclusions contained in this document are those of the authors and should not be interpreted as representing the official policies, either expressed or implied, of the United States Air Force or the U.S. Government. The U.S. Government is authorized to reproduce and distribute reprints for Government purposes notwithstanding any copyright notation herein.

\bibliographystyle{plain}
\bibliography{bibliography}

\end{document}